\def \FOL{{\sf FOL}}
\def \FODL{{\sf FODL}}
\def \LTL{{\sf LTL}}
\def \TL{{\sf TL}}
\def \PDL{{\sf PDL}}
\def \DLTL{{\sf DLTL}}
\def \ETBR{{\sf ETBR}}
\def \PRA{{\sf PRA}}
\def \CR{{\sf CR}}
\def \RA{{\sf RA}}
\def \sPFA{{\star{\sf PFA}}}
\def \PFA{{\sf PFA}}
\def \fPFA{{\sf fPFA}}
\def \CFA{{\sf CFA}}
\def \FA{{\sf FA}}
\def \sPFAU{{\star{\sf PFAU}}}
\def \PFAU{{\sf PFAU}}
\def \fPFAU{{\sf fPFAU}}
\def \CFAU{{\sf CFAU}}
\def \FAU{{\sf FAU}}
\def \halfthinspace{\relax\ifmmode\mskip.5\thinmuskip\relax\else\kern.8888em\fi}\let \hts=\halfthinspace
\def \({\left (}
\def \){\right )}
\def \<{\left\langle}
\def \>{\right\rangle}
\def \ajoin{{\hts + \hts}}
\def \aunit{1}
\def \ameet{{\hts \cdot \hts}}
\def \azero{0}
\def \acompl#1{\overline{#1}}
\def \acomple{\mbox{$^{\mbox{--}}$}}
\def \acompo{{\hts ; \hts}}
\def \aid{1^\prime}
\def \aconv#1{\setbox13\hbox{$#1$}\ifdim\wd13<12pt\breve{#1}\else{\(#1\)}\breve{\ }\fi}
\def \aconve{\breve{\ }}
\def \afork{{\hts \nabla \hts}}
\def \aidu{1^\prime_{\sf U}}
\def \uaunitu{{_{\textup{\textsf{U}}}\aunit_{\textup{\textsf{U}}}}}
\def \pjoin{\cup}
\def \punit#1{#1}
\def \pmeet{\cap}
\def \pzero{\emptyset}
\def \pcompl#1{\overline{#1}}
\def \pcomple{\mbox{$^{\mbox{--}}$}}
\def \pcompo{{\hts \circ \hts}}
\def \pid{Id}
\def \pconv#1{\setbox13\hbox{$#1$}\ifdim\wd13<10pt\stackrel{\smile}{#1}\else{\(#1\)}^{\smile}\fi}
\def \pconve{^{\smile}}
\def \pfork{{\hts \underline{\nabla} \hts}}
\def \setof#1#2{\setbox1\hbox{$#1$}
                \setbox2\hbox{$#2$}
                \ifdim \ht1 > \ht2
                   \left \{ \left . \, #1 \, \right \vert \, #2 \, \right \}
                \else
                   \left \{ \, #1 \, \left \vert \, #2 \, \right . \right \} 
                \fi}
\def \set#1{\left \{\, #1 \,\right \}}
\def \pair#1#2{\left \langle #1, #2 \right \rangle}
\def \nat{\mathbb{N}}
\newcommand\thmt[4]{\vspace{1em}\noindent {{\bf #1 \ref{#2}} (#3).} \emph{#4}}
\title{On the construction of explosive relation algebras}
\author{Carlos~G.~Lopez~Pombo\inst{1} \and Marcelo~F.~Frias\inst{2} \and Thomas~S.E.~Maibaum\inst{3}}
 \institute{
   Universidad de Buenos Aires, Facultad de Ciencias Exactas y Naturales, Departamento de Computaci\'on and Instituto de Investigaci\'on en Ciencias de la Computaci\'on, CONICET--Universidad de Buenos Aires (ICC).
   \email{clpombo@dc.uba.ar}
   \and
   Department of Software Engineering, Buenos Aires Institute of Technology (ITBA) and Consejo Nacional de Investigaciones Cient\'{\i}ficas y T\'ecnicas (CONICET).
   \email{mfrias@itba.edu.ar}
   \and
   Emeritus Professor, Department of Computing and Software, McMaster University
   \email{tom@maibaum.org}
 }
\begin{document}

\maketitle

\begin{abstract}
Fork algebras are an extension of relation algebras obtained by extending the set of logical symbols with a binary operator called \emph{fork}. This class of algebras was introduced by Haeberer and Veloso in the early 90's aiming at enriching relation algebra, an already successful language for program specification, with the capability of expressing some form of parallel computation.

The further study of this class of algebras led to many meaningful results linked to interesting properties of relation algebras such as representability and finite axiomatizability, among others. Also in the 90s, Veloso introduced a subclass of relation algebras that are expansible to fork algebras, admitting a large number of non-isomorphic expansions, referred to as \emph{explosive relation algebras}.

In this work we discuss some general techniques for constructing algebras of this type.
\end{abstract}

%!TEX root = ./main.tex
\section{introduction}
\label{introduction}

A \emph{relation algebra} is an algebraic structure formed by three relational constants understood as the empty, universal and identity relations, typically represented by the symbols ``$\azero$'', ``$\aunit$'' and ``$\aid$'', respectively; two unary operators playing the role of complement of a relation (with respect to the universal relation) and transposition (or converse) of a relation, typically represented by the symbols ``$\acomple$'' and ``$\aconve$'', respectively; and binary operators for product, co-product and relative product (also commonly referred to as composition) of relations, typically represented by symbols ``$\ameet$'', ``$\ajoin$'' and ``$\acompo$'', respectively.

In \cite{tarski:jsl-6_3}, Tarski noted that the calculus of (binary) relations ``[...] has had a strange and rather capricious line of historical development.'', but leaving historical discussions aside, it is fair to consider that it was proposed, still not properly presented and formalised, by him in the previously cited work. 
There, Tarski %calls our attention to the fact that, since the proposal of the ``Algebra und Logik der Relative'' by Schr\"{o}der in \cite{schroder1895}, there was almost no research being carried out in the field until Russell and Whitehead \cite{whitehead10-13} included the algebras of binary relations in the whole of logic. They did it by putting it in the center of their logical system and going further in some new concepts connected with the notion of relation. It was in \cite{tarski:jsl-6_3} where Tarski 
committed himself to the development of the \emph{calculus of relations} ($\CR$). In the first place, Tarski introduces the \emph{elementary theory of binary relations} ($\ETBR$), as a logical formalisation of the algebras of binary relations in a kind of definitional extension of first-order logic (see \cite[Axs.~1.--12.]{tarski:jsl-6_3}); then, the calculus of relations can be obtained from the elementary theory of binary relations by restricting the language to sentences without individual variables (see \cite[Thms.~\texttt{1}--\texttt{15}]{tarski:jsl-6_3}\footnote{Theorems~\texttt{I}--\texttt{VII}, originally due to Huntington \cite[\S1]{huntington:tams-5_3}, provide a characterisation of the meaning of the absolute constants (i.e. the Boolean fragment of the logic), and Thms.~\texttt{8}--\texttt{15} express the fundamental properties of the relative ones.}). While the modern equational formulation of the calculus of relations is not explicit in Tarski's work, it is hinted at after the proof that \cite[Thm.~\texttt{32}]{tarski:jsl-6_3} (originally proved by Shr\"{o}der in \cite[1), pp.~150--153]{schroder1895}) follows from Axioms~\texttt{1} to~\texttt{15}, by proving a general metalogical result stating that any sentence of the calculus of relations can be transformed into an equivalent sentence of the form $R = S$.

At the end Tarski states five questions related to the calculus of relations, its class of models and the \emph{algebras of binary relations}\footnote{Tarski refers to these structures as ``a class of binary relations which contains $1$, $0$, $1'$, $0'$ and is closed under all the operations considered in the calculus [of relations]'', without providing a proper name for such an intended class of models.}, with three of them of particular interest to this work:
\begin{itemize}
\item Is every model of the calculus of relations isomorphic to an algebra of binary relations?
\item Is it true that every formula that is valid in every algebra of binary relations is provable in the calculus of relations? 
\item Is it true that every formula of the elementary theory of binary relations can be transformed into an equivalent formula of the calculus of relations?
\end{itemize}

It was Lyndon in \cite{lyndon:ams2-51_2} who gave a negative answer to the first two questions by exhibiting a finite, non-simple and non-trivial algebra of relations that is not representable as an algebra of binary relations. After that, it was Monk in \cite{monk:mmj-11} who proved that the class of the algebras of binary relations cannot be finitely axiomatised. The third question was answered negatively by Tarski (hinted at in op. cit., pp.~88--89) by making more precise the existence of \emph{uncondensable}\footnote{To \emph{condense} a formula, as used by L\"{o}wenheim following Schr\"{o}der's terminology \cite[pp.~550]{schroder1895}, is to transform a formula of ${\sf ETBR}$ into another one in which no quantifiers or individual variables appear.} formulae proved by Korselt (and published in \cite[Thm.~1]{lowenheim:ma-76}). Tarski's detailed proof of the equipolence of the calculus of relations and the three variable fragment of the dyadic first-order predicate logic appeared for the first time in a book manuscript \cite{tarski:unp43-45}, and later was published in \cite[\S3.9]{tarski87}. By mid-50's Tarski had already adopted a completely equational presentation for relation algebras (see \cite[pp.~60]{tarski:im-17} and \cite[\S3]{tarski:im-18}).

The \emph{fork algebras} ($\FA$) were introduced by Armando Haeberer and Paulo~A.~S. Veloso in \cite{haeberer:ifiptc291} as extensions of relation algebras, obtained by adding a new operator called \emph{fork} (typically represented as ``$\afork$''). They arose in the search for a formalism suitable for software specification and verification. In \cite[Chap.~3, pp.~20]{frias02} Frias gave a detailed discussion of the evolution of fork algebras, focussing the reader's attention to the concepts behind such specific direction. The interpretation of $\afork$ is defined by the following first-order formula: given relations $R$ and $S$,
$$
R \afork S = \setof{\pair{x}{y}}{(\exists y_1, y_2 \in U)(\pair{x}{y_1} \in R \land \pair{x}{y_2} \in S \land y = y_1 \star y_2}
$$
\noindent where $\star: U^2 \to U$ is an injective function acting as an encoding of pairs of elements $U$, the set over which the relations are defined.\\

The class of fork algebras have some particularly attractive features:
\begin{itemize}
\item every fork algebra is isomorphic to an algebra whose domain is a set of binary relations (Frias et al. in \cite{frias:fi-32} and, independently, Gyuris in \cite{gyuris:tcs-188_1_2}),
\item it has a finite equational calculus (Frias et al. in \cite{frias:jigpl-5_3}),
\item it has expressive power capable of providing an interpretation language for many logics. Given a logic $\mathcal{L}$, an interpretation is a relational algebraization of $\mathcal{L}$. This is done by resorting to a semantics-preserving mapping $T_{\mathcal{L}} : \mathit{Formulas}_{\mathcal{L}} \to \mathit{RelDes(X)}$ for some set of relational variables $X$, translating $\mathcal{L}$-formulas to relational terms. 
%Let $\Gamma \cup \{\alpha\} \subseteq \mathit{Formulas}_{\mathcal{L}}$, then the property of being semantic-preserving is characterized by the following condition:
%
%$\Gamma \models_{\mathcal{L}} \alpha \ \iff \ \setof{T_{\mathcal{L}}(\gamma) = \aunit}{\gamma \in \Gamma} \vdash_{\CFAU}T_{\mathcal{L}}(\alpha) = \aunit$
Some known interpretability results are: first-order predicate logic ($\FOL$) in fork algebra \cite{frias02}, $\PDL$ in fork algebra \cite{frias:jancl-8}, first-order dynamic logic ($\FODL$) in fork algebra \cite{frias:relmics01+}, $\LTL$, $\TL$ \cite{frias:relmics03} and their first-order versions in fork algebra \cite{frias:jlap-66_2}, and propositional dynamic linear temporal logic ($\DLTL$) in fork algebra \cite{frias:relmics05}, among others.
\end{itemize}

The existence of a representability theorem and a finetely axiomatizable complete calculus for the class of fork algebras motivates the study of the subclass of relation algebras obtained by taking the relational reduct of the fork algebras, resulting also in a subclass of the proper relation algebras.\\

In this paper we present some techniques for constructing relation algebras admitting a large amount of non-isomorphic expansions to a fork algebra. This class of algebras was introduced by Veloso in \cite{veloso:05-96,veloso:ES-418-96} and because they possess this property, they are called explosive. The definitions and results in this work are strongly inspired by the reports mentioned above and joint technical discussion with Paulo A.S. Veloso.

%!TEX root = ./main.tex
\section{Preliminaries}
\label{preliminaries}
In this section we fix notation and present definitions and results used in the rest of the paper. In general, we adopt the algebraic notation used in \cite{burris81}, but resorting to the symbols proposed by Tarski in \cite{tarski:jsl-6_3} and used in \cite{jonsson:ajm-73,jonsson:ajm-74}. In general, axioms, deduction rules and proofs will follow the notation used in \cite{enderton72}.\\

As we mentioned before, Tarski's development of relation algebras started by introducing the elementary theory of binary relations as a definitional extension of first-order logic with the relational operators proposed by Schr\"oder in \cite{schroder1895} and then moved on to the calculus of relations by restricting the language to formulae stating properties of relational terms exclusively. The following definitions introduce the \emph{calculus of relations}.

\begin{definition}[Formulae of the calculus of relations]
Let $\mathcal{R}$ be a set of relation variables, then the set of relation designations is the smallest set $\mathit{RelDes}(\mathcal{R})$ such that:
\begin{itemize}
\item $\mathcal{R} \cup \set{\aunit, \azero, \aid} \subseteq
  \mathit{RelDes}(\mathcal{R})$, 
\item If $r, s \in \mathit{RelDes}(\mathcal{R})$, then $\set{r \ajoin s, r
  \ameet s, \acompl{r}, r \acompo s, \aconv{r}} \subseteq
  \mathit{RelDes}(\mathcal{R})$.
\end{itemize}
Then the set of formulae of $\CR$ is the smallest set $\mathit{CRForm}(\mathcal{R})$ such that:
\begin{itemize}
\item If $r, s \in \mathit{RelDes}(\mathcal{R})$, then $r = s \in \mathit{CRForm}(\mathcal{R})$, 
\item If $f, g \in \mathit{CRForm}(\mathcal{R})$, then $\set{\neg f, f \lor g} \subseteq \mathit{CRForm}(\mathcal{R})$.
\end{itemize}
\end{definition}

The remaining propositional connectives can be defined as usual in terms of negation ($\neg$) and disjunction ($\lor$).

\begin{definition}[The calculus of relations, \cite{tarski:jsl-6_3}, pp.~76--77]
\label{def_calculus-of-relations}
Let $\mathcal{R}$ be a set of relation variables, then $\CR$ is defined for the formulae in $\mathit{CRForm}(\mathcal{R})$ by\footnote{Tarski presented $\CR$ incorporating axioms in order to characterize the relative addition ($\mbox{\c+}$) and the diversity ($0'$) operators. Tarski's axioms for these two operators are:
$$
\begin{array}{l}
r \; \mbox{\c+} \; s = \acompl{\acompl{r} \acompo \acompl{s}}\\
0' = \acompl{\aid}
\end{array}
$$}:
\begin{itemize}
\item the axioms for the Boolean operators, and
\item the following axioms for the relational operators:
$$
\begin{array}{l}
(r = s \land r = t) \Longrightarrow s = t\ ,\\
r = s \Longrightarrow (r \ajoin t = s \ajoin t \land r \ameet t = s
   \ameet t)\ ,\\
r \ajoin s = s \ajoin r \land r \ameet s = s \ameet r\ ,\\
r \ajoin (s \ameet t) = (r \ajoin s) \ameet (r \ajoin t) \land r
   \ameet (s \ajoin t) = (r \ameet s) \ajoin (r \ameet t)\ ,\\
r + \azero = r \land r \ameet \aunit = r\ ,\\
r \ajoin \acompl{r} = \aunit \land r \ameet \acompl{r} = \azero\ ,\\
\acompl{\aunit} = \azero\ ,\\
\aconv{\aconv{r}} = r\ ,\\
\aconv{r \acompo s} = \aconv{s} \acompo \aconv{r}\ ,\\
r \acompo (s \acompo t) = (r \acompo s) \acompo t\ ,\\
r \acompo \aid = r\ ,\\
r \acompo \aunit = \aunit \lor \aunit \acompo \acompl{r} = \aunit\ ,\\
(r \acompo s) \ameet \aconv{t} = 0 \Longrightarrow (s \acompo t)
   \ameet \aconv{r} = 0\ .\\
\end{array}
$$
\end{itemize}
\end{definition}

While Tarski did not commit to any set of inference rules for structuring deduction, one can assume any appropriate set for the boolean operators (``$\neg$'' and ``$\lor$''), and the equality (``$=$'').% like the ones below.
%
%\begin{itemize}
%\item Inference rules for the propositional connectives,
%\item Modus ponens:
%\begin{center}
%\AXC{$\alpha \Longrightarrow \beta$}
%\AXC{$\alpha$}
%\BinaryInfC{$\beta$}
%\DP
%\end{center}
%\item Inference rules for $=$:
%\begin{center}
%\begin{tabular}{p{0.3\textwidth} p{0.3\textwidth} p{0.3\textwidth}}
%\begin{center}
%\AXC{}
%\UnaryInfC{$r = r$}
%\DP
%\end{center}
%&
%\begin{center}
%\AXC{$s = r$}
%\UnaryInfC{$r = s$}
%\DP
%\end{center}
%&
%\begin{center}
%\AXC{$s = t$}
%\AXC{$r = s$}
%\BinaryInfC{$r = t$}
%\DP
%\end{center}
%\end{tabular}
%\end{center}
%\begin{center}
%\AXC{$r_1 = s_1$}
%\AXC{$\cdots$}
%\AXC{$r_k = s_k$}
%\TrinaryInfC{$E(r_1, \ldots, r_k) = E(s_1, \ldots, s_k)$}
%\DP
%\end{center}
%\end{itemize}

As we mentioned in the introduction, in Tarski's presentation of the representation problem \cite[pp.~88]{tarski:jsl-6_3} there is only an implicit definition of the intended models of the calculus of relations. For the purpose of the present work, we adopt the formal definition given by J{\'{o}}nsson and Tarski in \cite{jonsson:ajm-74}.

\begin{definition}[Proper relation algebras, \cite{jonsson:ajm-74}, Def.~4.23]
\label{def_proper-relation-algebra}
A \emph{proper relation algebra} is an algebraic structure $\< A, \pjoin, \pmeet, \pcomple, \pzero, \punit{E}, \pcompo, \pconve, \pid \>$ in which $A$ is a set of binary relations on a set $U$, $\pjoin$, $\pmeet$ and $\pcompo$ are binary operations, $\pcomple$ and $\pconve$ are unary operations and $\pzero$, $\punit{E}$ and $\pid$ are distinguished elements of $A$ satisfying:
\begin{itemize}
\item $A$ is closed under $\pjoin$ (i.e. set union),
\item $A$ is closed under $\pmeet$ (i.e. set intersection),
\item $A$ is closed under $\pcomple$ (i.e. set complement with respect
  to $\punit{E}$),
\item $\pzero \in A$ is the empty relation on the set $U$,
\item $\punit{E} \in A$ and $\bigcup_{r \in A} r \subseteq \punit{E}$,
\item $A$ is closed under $\pcompo$, defined as follows
$$x \pcompo y  = \setof{\pair{a}{b} \in U \times U}{ (\exists c)
  (\pair{a}{c} \in x \land \pair{c}{b} \in y)}\ ,$$
\item $A$ is closed under $\pconve$, defined as follows
$$\pconv{x} = \setof{\pair{a}{b} \in U \times U}{\pair{b}{a} \in x}\
  ,$$ 
\item $\pid \in A$ is the identity relation on the set $U$.
\end{itemize} 
The class of proper fork algebras will be denoted as $\PRA$.
\end{definition}

In \cite{jonsson:ajm-74}, J{\'{o}}nsson and Tarski proved that the axiom $r \acompo \aunit = \aunit \lor \aunit \acompo \acompl{r} = \aunit$ forces the models to be simple, a property that is not necessarily satisfied by the proper relation algebras, so that is why their equational presentation of $\CR$ does not include it.

\begin{definition}[Equational formulae of the relational calculus]\ \\
\label{eqcrform}
Let $\mathcal{R}$ be a set of relation variables, then the set of formulas of $\CR$ is the set $\setof{r = s}{r, s \in \mathit{CRForm}(\mathcal{R})}$.
\end{definition}

\begin{definition}[The equational calculus of relations, \cite{jonsson:ajm-74}, Def.~4.1]
\label{def_new-calculus-of-relations}
Let $\mathcal{R}$ be a set of relation variables, then $\CR$ is defined for the formulae in $\mathit{CRForm}(\mathcal{R})$ by:
\begin{itemize}
\item the axioms for the Boolean operators, and
\item the following axioms for the relational operators\footnote{The last axiom, known as the Dedekind formula, is equivalent to $r \acompo s \ameet t = 0 \ \mbox{iff} \ t \acompo \aconv{s} \ameet r = 0 \ \mbox{iff} \ \aconv{r} \acompo t \ameet s = 0$ known as \emph{cycle rule}.}: for all $r, s, t \in A$
$$
\begin{array}{l}
r \acompo (s \acompo t) = (r \acompo s) \acompo t \\ 
(r \ajoin s) \acompo t = (r \acompo t) \ajoin (s \acompo t) \\
\aconv{r \ajoin s} = \aconv{r} \ajoin \aconv{s} \\ 
\aconv{\aconv{r}} = r \\
r \acompo \aid = r \\
\aconv{r \acompo s} = \aconv{s} \acompo \aconv{r} \\ 
(r \acompo s) \ameet t \leq (r \ameet (t \acompo \aconv{s})) \acompo (s \ameet (\aconv{r} \acompo t))
\end{array}
$$
\end{itemize}
\end{definition}

As in the case of Def.~\ref{def_calculus-of-relations}, one can adopt any appropriate set of inference rules for the equality to structure proofs in this calculus.

\begin{definition}[Relation algebras]
\label{def_relation-algebra}
The class of \emph{relation algebras} ($\RA$ for short) is the class of algebraic structures $\< A, \ajoin, \ameet, \acomple, \azero, \aunit, \acompo, \aconve, \aid\>$ satisfying the axioms in $\CR$.
\end{definition}

Tarski's first question about the relation between the class of models of the calculus of relations, $\RA$, and the class of concrete algebras of binary relations, $\PRA$, is of utmost importance in the context of computer science. Let us formulate it in more formal terms.

\begin{definition}
Given an algebra $\mathcal{A}$ and a class of algebras $\mathsf{K}$, $\mathcal{A}$ is \emph{representable} in $\mathsf{K}$ if there exists $\mathcal{B} \in \mathsf{K}$ such that $\mathcal{A}$ is isomorphic to $\mathcal{B}$. This notion generalises as follows: a class of algebras $\mathsf{K}_1$ is representable in a class of algebras $\mathsf{K}_2$ if every member of $\mathsf{K}_1$ is representable in $\mathsf{K}_2$. 
\end{definition}

Then, Tarski's first question explores whether the class $\RA$ is representable in the class $\PRA$. Lyndon's negative answer is devastating in practice. Consider the classical problem of formal verification in software engineering, stated as follows: given a specification of a software artefact written as a set of formulae $\Gamma \subseteq \mathit{CRForm}(\mathcal{R})$ (see Def.~\ref{eqcrform}) and a desired property of such an artefact, formalised as a formula $\alpha \in \mathit{CRForm}(\mathcal{R})$, does $\Gamma \vdash^\CR \alpha$? Then, we are interested in either constructing a proof for the previous judgement, or finding $\mathcal{A}$ such that $\mathcal{A} \models^\CR \Gamma$ and $\mathcal{A} \not\models^\CR \alpha$. In general, we would like $\mathcal{A}$ to be a concrete model (i.e. $\mathcal{A} \in \PRA$) as it provides a natural interpretation of relations in set theoretical terms. Lyndon's answer  
can be summarised as follows: It might happen that $\Gamma \not\vdash^\CR \alpha$ and for every $\mathcal{A} \in \RA$ such that $\mathcal{A} \models^\CR \Gamma$ and $\mathcal{A} \not\models^\CR \alpha$, there is no $\mathcal{B} \in \PRA$ such that $\mathcal{A} \simeq \mathcal{B}$ (i.e. every counterexample witnessing that $\Gamma \not\vdash^\CR \alpha$ is a non-representable relation algebra and thus, a model of no interest in this context).\\

The following definitions and properties will be of interest in further sections of the paper.

\begin{definition}[Full proper relation algebras]{\ \ \ \ \ \ \ \ \ } % Padding for alignment.
\label{fullPRA}
An algebraic structure $\< A, \pjoin, \pmeet, \pcomple, \pzero, \punit{E}, \pcompo, \pconve, \pid\>$ is said to be a \emph{full proper algebra of relations} over a set $S$ if:
\begin{itemize}
\item $A = \setof{a}{a \subseteq S \times S}$ (equivalently $A = \wp \(S^2\)$), and
\item $\punit{E} = \setof{\<a, b\>}{a, b \in S}$ (equivalently $E = S \times S$ or $E = S^2$).
\end{itemize}
\end{definition}

\begin{definition}[Ideal elements]
Let $\mathcal{A} \in \RA$ with domain $A$, then $x \in A$ is an \emph{ideal element} if and only if $x = \aunit \acompo x \acompo \aunit$.
\end{definition}

A formal definition of \emph{simple algebra} can be found in \cite[Chap.~2, Sec.~8]{burris81}, but for all practical purposes we will use the following theorem proved by Tarski.

\begin{theorem}[\cite{jonsson:ajm-74}, Thm.~4.10]
Let $\mathcal{A} \in \RA$, then $\mathcal{A}$ is \emph{simple} if and only if $\mathcal{A}$ has, at most, two ideal elements (i.e. $\azero$ and $\aunit$).
\end{theorem}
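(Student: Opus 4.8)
The plan is to route through an intermediate, purely ``rectangular'' reformulation of simplicity. For an element $a$ of $\mathcal{A}$ write $\widehat{a}$ for $\aunit \acompo a \acompo \aunit$, and write $b \bigtriangleup c$ for the symmetric difference $(b \ameet \acompl c) \ajoin (\acompl b \ameet c)$. I assume throughout that $\azero \neq \aunit$; the one-element algebra contributes only the degenerate case and is excluded from being simple by the usual convention. I will show the equivalence of three conditions: (A) $\mathcal{A}$ is simple; (B) $\widehat{a} = \aunit$ for every $a \neq \azero$; (C) the only ideal elements of $\mathcal{A}$ are $\azero$ and $\aunit$.

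First, (B) $\Leftrightarrow$ (C) is immediate from two remarks: $\widehat{a}$ is always an ideal element, since $\widehat{\widehat{a}} = \aunit \acompo \aunit \acompo a \acompo \aunit \acompo \aunit = \widehat{a}$ by associativity and $\aunit \acompo \aunit = \aunit$; and $a \leq \widehat{a}$, since $a = \aid \acompo a \acompo \aid \leq \aunit \acompo a \acompo \aunit$. Hence if (C) holds and $a \neq \azero$ then $\widehat{a}$ is a nonzero ideal element, so $\widehat{a} = \aunit$; conversely, if $x$ is an ideal element with $x \neq \azero$ then $x = \widehat{x} = \aunit$, giving (C) from (B).

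For (B) $\Rightarrow$ (A): let $\theta$ be a congruence of $\mathcal{A}$ other than the identity congruence, choose $b \mathrel{\theta} c$ with $b \neq c$, and put $a := b \bigtriangleup c \neq \azero$. Since $\theta$ is compatible with the Boolean operations, $a \mathrel{\theta} \azero$; then, since $\theta$ is compatible with $\acompo$, $\aunit = \widehat{a} \mathrel{\theta} \widehat{\azero} = \azero$. But a congruence relating $\azero$ to $\aunit$ relates every $b = b \ameet \aunit$ to $b \ameet \azero = \azero$, so it is the full congruence; hence the only congruences of $\mathcal{A}$ are the identity and the full one, i.e.\ $\mathcal{A}$ is simple. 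For (A) $\Rightarrow$ (C) I argue by contraposition: suppose $\mathcal{A}$ has an ideal element $p$ with $\azero \neq p \neq \aunit$, and set $\theta_p := \{\,(b,c) : b \bigtriangleup c \leq p\,\}$. Then $\theta_p$ is a congruence: it is an equivalence relation because $p$ lies in the Boolean reduct and $b \bigtriangleup d = (b \bigtriangleup c) \bigtriangleup (c \bigtriangleup d) \leq (b \bigtriangleup c) \ajoin (c \bigtriangleup d)$; compatibility with $\ajoin$, $\ameet$ and complement is routine Boolean algebra; compatibility with $\acompo$ uses $\aunit \acompo p = p = p \acompo \aunit$ together with the monotonicity and additivity of $\acompo$ in each argument; and compatibility with converse uses $\aconv{p} = p$ and the fact that converse commutes with the Boolean operations. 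Since $p \neq \azero$, $\theta_p$ strictly contains the identity congruence, and since $p \neq \aunit$ we have $(\aunit, \azero) \notin \theta_p$, so $\theta_p$ is not the full congruence; hence $\mathcal{A}$ is not simple.

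Two standard facts about relation algebras enter the last step. First, every ideal element $x$ satisfies $\aunit \acompo x = x = x \acompo \aunit$, from $x = \aid \acompo x \leq \aunit \acompo x \leq \aunit \acompo x \acompo \aunit = x$ and its mirror image. Second, every ideal element is symmetric; I would obtain this from the Dedekind law via $r \leq r \acompo \aconv{r} \acompo r \leq \aunit \acompo \aconv{r} \acompo \aunit$, which yields $\widehat{r} = \widehat{\aconv{r}}$ for all $r$, so that for an ideal element $x$ one gets $x = \widehat{x} = \widehat{\aconv{x}} = \aconv{x}$, the last equality because $\aconv{x}$ is again an ideal element. The main obstacle is exactly the verification that $\theta_p$ respects $\acompo$ and converse: this is the only point at which the genuinely relational axioms (in essence the Dedekind/cycle law) are used, and the whole equivalence rests on it. A more structural alternative, and the route taken by J\'onsson and Tarski, is to invoke their correspondence between congruences of a relation algebra and its relation ideals --- the Boolean ideals closed under $a \mapsto \widehat{a}$ and under converse --- together with the observation that such an ideal is the Boolean ideal generated by the ideal elements it contains, so that the principal relation ideal generated by $a$ equals $\{\, b : b \leq \widehat{a}\,\}$; simplicity then says the only relation ideals are $\{\azero\}$ and the whole domain, which by that observation is precisely the assertion that the only ideal elements are $\azero$ and $\aunit$.
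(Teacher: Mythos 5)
The paper does not actually prove this statement: it is quoted verbatim from J\'onsson and Tarski (Thm.~4.10 of the cited work) with the definition of simplicity deferred to Burris--Sankappanavar, so there is no in-paper argument to compare yours against. Your proof is correct and is essentially the classical J\'onsson--Tarski argument, reconstructed in its principal-ideal form: the chain (B)$\Leftrightarrow$(C), the collapse of any non-identity congruence via $\widehat{a}=\aunit$, and the explicit congruence $\theta_p$ attached to a non-trivial ideal element $p$ are all sound, and you correctly isolate the two genuinely relational ingredients ($\aunit\acompo p = p = p\acompo\aunit$ and $\aconv{p}=p$ for ideal elements, both via the Dedekind law). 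Two small points worth making explicit if this were written out in full: the step $\widehat{\azero}=\azero$ in (B)$\Rightarrow$(A) uses normality of composition ($r\acompo\azero=\azero$), which should be derived from the cycle rule rather than taken for granted; and the statement as printed says ``at most two ideal elements,'' which is satisfied vacuously by the one-element algebra even though that algebra is not simple under the universal-algebraic convention --- your explicit exclusion of the case $\azero=\aunit$ is the right way to handle this, and is a caveat the paper's formulation silently elides.
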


Notice that if a proper relation algebra is full, then it is simple. The proof of this property can be found in \cite[Thm.~4.10]{jonsson:ajm-74}. 

\begin{definition}
$\mathbf{1}$ and $\mathbf{2}$ are the only relation algebras with $1$ and $2$ elements in their domain, respectively.
\end{definition}

To ease the reader's understanding, when we interpret $\mathbf{1}$ as a proper relation algebra we obtain $\<\{\emptyset\}, \cup, \cap, \pcomple, \emptyset, \emptyset, \pcompo, \pconve, \emptyset\>$, with $\emptyset$ being the empty relation. On the other hand, if $U = \{\bullet\}$, $\mathbf{2}$ is interpreted as a proper relation algebra over $U$ as $\<\{\emptyset, \{\<\bullet, \bullet\>\}\}, \cup, \cap, \pcomple, \emptyset, \{\<\bullet, \bullet\>\}, \pcompo, \pconve, \{\<\bullet, \bullet\>\}\>$, also with $\emptyset$ being the empty relation.

\begin{definition}[Trivial and prime relation algebras]
\label{prime-alg}
Let $\mathcal{A} \in \RA$, then 
\begin{itemize}
\item $\mathcal{A}$ is \emph{trivial} if it is isomorphic to $\mathbf{1}$ (noted as $\mathcal{A} \simeq \mathbf{1}$) or to $\mathbf{2}$, and
\item $\mathcal{A}$ is \emph{prime} if it is simple and non-trivial.
\end{itemize}
\end{definition}

\vspace{0.5cm}%\subsection{Fork algebras}
In \cite[Sec.~3]{frias02}, Frias pinpoints the motivations behind the introduction of \emph{Fork algebra}. In \cite{haeberer:ifiptc291}, Haeberer and Veloso started the study of this class of algebras in the search for a calculus suitable for program construction, derivation and verification.

If we recall Def.~\ref{def_proper-relation-algebra}, a \emph{proper fork algebra} is obtained by extending a proper relation algebra with a new operation called \emph{fork} and usually symbolised with ``$\afork$''. The introduction of this new operator induces a structure on the set over which the relations are defined. This is done by considering binary relations over the domain of a structure $\<U, \star\>$ where $\star: U \times U \to U$ is injective. Then, given $r, s \in \wp (U \times U)$,
\begin{equation}
\label{def_pfork}
r \pfork s = \setof{\pair{a}{b} \in U \times U}{\exists x, y \in U\ |\ b = x \star y \land a \; r \; x \land a \; s \; y}
\end{equation}

Fork algebras evolved around the definition of the function $\star$. In \cite{haeberer:ifiptc291} proper fork algebras were presented on a domain of binary relations on the set of finite trees built up from applications of $\star$; in that sense $\star$ acts as a set theoretical pairing function. In \cite{veloso:bsl-20_2} Veloso and Haeberer moved to a definition where the domain is built from binary relations on finite strings; an immediate consequence of this decision is that $\star$ acts as string concatenation. Later on, in \cite{veloso:19-92} the base set is once again made from finite trees. In all the previously mentioned articles, no axiomatization is given. Mikul\'{a}s et. al. proved, in \cite[Thm.~3.4]{mikulas:s+-192_7}, that an extension of a proper relation algebra with projection operators, like the ones induced by the operator $\pfork$ as defined in Eq.~\ref{def_pfork}, is not finitely axiomatizable. Such a result necessarily excludes the previous, more intuitive, interpretations of $\star$ as being binary tree constructor, string concatenation, set-theoretical pair formation, etc. 

If $U$ is the base set of a fork algebra and $x \in U$, $x$ is said to be a \emph{urelement} if there are no $y, z \in A$ such that $x = y \star z$. Intuitively, a urelement is a non-splitting element of $A$. It is easy to prove that having urelements is equivalent to having a non-surjective function $\star$. 

Before introducing proper fork algebras we introduce \emph{star proper fork algebras} as follows.

\begin{definition}[Star proper fork algebra]
\label{sPFA}
A \emph{star proper fork algebra} is a two-sorted algebraic structure $\< A, U, \pjoin, \pmeet, \pcomple, \pzero, \punit{E}, \pcompo, \pconve, \pid, \pfork, \star \>$ in which $A$ is a set of binary relations on $U$; $\pjoin$, $\pmeet$, $\pcompo$ and $\pfork$ are binary operations on $A$; $\pcomple$ and $\pconve$ are unary operations on $A$; $\pzero$, $\punit{E}$ and $\pid$ are distinguished elements of $A$; and $\star$ is a binary operation on $U$ satisfying:
\begin{itemize}
\item $\< A, \pjoin, \pmeet, \pcomple, \pzero, \punit{E}, \pcompo, \pconve, \pid \>$ is a proper relation algebra on $U$,
\item $\star: U \times U \to U$ is injective on the restriction of its domain to $\punit{E}$ and
\item $A$ is closed under $\pfork$ of binary relations, defined as follows:
$$r \pfork s = \setof{\pair{a}{x \star y} \in U \times U}{a \; r \; x \land a \; s \; y}\ .$$ 
\end{itemize}
If in addition, if $\star$ is required to be non-surjective, the algebra is referred to as a \emph{star proper fork algebra with urelements}. The class of star proper fork algebras (resp. star proper fork algebras with urelements) will be denoted as $\sPFA$ (resp. $\sPFAU$).
\end{definition}

A graphical interpretation of the fork of binary relations is presented in Fig.~\ref{fig_fork}.% Being $\star$ injective, intuitively means that $x \star y$ acts as an encoding of the pair $\pair{x}{y}$ that not necessarily means its set theoretical (canonical) representation as $\{x, \{x, y\}\}$.

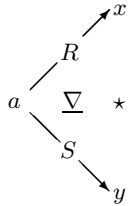
\begin{figure}[ht]
\unitlength1cm
\begin{center}
\begin{picture}(3.8,3)(0.1,-0.25)
\put(0.7,1.15){$a$}
\put(1,1.4){\line(1,1){0.36}}
\put(1.7,2.11){\vector(1,1){0.36}}
\put(1,1.1){\line(1,-1){0.4}}
\put(1.65,0.45){\vector(1,-1){0.4}}
\put(2.1,2.4){$x$}
\put(2.1,-0.05){$y$}
\put(1.4,1.15){$\pfork$}
\put(2.1,1.15){$\star$}
\put(1.4,1.8){$R$}
\put(1.4,0.5){$S$}
\end{picture}
\end{center}
\caption{Graphical representation of ``fork''.}
\label{fig_fork}
\end{figure}

When dealing with algebras, the function $\mathbf{Rd}_T$ takes reducts to the type $T$; to take reduct of a class of algebras of type $\< \mathcal{A}, \mathcal{F} \>$, to the type $\< \mathcal{A}', \mathcal{F'} \>$\footnote{For a formal definition of the type of an algebraic language see \cite[Defs.~1.1--1.3]{burris81}.} means to forget all the domains of $\mathcal{A}$ not mentioned in $\mathcal{A'}$ and the functions of $\mathcal{F}$ not mentioned in $\mathcal{F'}$. Notice that this definition requires $\mathcal{A'}$ to be included in $\mathcal{A}$, and $\mathcal{F'}$ to be included in $\mathcal{F}$.

\begin{definition}[Proper fork algebras]
\label{def_proper-fork-algebras}
The class of \emph{proper fork algebras} (denoted as $\PFA$) is obtained from $\sPFA$ as $\mathbf{Rd}_T\ \sPFA$, where $T$ is the similarity type $\< A, \pjoin, \pmeet, \pcomple, \pzero, \punit{E}, \pcompo, \pconve, \pid, \pfork \>$ and $\mathbf{Rd}_T$ is the algebraic operator for taking the reduct of an algebraic structure to the similarity type $T$.

Analogously, the class of \emph{proper fork algebras with urelements} (denoted as $\PFAU$) is obtained in the same way but applying the operator $\mathbf{Rd}_T$ on the class $\sPFAU$.
\end{definition}

\begin{definition}[Full proper fork algebras]\ \\
A proper fork algebra $\< A, \pjoin, \pmeet, \pcomple, \pzero, \punit{E}, \pcompo, \pconve, \pid, \pfork \>$ is said to be \emph{full} if its relational reduct (which is a proper relation algebra) $\< A, \pjoin, \pmeet, \pcomple, \pzero, \punit{E}, \pcompo, \pconve, \pid \>$ is full.
\end{definition}

The class of \emph{full proper fork algebras} (resp. \emph{full proper fork algebras with urelements}) will be denoted as $\fPFA$ ($\fPFAU$).\\

In the same way $\RA$ is the class of models of $\CR$, and $\PRA$ its class of concrete models, one can formalise a calculus of fork algebras (resp. calculus of fork algebras with urelements) and establish the formal relationship between its class of abstract models and $\PFA$. As Frias points out in \cite{frias02}, the current, most accepted, axiomatisation for the fork algebras is the one due to Haeberer et al. \cite{haeberer:fmpta93,haeberer:4-93}.

\begin{definition}
Let $\mathcal{R}$ be a set of relation variables, then the set of relation designations is the smallest set $\mathit{RelDes}(\mathcal{R})$
such that:
\begin{itemize}
\item $\mathcal{R} \cup \set{\aunit, \azero, \aid} \subseteq \mathit{RelDes}(\mathcal{R})$,
\item If $r, s \in \mathit{RelDes}(\mathcal{R})$, then $\set{r \ajoin s, r \ameet s, \acompl{r}, r \acompo s, \aconv{r}, r \afork s} \subseteq \mathit{RelDes}(\mathcal{R})$.
\end{itemize}
Then, the set of formulae is the set $\setof{r = s}{r, s \in \mathit{CFAForm}(\mathcal{R})}$.
\end{definition}

\begin{definition}[The calculus of fork algebras]
\label{def_calculus-of-fork-algebras}
Let $\mathcal{R}$ be a set of relation variables, then the \emph{calculus of fork algebras} ($\CFA$ for short) is defined for the formulae in $\mathit{CFAForm}(\mathcal{R})$ by:
\begin{itemize}
\item the axioms for the Boolean and the relational operators of Def.~\ref{def_new-calculus-of-relations}, and
\item the following axioms for the fork operator: for all $r, s, t, u \in A$
$$
\begin{array}{l}
r \afork s = (r \acompo (\aid \afork \aunit)) \ameet (s \acompo (\aunit \afork \aid))\\
(r \afork s) \acompo \aconv{t \afork u} = (r \acompo \aconv{t}) \ameet (s \acompo \aconv{u})\\
\aconv{\aid \afork \aunit} \afork \aconv{\aunit \afork \aid} \leq \aid
\end{array}
$$
\end{itemize}
Additionally, the \emph{calculus of fork algebras with urelements} ($\CFAU$ for short) is obtained by adding the axiom:
$$\aunit \acompo (\acompl{\aunit \afork \aunit} \ameet \aid) \acompo \aunit = \aunit$$
\end{definition}

Once again, the calculus is completed by adopting any appropriate set of inference rules for the equality.

\begin{definition}
\label{def_fork-algebras}
The class of \emph{fork algebras}, denoted as $\FA$ for short, (resp. \emph{fork algebras with urelements}, denoted as $\FAU$) is the class of algebraic structures $\< A, \ajoin, \ameet, \acomple, \azero, \aunit, \acompo, \aconve, \aid, \afork\>$ satisfying the axioms in $\CFA$ (resp. $\CFAU$).
\end{definition}

The term $\acompl{\aunit \afork \aunit} \ameet \aid$, appearing in the last axiom of Def.~\ref{def_calculus-of-fork-algebras}, characterises the partial identity on urelements. This term will be denoted by $\aidu$. Terms $\aconv{\aid \afork \aunit}$ and $\aconv{\aunit \afork \aid}$, when interpreted in a proper fork algebra, act as projections of the first and second coordinates, respectively, of an element obtained by application of $\star$. These two terms will be denoted by $\pi$ and $\rho$, respectively. Figs.~\ref{fig_pi}  and~\ref{fig_rho} show a graphical representation of projections $\pi$ and $\rho$.

\begin{figure}[ht]
   \centering
   \subfloat[Graphical representation of $\pi$.\label{fig_pi}]{
     \begin{minipage}{0.4\textwidth}
     \centering
\unitlength1cm
%\begin{center}
\begin{picture}(2.0,3)(0,0)
\put(0.5,2.8){\line(1,-1){0.4}}
\put(1.18,2.1){\vector(1,-1){0.5}}
\put(0.5,0.2){\line(1,1){0.36}}
\put(1.22,0.89){\vector(1,1){0.46}}
\put(0.25,2.75){$x$} \put(0.25,1.48){$\star$}
\put(0.25,0.15){$y$} \put(0.87,1.45){$\pfork$}
\put(1.75,1.45){$x$} \put(0.9,2.2){$\pid$}
\put(0.9,0.57){$\punit{E}$}
\end{picture}
%\end{center}
\end{minipage}
}
\hspace{1cm}
   \subfloat[Graphical representation of $\rho$.\label{fig_rho}]{
     \begin{minipage}{0.4\textwidth}
     \centering
\unitlength1cm
%\begin{center}
\begin{picture}(2.0,3)(0,0)
\put(0.5,2.8){\line(1,-1){0.4}}
\put(1.19,2.12){\vector(1,-1){0.5}}
\put(0.5,0.2){\line(1,1){0.36}}
\put(1.3,0.97){\vector(1,1){0.38}}
\put(0.25,2.75){$x$} \put(0.25,1.48){$\star$}
\put(0.25,0.15){$y$} \put(0.87,1.45){$\pfork$}
\put(1.75,1.45){$y$} \put(0.92,2.15){$\punit{E}$}
\put(0.87,0.62){$\pid$}
\end{picture}
%\end{center}
\end{minipage}
}
\caption{Graphical representation of projections.}\label{fig:projections}
\end{figure}
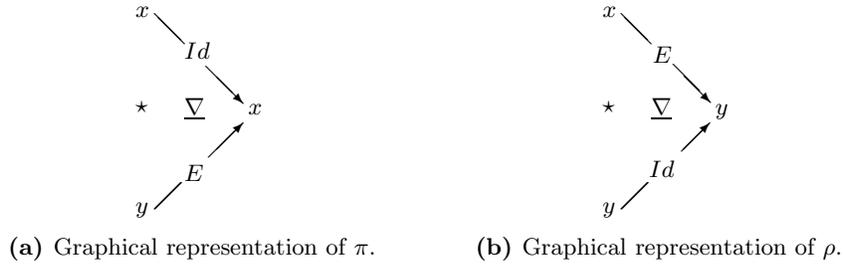

These definitions allow us to rewrite the first, third and fourth axiom of Def~\ref{def_calculus-of-fork-algebras} as follows:
$$
\begin{array}{l}
r \afork s = (r \acompo \aconv{\pi}) \ameet (s \acompo \aconv{\rho})\\
\pi \afork \rho \leq \aid\\
\aunit \acompo \aidu \acompo \aunit = \aunit
\end{array}
$$

By resorting to the identity between urelements we define $\uaunitu = \aidu \acompo \aunit \acompo \aidu$. Relation $\uaunitu$ relates every pair of urelements.

Checking that proper fork algebras (resp. proper fork algebras with urelements) are fork algebras (resp. fork algebras with urelements) is simple as it only requires to check that the structures defined in Def.~\ref{def_proper-fork-algebras} satisfy the axioms given in Def.~\ref{def_calculus-of-fork-algebras}. In \cite{frias:bsl-24_2}, Frias et al. proved the converse result by showing that $\FA$ is representable in $\PFA$, but resorting to a non-equational axiom. Later on, in \cite{frias:bsl-24_4}, the same representability result was proved but only resorting to those equational axioms appearing in \cite{haeberer:fmpta93}. 

\begin{theorem}[Representability of $\FA$ in $\PFA$, \cite{frias:bsl-24_4}, Thm.~3.7\footnote{The same representability result was obtained independently by Gyuris and presented in \cite{gyuris:tcs-188_1_2}.}]\ \\
\label{representability}
$\FA = \mathbf{I}\ \PFA$\footnote{The operator $\mathbf{I}$ closes a class of algebras of the same similarity type under isomorphisms.}
\qed
\end{theorem}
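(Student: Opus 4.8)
The plan is to establish the two inclusions separately. The inclusion $\mathbf{I}\,\PFA \subseteq \FA$ is the routine one announced just above: given a proper fork algebra over a structure $\<U, \star\>$, one checks that the concrete operations of Def.~\ref{def_proper-fork-algebras} satisfy every equational axiom of Def.~\ref{def_calculus-of-fork-algebras}. For the Boolean and relational axioms this is the soundness direction showing that every proper relation algebra is a relation algebra, and for the three fork axioms it is a short computation unwinding Eq.~\ref{def_pfork} together with the fact that $\aconv{\pi}$ and $\aconv{\rho}$ are interpreted as the concrete coordinate projections. Since $\FA$ is closed under isomorphisms this gives $\mathbf{I}\,\PFA \subseteq \FA$, so all the effort lies in the converse, $\FA \subseteq \mathbf{I}\,\PFA$: every abstract fork algebra must be embedded into a proper one.

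The first move is to reduce to the simple case. The class $\FA$ is a variety (all the axioms of $\CFA$ are equations or inequations) and it inherits from $\RA$ the property of being a discriminator variety --- the usual discriminator term of $\RA$ is built only from the Boolean operations, composition and $\aunit$, hence is available in $\FA$, and a fork algebra is simple precisely when its relation-algebra reduct is, since $\afork$ is a relational term in the constants $\pi$ and $\rho$. Thus every $\mathcal{A} \in \FA$ is a subdirect product of simple fork algebras. One then checks that $\mathbf{I}\,\PFA$ is closed under subalgebras and under direct products --- for products, realise the product of proper fork algebras over the disjoint union of their base sets, extending the pairing functions to a total map that is still injective on the field of the universal element $E$ by routing the remaining pairs injectively into a sufficiently large fresh set. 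Hence $\mathbf{I}\,\PFA$ is closed under subdirect products, and it suffices to represent an arbitrary \emph{simple} fork algebra.

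Now let $\mathcal{A}$ be a simple fork algebra. The idea is to read its relation-algebra reduct, together with the designated elements $\pi = \aconv{\aid \afork \aunit}$ and $\rho = \aconv{\aunit \afork \aid}$, as a relation algebra carrying a pair of quasi-projective relations: from the fork axioms, rewritten as $r \afork s = (r \acompo \aconv{\pi}) \ameet (s \acompo \aconv{\rho})$, as $\pi \afork \rho \leq \aid$ and as $(r \afork s) \acompo \aconv{t \afork u} = (r \acompo \aconv{t}) \ameet (s \acompo \aconv{u})$, one derives that $\pi$ and $\rho$ are functional, that they are jointly injective, and that $\afork$ is a relational term in $\pi$ and $\rho$. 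One then runs a base-set construction in the style of Tarski's coding of ordered pairs inside a relation algebra --- the mechanism by which relation algebras equipped with a pair of quasi-projections are shown to be representable --- yielding a base set $U$, closed under the pairing coded by $\pi$ and $\rho$, and an embedding of $\mathcal{A}$ into a proper relation algebra on $U$ in which $\pi$ and $\rho$ become genuine coordinate-projection relations. Taking for $\star : U \times U \to U$ the pairing function thus obtained (injective on the field of $E$), one checks that the image of $\mathcal{A}$ is closed under $\pfork$ over $\<U, \star\>$ and that the represented fork coincides with the concrete fork of Eq.~\ref{def_pfork}, so the image is a proper fork algebra and $\mathcal{A} \in \mathbf{I}\,\PFA$.

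The step I expect to be the main obstacle is exactly this passage from the abstract quasi-projections to a genuine pairing \emph{function} $\star$ on a base set. Abstractly the ``universal relation among pairs'' $\aconv{\pi} \acompo \rho$ need not be all of $\aunit$, so the classical representation of relation algebras carrying a pair of quasi-projections does not apply directly; one must first extend $\mathcal{A}$, adjoining enough formal pairs, to an algebra where the pairing is total, and only then build the base set, taking care that the elements left outside the range of $\star$ --- the urelements, captured by the partial identity $\aidu$ --- are placed consistently and that $\star$ remains injective on the field of $E$. This is precisely where the equational axiomatisation of \cite{haeberer:fmpta93} used in Def.~\ref{def_calculus-of-fork-algebras} does the work that in the earlier argument of \cite{frias:bsl-24_2} was carried by a non-equational axiom. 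Once a suitable $\<U, \star\>$ has been fixed, verifying that every operation, $\pfork$ included, is computed correctly is a direct check.
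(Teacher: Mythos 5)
The paper does not actually prove this theorem: it is imported by citation from \cite{frias:bsl-24_4} (and \cite[Sec.~4.1]{frias02}) and stamped \qed, so there is no in-paper argument to compare against. Measured against the published proof, your outline is essentially the right one: soundness of the concrete operations for $\mathbf{I}\,\PFA \subseteq \FA$; reduction to the simple case via the discriminator/subdirect-product machinery inherited from $\RA$ (your observation that $\afork$ is an $\RA$-polynomial in the parameters $\pi$, $\rho$, so that congruences of the reduct and of the fork algebra coincide, is exactly the point needed); and a Tarski--Givant quasi-projection representation of the simple case, from which $\star$ is read off and the fork is checked to be represented correctly. One substantive correction, though: the ``main obstacle'' you single out is not there. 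In any fork algebra $\aconv{\pi} \acompo \rho = \aunit$ is an immediate instance of the second fork axiom, since $\aconv{\pi} \acompo \rho = (\aid \afork \aunit) \acompo \aconv{\aunit \afork \aid} = (\aid \acompo \aconv{\aunit}) \ameet (\aunit \acompo \aconv{\aid}) = \aunit$; likewise functionality of $\pi$ and $\rho$ falls out of the same axiom, and joint injectivity is the third axiom. So $\pi$ and $\rho$ are already a pair of conjugated quasi-projections outright, no adjunction of ``formal pairs'' is needed, and the quasi-projection representation theorem applies directly to the simple case; what can fail to be total is only the \emph{unpairing} (the domain of $\pi$ need not be all of $\aid$ when there are urelements), which is harmless because Def.~\ref{sPFA} only asks $\star$ to be injective, not surjective. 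With that paragraph deleted or corrected, your sketch is a faithful reconstruction of the literature proof the paper relies on.
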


The proof of Thm.~\ref{representability}, published also in \cite[Sec.~4.1]{frias02}, can be easily adapted to a proof of the representability of $\FAU$ in $\PFAU$.

\begin{corollary}\ \\
\label{representability-urelements}
$\FAU = \mathbf{I}\ \PFAU$.
\qed
\end{corollary}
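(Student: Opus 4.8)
The plan is to obtain this corollary exactly along the lines flagged in the remark following Theorem~\ref{representability}: adapt the representation construction for $\FA$, noting that the single axiom separating $\CFAU$ from $\CFA$ corresponds precisely to the single structural requirement separating $\PFAU$ from $\PFA$, namely non-surjectivity of $\star$. The key lemma I would isolate first is the following dictionary entry. For a proper fork algebra $\mathcal{B}$ built over a base structure $\langle U, \star\rangle$, the equation $\aunit \acompo \aidu \acompo \aunit = \aunit$ holds in $\mathcal{B}$ if and only if $\star$ is non-surjective, apart from the degenerate case where $\mathcal{B}$ is isomorphic to $\mathbf{1}$ or $\mathbf{2}$ (where the equation holds vacuously, both sides being $\pzero$). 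This is a direct computation: interpreting $\aunit \afork \aunit$ in $\mathcal{B}$ yields $\{\langle a, x \star y\rangle : \langle a, x\rangle, \langle a, y\rangle \in \punit{E}\}$, so $\aidu = \acompl{\aunit \afork \aunit}\ameet\aid$ is the partial identity on those diagonal elements lying outside the range of $\star$; hence $\aidu \neq \pzero$ exactly when $\star$ misses some such element, and then, the relevant relational reduct being simple, the ideal element $\aunit\acompo\aidu\acompo\aunit$ it generates equals $\aunit$.

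With this lemma the inclusion $\mathbf{I}\,\PFAU \subseteq \FAU$ is immediate: any $\mathcal{B}\in\PFAU$ lies in $\PFA\subseteq\FA$, so it satisfies all the Boolean, relational and fork axioms of Def.~\ref{def_calculus-of-fork-algebras}; and since $\star$ is non-surjective by definition of $\PFAU$, the dictionary entry supplies the remaining urelement axiom. As $\FAU$ is a variety it is closed under isomorphism, so $\mathbf{I}\,\PFAU\subseteq\FAU$.

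For the reverse inclusion $\FAU\subseteq\mathbf{I}\,\PFAU$, take $\mathcal{A}\in\FAU$. Since $\FAU\subseteq\FA$, Theorem~\ref{representability} gives an isomorphism $h\colon\mathcal{A}\to\mathcal{B}$ with $\mathcal{B}\in\PFA$; inspecting the construction in \cite[Sec.~4.1]{frias02}, $\mathcal{B}$ arises as the fork reduct of a genuine star proper fork algebra, so its base structure $\langle U,\star\rangle$ is explicitly at hand. Transporting the urelement axiom across $h$, it holds in $\mathcal{B}$, so by the dictionary entry $\star$ is non-surjective (in the trivial cases one instead replaces $\mathcal{B}$ by a padded copy over a larger $U$ whose $\star$ is manifestly non-surjective), whence $\mathcal{B}\in\PFAU$ and $\mathcal{A}\in\mathbf{I}\,\PFAU$. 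Combining the two inclusions yields $\FAU=\mathbf{I}\,\PFAU$.

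The only real work I anticipate is the bookkeeping for the dictionary entry in the non-full case: one must verify that in the algebra produced by the construction underlying Theorem~\ref{representability} the term $\aidu$ still denotes precisely the partial identity on the non-range elements, that $\punit{E}$ (which need not be $U\times U$) still contains the whole diagonal so that the computation is well behaved, and that the trivial algebras $\mathbf{1},\mathbf{2}$ are separately exhibited inside $\PFAU$. Once these points are pinned down, the corollary is a routine transcription of the proof of Theorem~\ref{representability}.
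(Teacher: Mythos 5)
Your strategy is the one the paper intends --- its entire ``proof'' of this corollary is the remark that the argument for Theorem~\ref{representability} ``can be easily adapted'', and your dictionary entry (the urelement axiom of Def.~\ref{def_calculus-of-fork-algebras} corresponds to non-surjectivity of $\star$) is exactly the hinge on which that adaptation turns. The direction $\FAU\subseteq\mathbf{I}\,\PFAU$ is fine as you describe it, since the representation underlying Theorem~\ref{representability} passes through a decomposition into simple factors, where your computation applies componentwise.

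The genuine soft spot is the forward half of your dictionary entry, which carries the inclusion $\mathbf{I}\,\PFAU\subseteq\FAU$: you pass from $\aidu\neq\azero$ to $\aunit\acompo\aidu\acompo\aunit=\aunit$ by asserting that ``the relevant relational reduct [is] simple''. Nothing in Def.~\ref{sPFA} grants this --- the paper itself recalls that proper relation algebras need not be simple --- and for a non-simple member of $\PFAU$ as literally defined the implication fails. Concretely: let $U=U_1\sqcup U_2$ with $U_1,U_2$ infinite, $\punit{E}=U_1^2\cup U_2^2$, $A=\wp(\punit{E})$, and choose $\star$ so that it maps $U_1^2$ injectively into $U_1$ while missing some $u_1\in U_1$, maps $U_2^2$ bijectively onto $U_2$, and sends all remaining pairs into $U_2$. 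Then $\star$ is non-surjective, so this is a star proper fork algebra with urelements in the sense of Def.~\ref{sPFA}; but $\punit{E}\pfork\punit{E}\supseteq U_2^2$, hence $\aidu\subseteq U_1^2$, hence $\punit{E}\pcompo\aidu\pcompo\punit{E}=U_1^2\neq\punit{E}$ and the urelement axiom fails. So either $\PFAU$ must be read as requiring urelements in every block of $\punit{E}$ (non-surjectivity of each restriction of $\star$), or the inclusion has to be argued block by block rather than globally. This defect is arguably inherited from the paper's definition rather than introduced by you, but as written your proof of this inclusion does not go through; once the definition is tightened, or one restricts to simple (e.g.\ full) proper fork algebras where your computation is correct, the rest of your proposal is a faithful rendering of the intended adaptation.
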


\vspace{0.5cm}%\subsection{Relational reduct of fork algebras}
The relational reduct of a fork algebra satisfies many specific properties, for example, they are representable in $\PRA$. In \cite{veloso:ES-418-96}, and briefly revisited in \cite{veloso:bcml96}, Veloso explores the expansibility of a relation algebra to a fork algebra and, to that purpose, he presents the following definitions and results.

\begin{definition}
Given $\mathcal{F} \in \FA$, $\mathbf{Rd}_{\< A, \ajoin, \ameet, \acomple, \azero, \aunit, \acompo, \aconve, \aid \>}\ \mathcal{F}$ (called its relational reduct) will be denoted as $\mathcal{F}_\mathsf{RA}$.
\end{definition}

\begin{definition}[Fork index]
Let $\mathcal{A} \in \RA$, the \emph{fork index} $\varphi$ of $\mathcal{A}$ is defined as $\varphi (\mathcal{A}) = |\setof{\mathcal{F} \in \FA}{\mathcal{F}_\mathsf{RA} = \mathcal{A}}|$. 
\end{definition}

\begin{proposition}[\cite{veloso:ES-418-96}, Sec.~5.1]
Let $\mathcal{A} \in \RA$ with domain $A$, Then $\varphi (\mathcal{A}) \leq |A|^2$.
\qed
\end{proposition}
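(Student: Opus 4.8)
The plan is to show that any fork algebra $\mathcal{F}$ with $\mathcal{F}_{\mathsf{RA}} = \mathcal{A}$ is completely determined, as an algebra, by the single element $\pi \afork \rho$ — or, more precisely, by a pair of elements of $A$ playing the role of the two projection-related terms $\pi = \aconv{\aid \afork \aunit}$ and $\rho = \aconv{\aunit \afork \aid}$. The key observation is the rewriting of the axioms of Def.~\ref{def_calculus-of-fork-algebras} recorded in the excerpt: in any fork algebra one has
$$
r \afork s = (r \acompo \aconv{\pi}) \ameet (s \acompo \aconv{\rho}),
$$
so the binary operation $\afork$ is expressible purely in terms of the relational operations of $\mathcal{A}$ together with the two fixed constants $\pi$ and $\rho$. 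Consequently, if $\mathcal{F}_1$ and $\mathcal{F}_2$ both have relational reduct $\mathcal{A}$ and moreover induce the same $\pi$ and the same $\rho$, then their fork operations coincide on every pair of arguments, hence $\mathcal{F}_1 = \mathcal{F}_2$.

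First I would make precise that $\pi$ and $\rho$ are terms in the language of fork algebras, so each $\mathcal{F}$ with $\mathcal{F}_{\mathsf{RA}} = \mathcal{A}$ determines a well-defined pair $\pair{\pi^{\mathcal{F}}}{\rho^{\mathcal{F}}} \in A \times A$. Second, using the displayed rewriting of the first axiom, I would argue that the assignment $\mathcal{F} \mapsto \pair{\pi^{\mathcal{F}}}{\rho^{\mathcal{F}}}$ is injective on $\setof{\mathcal{F} \in \FA}{\mathcal{F}_{\mathsf{RA}} = \mathcal{A}}$: two fork algebras over the same $\mathcal{A}$ with equal projection pairs have, by that identity, the same value of $r \afork s$ for all $r,s \in A$, and since all other operations are those of $\mathcal{A}$, the two algebras are literally equal. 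Third, I would conclude that
$$
\varphi(\mathcal{A}) = \bigl| \setof{\mathcal{F} \in \FA}{\mathcal{F}_{\mathsf{RA}} = \mathcal{A}} \bigr| \leq |A \times A| = |A|^2.
$$

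The main obstacle is the first step: one must be certain that $\afork$ really is \emph{term-definable} from the relational operations plus $\pi,\rho$ \emph{inside every} fork algebra, i.e. that the rewriting $r \afork s = (r \acompo \aconv{\pi}) \ameet (s \acompo \aconv{\rho})$ is a theorem of $\CFA$ and not merely a fact about proper fork algebras. This is exactly the content of the first reformulated axiom quoted in the excerpt just before Thm.~\ref{representability}, so it may be invoked directly; if one preferred a self-contained argument one would derive it from the first axiom of Def.~\ref{def_calculus-of-fork-algebras} by unfolding the abbreviations $\pi = \aconv{\aid \afork \aunit}$ and $\rho = \aconv{\aunit \afork \aid}$. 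A secondary point worth a sentence is to note that the bound need not be tight — the pair $\pair{\pi^{\mathcal{F}}}{\rho^{\mathcal{F}}}$ must itself satisfy the constraints coming from the second and third fork axioms (in rewritten form, $\pi \afork \rho \leq \aid$ together with the identity relating $\pi,\rho$ to $\afork$), so in general only some pairs in $A \times A$ arise; but for the stated inequality $\varphi(\mathcal{A}) \le |A|^2$ the crude counting of all pairs suffices, and that is all that is claimed.
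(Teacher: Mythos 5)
Your argument is correct and is essentially the intended one: the paper states this proposition with only a citation to Veloso's report, and the standard proof is exactly the observation you make, namely that the first fork axiom in its rewritten form $r \afork s = (r \acompo \aconv{\pi}) \ameet (s \acompo \aconv{\rho})$ holds in every abstract fork algebra, so each expansion $\mathcal{F}$ of $\mathcal{A}$ is completely determined by the pair $\pair{\pi^{\mathcal{F}}}{\rho^{\mathcal{F}}} \in A \times A$, whence $\varphi(\mathcal{A}) \leq |A|^2$. You also correctly flag and discharge the one point needing care, that this identity is a consequence of the axioms of $\CFA$ and not merely a fact about proper fork algebras.
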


With this result, Veloso introduces the following classification of relation algebras.

\begin{definition}
Let $\mathcal{A} \in \RA$ with domain $A$,
\begin{itemize}
\item $\mathcal{A}$ is called \emph{explosive} if $\varphi (\mathcal{A}) = |A|^2$,
\item $\mathcal{A}$ is called \emph{non-expansible} if $\varphi (\mathcal{A}) = 0$,
\item $\mathcal{A}$ is called \emph{rigid} if $\varphi (\mathcal{A}) = 1$,
\item $\mathcal{A}$ is called \emph{elastic} if $\varphi (\mathcal{A}) = \infty$.
\end{itemize}
\end{definition}

The aim of the present work is to contribute to the study of the first category of relation algebras defined above, i.e. the explosive relation algebras. Following \cite{veloso:05-96,veloso:ES-418-96}, the class of explosive relation algebras will be denoted as $\mathsf{EXP}$ and, given $\kappa \geq \aleph_0$, $\mathsf{EXP}[\kappa]$ will denote the class of explosive relation algebras whose underlying set is of cardinality $\kappa$.

%!TEX root = ./main.tex

\section{On the construction of explosive relation algebras}
\label{EXPconstruction}
In \cite[Sec.~6.2]{veloso:ES-418-96} Veloso presents the following three results. The first one, to be discussed more extensively in Sec.~\ref{sec:fixpoint-star}, constitutes the main motivation behind this work.

\begin{proposition}[Existence of prime, big and explosive proper relation algebras, \cite{veloso:ES-418-96}, Sec.~6.2]
\label{existsEXP}
Let $\kappa \geq \aleph_0$, there exists $\mathcal{R}_\kappa \in \PRA$ prime and explosive such that $|\mathcal{R}_\kappa| = \kappa$ (i.e. $\mathcal{R}_\kappa$ has $\kappa$ non-isomorphic expansions to fork algebras $\{\mathcal{F}_\gamma\}_{\gamma < \kappa}$).\qed
\end{proposition}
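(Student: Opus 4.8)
\emph{Proof strategy.} The plan is to realise $\mathcal{R}_\kappa$ as an explicitly generated subalgebra of the full proper relation algebra (Def.~\ref{fullPRA}) over a base set $U$ with $|U| = \kappa$, chosen so that $\mathcal{R}_\kappa$ is rigid enough that its automorphisms are exactly the permutations of $U$, yet flexible enough to absorb $\kappa$ mutually non-isomorphic pairing functions. Fix such a $U$. Since $\kappa \geq \aleph_0$ we have $|U \times U| = \kappa$, so injective maps $U \times U \to U$ exist; the combinatorial heart of the argument is that there are at least $\kappa$ of them, say $\set{\star_\gamma}_{\gamma < \kappa}$, that are pairwise non-isomorphic as magmas $\< U, \star_\gamma\>$. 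For $\kappa = \aleph_0$ this is completely explicit: taking $U = \nat$ and $\star_\gamma(x,y) = \gamma + \pair{x}{y}$ (with $\pair{\cdot}{\cdot}\colon \nat^2 \to \nat$ a fixed bijection), the magma $\<\nat, \star_\gamma\>$ has exactly $\gamma$ urelements, so these are pairwise non-isomorphic. For arbitrary $\kappa$ one appeals to the standard (but care-demanding) fact that injective magmas are rich enough to realise, up to isomorphism, at least $\kappa$ distinct structures of size $\kappa$ — for instance by coding into them the $2^\kappa$ pairwise non-isomorphic linear orders of cardinality $\kappa$; this is the step I expect to be the main obstacle. Put $\pi_\gamma = \setof{\pair{x \star_\gamma y}{x}}{x, y \in U}$ and $\rho_\gamma = \setof{\pair{x \star_\gamma y}{y}}{x, y \in U}$ for the induced first- and second-projection relations.

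Next, for $u \in U$ let $\widehat{u} = \set{u} \times U$, and let $\mathcal{R}_\kappa$ be the subalgebra of $\wp(U^2)$ generated as a relation algebra by $\setof{\widehat{u}}{u \in U} \cup \setof{\pi_\gamma, \rho_\gamma}{\gamma < \kappa}$. It is generated by $\kappa$ elements, so $|\mathcal{R}_\kappa| \leq \kappa$; and since $\widehat{u} \acompo \aconv{\widehat{v}} = \set{\pair{u}{v}}$ it contains every singleton of $U^2$, so $|\mathcal{R}_\kappa| = \kappa$, and moreover (every nonzero element contains such a singleton, and singletons are minimal) it is atomic with atom set precisely $\setof{\set{\pair{u}{v}}}{u, v \in U}$. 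As a subalgebra of a proper relation algebra, $\mathcal{R}_\kappa \in \PRA$; its top element is $\aunit = U^2$, so $\aunit \acompo r \acompo \aunit \in \set{\azero, \aunit}$ for every $r$, whence $\mathcal{R}_\kappa$ has only the two ideal elements and is simple; being infinite it is non-trivial, so $\mathcal{R}_\kappa$ is prime.

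Now fix $\gamma < \kappa$. The axiom $r \afork s = (r \acompo \aconv{\pi}) \ameet (s \acompo \aconv{\rho})$ (the first fork axiom, rewritten via the projection terms) exhibits the set-theoretic fork operator determined by $\star_\gamma$ (Eq.~\ref{def_pfork}) as a relation-algebra term operation in $\pi_\gamma$ and $\rho_\gamma$, both of which lie in $\mathcal{R}_\kappa$; hence $\mathcal{R}_\kappa$ is closed under this operator, and its expansion $\mathcal{F}_\gamma$ by it is a proper fork algebra in the sense of Def.~\ref{def_proper-fork-algebras}, so $\mathcal{F}_\gamma \in \FA$, with relational reduct exactly $\mathcal{R}_\kappa$. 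Since $\pi_\gamma$ and $\rho_\gamma$ are recoverable from the fork operator (as the converses of $\aid \afork \aunit$ and $\aunit \afork \aid$), and distinct pairings yield distinct projection pairs, the algebras $\mathcal{F}_\gamma$ ($\gamma < \kappa$) are pairwise distinct, so $\varphi(\mathcal{R}_\kappa) \geq \kappa$. Combined with the fork-index bound $\varphi(\mathcal{R}_\kappa) \leq |\mathcal{R}_\kappa|^2 = \kappa$ recalled above, this gives $\varphi(\mathcal{R}_\kappa) = |\mathcal{R}_\kappa|^2$, i.e.\ $\mathcal{R}_\kappa$ is explosive.

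It remains to see that the $\mathcal{F}_\gamma$ are pairwise non-isomorphic. Since $\mathcal{R}_\kappa$ is atomic with atom set all singletons, inspecting composition and converse of atoms forces every automorphism of $\mathcal{R}_\kappa$ to be of the form $\sigma_f\colon r \mapsto \setof{\pair{f(u)}{f(v)}}{\pair{u}{v} \in r}$ for some permutation $f$ of $U$. An isomorphism $\mathcal{F}_\gamma \to \mathcal{F}_{\gamma'}$ is such a $\sigma_f$ (forgetting fork) that additionally carries $\pi_\gamma$ to $\pi_{\gamma'}$ and $\rho_\gamma$ to $\rho_{\gamma'}$; unwinding these two equalities and using injectivity of $\star_{\gamma'}$ yields $f(x \star_\gamma y) = f(x) \star_{\gamma'} f(y)$ for all $x, y$, i.e.\ $f$ is an isomorphism $\<U, \star_\gamma\> \to \<U, \star_{\gamma'}\>$, contradicting the choice of the $\star_\gamma$ when $\gamma \neq \gamma'$. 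Hence $\set{\mathcal{F}_\gamma}_{\gamma < \kappa}$ are $\kappa$ pairwise non-isomorphic fork-algebra expansions of the prime, infinite, explosive proper relation algebra $\mathcal{R}_\kappa$, as required. The one genuinely delicate point is the construction of the family $\set{\star_\gamma}$ for general $\kappa$ (routine for $\kappa = \aleph_0$), and, secondarily, the atom-level automorphism analysis used in this last step.
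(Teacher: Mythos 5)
Your construction is essentially viable, but it rests on a step you yourself flag as unresolved, and that step is precisely where the paper's work lies. For $\kappa = \aleph_0$ your family $\star_\gamma(x,y) = \gamma + \langle x,y\rangle$ is fine, but for arbitrary $\kappa$ the existence of $\kappa$ pairwise non-isomorphic injective magmas $\langle U, \star_\gamma\rangle$ of size $\kappa$ is not a ``standard fact'' one can wave at, and the suggestion to code linear orders is neither carried out nor obviously compatible with keeping the pairing injective. The paper closes exactly this gap with Prop.~\ref{existsEXPconst}: for every $S \subseteq U$ with $|S| < |U|$ there is a bijective $\star_S : U^2 \to U$ whose fixpoint set is exactly $S$, built explicitly as a union of bijections with pairwise disjoint domains and codomains; the expansions are then separated by the cardinality of $\mathit{fix}(\star_S)$.

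Beyond that, your route to non-isomorphism is genuinely different from the paper's and considerably heavier. You pin down the automorphism group of $\mathcal{R}_\kappa$ (the atoms are the singletons, every automorphism is induced by a permutation of $U$, hence any isomorphism of expansions is a magma isomorphism); this is correct given your generators $\widehat{u}$, and it proves more than is needed, but it is what forces you into the hard combinatorial problem above. The paper instead uses an invariant preserved by an \emph{arbitrary} isomorphism of proper fork algebras, with no knowledge of what those isomorphisms look like: any isomorphism induces a bijection between the sets of subidentities of $\underline{2} = \pid \pfork \pid$, and for a simple proper fork algebra that set is $\wp\(\pid_{\mathit{fix}(\star)}\) \pmeet A$. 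Accordingly the paper seeds its generating set $H$ with $\wp\(\pid_{\mathit{fix}(\star_{S_\phi})}\)$ (together with the finite subsets of $U^2$ and the projections $\pi_{S_\phi}, \rho_{S_\phi}$) so that this subidentity count varies across the expansions, and concludes non-isomorphism from the differing cardinalities of the fixpoint sets alone. The two arguments agree on the remaining points: the algebra is a subalgebra of the full proper relation algebra over $U$ generated by $\kappa$ elements, hence of cardinality $\kappa$; it is simple (you argue via ideal elements, the paper cites that subalgebras of simple relation algebras are simple) and non-trivial, hence prime; and closure under each $\pfork_\gamma$ holds because fork is a relation-algebra term in the projections.
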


The next property is a direct consequence of the fact that proper relation algebras having a different number of ideals cannot be isomorphic (the interested reader is pointed to \cite{jonsson:ajm-74} for a discussion about the relation between homomorphisms and ideal elements). Then, to control the amount of ideal elements, we can combine prime algebras (see Prop.~\ref{prime-alg}) with powers of $\mathbf{2}$ in a direct product.

\begin{proposition}[Non-isomorphic combinations of prime, big and explosive proper relation algebras, \cite{veloso:ES-418-96}, Sec.~6.2]
\label{existsEXP2}
Let $\kappa \geq \aleph_0$ and $\mathcal{R}_\kappa \in \PRA$ prime and explosive such that $|\mathcal{R}_\kappa| = \kappa$; then for each cardinal $\zeta < \kappa$, $\mathbf{2}^\zeta \times \mathcal{R}_\kappa \in \PRA$ is representable, explosive, $|\mathbf{2}^\zeta \times \mathcal{R}_\kappa| = \kappa$ and has $2^{\zeta+1}$ ideal elements.
\qed
\end{proposition}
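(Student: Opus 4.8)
The plan is to verify each of the four claimed properties of $\mathbf{2}^\zeta \times \mathcal{R}_\kappa$ in turn, relying on Proposition~\ref{existsEXP} for the existence of a prime, explosive $\mathcal{R}_\kappa \in \PRA$ with $|\mathcal{R}_\kappa| = \kappa$. First I would observe that $\mathbf{2}^\zeta \times \mathcal{R}_\kappa$ is indeed a proper relation algebra: $\PRA$ is closed under direct products (the product of algebras of binary relations is again, up to isomorphism, an algebra of binary relations over the disjoint union of the base sets), and $\mathbf{2}$ is a proper relation algebra, hence so is $\mathbf{2}^\zeta$, and hence so is the product with $\mathcal{R}_\kappa$. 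Since $\mathbf{2}^\zeta \times \mathcal{R}_\kappa$ is then representable, the ``representable'' claim is immediate.

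Next I would compute the cardinality. We have $|\mathbf{2}^\zeta| = 2^\zeta$ (the domain of $\mathbf{2}^\zeta$ is $\{0,1\}^\zeta$). For $\zeta$ infinite this is $\geq \aleph_1 > \kappa$ only in edge cases; more carefully, $|\mathbf{2}^\zeta \times \mathcal{R}_\kappa| = 2^\zeta \cdot \kappa$. Since $\zeta < \kappa$ and $\kappa \geq \aleph_0$, and since $\mathcal{R}_\kappa$ already has $\kappa$ elements, one gets $2^\zeta \cdot \kappa = \kappa$ provided $2^\zeta \leq \kappa$. For finite $\zeta$ this is clear; for infinite $\zeta < \kappa$ one needs $\kappa$ to be large enough that $2^\zeta \leq \kappa$ — here I would either invoke the hypothesis implicitly carried by Proposition~\ref{existsEXP} (that such $\mathcal{R}_\kappa$ exists with the cardinality constraints compatible with the construction) or restrict attention to the cases where $2^\zeta \le \kappa$, which is the intended reading. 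The ideal-element count is the cleanest part: the ideal elements of a direct product are exactly the tuples whose components are ideal in each factor; $\mathbf{2}$ has two ideal elements ($\azero$ and $\aunit$, since $\mathbf{2}$ is trivially simple), so $\mathbf{2}^\zeta$ has $2^\zeta$ ideal elements, and $\mathcal{R}_\kappa$, being prime hence simple, has exactly $2$; multiplying gives $2^\zeta \cdot 2 = 2^{\zeta + 1}$ ideal elements.

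For the explosiveness, the key point is that the fork index behaves well with respect to these particular products. I would argue that any fork-algebra expansion of $\mathbf{2}^\zeta \times \mathcal{R}_\kappa$ is determined by a fork-algebra expansion of each factor (the fork operator, being definable componentwise from the pairing function $\star$ on the base set, splits along the components of the product), that $\mathbf{2}$ admits essentially a unique fork structure (its base set is a singleton, so $\star$ is forced), and hence $\varphi(\mathbf{2}^\zeta \times \mathcal{R}_\kappa) = \varphi(\mathcal{R}_\kappa)$. Since $\mathcal{R}_\kappa$ is explosive with domain of size $\kappa$, $\varphi(\mathcal{R}_\kappa) = \kappa^2 = \kappa$, and since $|\mathbf{2}^\zeta \times \mathcal{R}_\kappa| = \kappa$ as well, we get $\varphi(\mathbf{2}^\zeta \times \mathcal{R}_\kappa) = \kappa = |\mathbf{2}^\zeta \times \mathcal{R}_\kappa|^2$, so the product is explosive. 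The main obstacle I anticipate is precisely this last step — making rigorous the claim that fork expansions of a product of proper relation algebras correspond to tuples of fork expansions of the factors (in particular that $\star$ on the base set of the product must respect the partition into the base sets of the factors, which uses that the factors are simple/have controlled ideal structure). If a clean product decomposition for $\varphi$ is not available off the shelf, the fallback is to check directly that each of the $\kappa$ distinct fork expansions $\mathcal{F}_\gamma$ of $\mathcal{R}_\kappa$ from Proposition~\ref{existsEXP} lifts to a distinct fork expansion of $\mathbf{2}^\zeta \times \mathcal{R}_\kappa$ by combining it with the forced fork structure on $\mathbf{2}^\zeta$, giving the lower bound $\varphi \geq \kappa$, which together with the upper bound $\varphi \leq |A|^2 = \kappa$ from the Proposition of Veloso closes the argument.
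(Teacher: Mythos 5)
Your proposal is correct and follows essentially the same route as the paper: representability and cardinality via closure of $\PRA$ under direct products with $2^\zeta \leq \kappa$, the ideal-element count as $2^\zeta \cdot 2 = 2^{\zeta+1}$ from the componentwise characterisation of ideal elements, and explosiveness from the rigidity of $\mathbf{2}$ together with the fact that the fork index is unchanged by these products. The one step you flag as the main obstacle --- that fork expansions of the product decompose along the factors --- is exactly the step the paper itself does not prove but defers to Veloso's reports, so your treatment (and your fallback via lifting the $\kappa$ expansions of $\mathcal{R}_\kappa$) is, if anything, slightly more careful than the published argument.
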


We will not discuss why direct product (used in the previous proposition) does not modifies the amount of posible expansions of a proper relation algebra to a fork algebra, the reader interested in the details of such phenomenon is pointed to \cite{veloso:05-96,veloso:ES-418-96}.

\begin{theorem}[Many prime, big  and explosive proper relation algebras, \cite{veloso:ES-418-96}, Sec.~6.2]
\label{existsEXP3}
Let $\kappa \geq \aleph_0$, then there exists $\kappa$ non-isomorphic proper relation algebras of cardinality $\kappa$ (i.e $|\mathsf{EXP}[\kappa]| = \kappa$)
\qed
\end{theorem}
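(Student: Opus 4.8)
The plan is to combine Proposition~\ref{existsEXP} with Proposition~\ref{existsEXP2} and count. By Proposition~\ref{existsEXP} we have a fixed prime, explosive proper relation algebra $\mathcal{R}_\kappa \in \PRA$ with $|\mathcal{R}_\kappa| = \kappa$. First I would observe that, since $\kappa \geq \aleph_0$, for every cardinal $\zeta < \kappa$ we have $|\mathbf{2}^\zeta| \leq 2^{|\zeta|}$, but more to the point $|\mathbf{2}^\zeta \times \mathcal{R}_\kappa| = \max(|\mathbf{2}^\zeta|, \kappa)$; for $\zeta < \kappa$ with $\zeta$ ranging over, say, all \emph{finite} cardinals $\zeta \in \nat$, we have $|\mathbf{2}^\zeta| = 2^\zeta$ finite, so $|\mathbf{2}^\zeta \times \mathcal{R}_\kappa| = \kappa$. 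By Proposition~\ref{existsEXP2} each $\mathbf{2}^\zeta \times \mathcal{R}_\kappa$ is a representable (hence proper, up to isomorphism) explosive relation algebra of cardinality $\kappa$, with exactly $2^{\zeta+1}$ ideal elements.

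Next I would use the number of ideal elements as an isomorphism invariant to separate these algebras: as noted in the paragraph preceding Proposition~\ref{existsEXP2}, proper relation algebras with different numbers of ideal elements cannot be isomorphic. Since $\zeta \mapsto 2^{\zeta+1}$ is injective on $\nat$, the family $\{\mathbf{2}^\zeta \times \mathcal{R}_\kappa\}_{\zeta \in \nat}$ gives $\aleph_0$ pairwise non-isomorphic explosive proper relation algebras of cardinality $\kappa$. This already settles the case $\kappa = \aleph_0$, giving $|\mathsf{EXP}[\aleph_0]| \geq \aleph_0$; together with the trivial upper bound $|\mathsf{EXP}[\kappa]| \leq 2^\kappa$ — actually we only need $|\mathsf{EXP}[\aleph_0]| \leq \aleph_0$, which follows since there are only countably many finite-or-countable similarity-type algebras up to isomorphism on a countable carrier... here one must be slightly careful, as there are $2^{\aleph_0}$ such algebras in general; so for the exact equality $|\mathsf{EXP}[\kappa]| = \kappa$ one wants a matching upper bound, namely that there are at most $\kappa$ isomorphism types of relation algebras on a set of size $\kappa$ for the cardinals in question, or one reads the statement as asserting existence of a family of size $\kappa$.

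For uncountable $\kappa$, I would let $\zeta$ range over \emph{all} cardinals $\zeta < \kappa$. For $\zeta$ infinite, $|\mathbf{2}^\zeta| = 2^\zeta$, which may exceed $\kappa$; so to keep the cardinality of the product equal to $\kappa$ one should instead take direct powers indexed by sets of size $< \kappa$ but only use finitely-supported elements, or — simpler — use $\mathbf{2}^{(\zeta)}$, the subalgebra of $\mathbf{2}^\zeta$ consisting of sequences with finite support together with their complements, which has cardinality $\max(\zeta, \aleph_0) \leq \kappa$ and still has $2^{\zeta+1}$-many... no: its ideal elements are governed by its factor structure, and with finite support one gets only $\aleph_0$ ideal elements regardless of $\zeta$. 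The clean fix is to choose, for each infinite cardinal $\zeta \leq \kappa$, a relation algebra with exactly $\zeta$ ideal elements and cardinality $\kappa$; the direct product $\mathcal{S}_\zeta \times \mathcal{R}_\kappa$ where $\mathcal{S}_\zeta$ is a suitable Boolean/relation algebra with $\zeta$ atoms (e.g. the finite-cofinite subalgebra of $\wp(\zeta)$, viewed as a relation algebra via $\mathbf{2}$-coordinates) has cardinality $\max(\zeta,\kappa)=\kappa$, is explosive by the product argument referenced after Proposition~\ref{existsEXP2}, and has a number of ideal elements strictly increasing in $\zeta$. Ranging $\zeta$ over the $\kappa$-many infinite cardinals $\leq \kappa$ (or, if $\kappa$ is a successor or has cofinality issues, over a cofinal-or-just-large-enough set of ordinals below $\kappa$) yields $\kappa$ pairwise non-isomorphic explosive relation algebras of cardinality $\kappa$, proving $|\mathsf{EXP}[\kappa]| \geq \kappa$; the reverse inequality (if intended) follows from a cardinality count of isomorphism types.

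The main obstacle is the cardinal arithmetic bookkeeping in the uncountable case: one must pick the auxiliary algebras $\mathcal{S}_\zeta$ so that (i) $|\mathcal{S}_\zeta \times \mathcal{R}_\kappa| = \kappa$ exactly, not larger, which rules out the naive full power $\mathbf{2}^\zeta$ when $2^\zeta > \kappa$, and (ii) the count of ideal elements of $\mathcal{S}_\zeta$ is a genuinely injective (or at least $\kappa$-to-one-free) function of an index running through a set of size $\kappa$. Everything else — that $\mathcal{R}_\kappa$ exists (Proposition~\ref{existsEXP}), that products preserve explosiveness and representability (Proposition~\ref{existsEXP2} and the remark following it), and that ideal-element count is an isomorphism invariant — is quoted from the material above, so the proof is essentially a packaging of those facts once the indexing family of auxiliary algebras is fixed.
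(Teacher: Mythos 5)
Your core argument is exactly the paper's: Theorem~\ref{existsEXP3} is derived by combining Prop.~\ref{existsEXP} with Prop.~\ref{existsEXP2} and using the cardinality of the set of ideal elements as an isomorphism invariant, so for $\kappa=\aleph_0$ your proof and the paper's one-line proof coincide (with your observation that $\zeta\mapsto 2^{\zeta+1}$ is injective on $\nat$ making the count explicit). The additional bookkeeping you attempt for uncountable $\kappa$ points at a real weakness --- the paper does not explain how ranging $\zeta$ over cardinals $<\kappa$ yields $\kappa$, rather than merely as many as there are such cardinals, pairwise non-isomorphic algebras --- but your repairs do not close it: whether you replace $\mathbf{2}^\zeta$ by a finite-support power, by an algebra $\mathcal{S}_\zeta$ with $\zeta$ atoms, or let the index run over a large set of ordinals below $\kappa$, the invariant you are using can take at most one value per cardinal $\leq\kappa$, and the number of such cardinals is in general strictly smaller than $\kappa$ (for $\kappa=\aleph_1$ it is $\aleph_0$); moreover $\zeta\mapsto 2^{\zeta+1}$ need not be injective on infinite cardinals. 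So as a reproduction of the paper's argument your proposal is faithful, and where it goes beyond the paper it correctly identifies, but does not fill, a gap that the paper's own proof shares.
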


\subsection{A prime, big and explosive proper relation algebra}
\label{sec:fixpoint-star}
In this section we review the construction of a prime, big and explosive proper relation algebra, presented by Veloso in \cite{veloso:05-96} and used in the proof of Prop.~\ref{existsEXP}.

\begin{definition}[$\underline{2}$]
Let $\mathcal{F} = \< A, \pjoin, \pmeet, \pcomple, \pzero, \punit{E}, \pcompo, \pconve, \pid, \pfork \> \in \PFA$, then $\underline{2} = \pid \pfork \pid$. 
\end{definition}

\begin{definition}[Subidentities of $\underline{2}$]
Let $\mathcal{F} = \< A, \pjoin, \pmeet, \pcomple, \pzero, \punit{E}, \pcompo, \pconve, \pid, \pfork \> \in \PFA$, then $\mathit{Si}_{\underline{2}} (\mathcal{F}) = \setof{a \in A}{a \subseteq \underline{2} \pmeet \pid}$.
\end{definition}

\begin{proposition}
Let $\mathcal{F}, \mathcal{G} \in \PFA$, if $\phi: \mathcal{F} \to \mathcal{G}$ is an isomorphism, then $\phi$ induces a bijection between $\mathit{Si}_{\underline{2}} (\mathcal{F})$ and $\mathit{Si}_{\underline{2}} (\mathcal{G})$.
\qed
\end{proposition}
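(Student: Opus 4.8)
The plan is to show that an isomorphism $\phi\colon \mathcal{F}\to\mathcal{G}$ of proper fork algebras carries $\mathit{Si}_{\underline{2}}(\mathcal{F})$ onto $\mathit{Si}_{\underline{2}}(\mathcal{G})$, and that its restriction is a bijection. The key observation is that the set $\mathit{Si}_{\underline{2}}(\mathcal{F})$ is defined by a \emph{first-order} (indeed, quantifier-free term-) condition in the language of fork algebras: $a\in\mathit{Si}_{\underline{2}}(\mathcal{F})$ iff $a\le \underline{2}\pmeet\pid$, equivalently $a = a\pmeet(\pid\pfork\pid)\pmeet\pid$. Since $\phi$ is a homomorphism of the full fork-algebra signature, it commutes with $\pmeet$, with $\pid$, and with $\pfork$; hence $\phi(\underline{2}) = \phi(\pid\pfork\pid) = \phi(\pid)\pfork\phi(\pid) = \pid\pfork\pid = \underline{2}$ computed in $\mathcal{G}$, and likewise $\phi$ preserves the order $\le$ (defined from $\pmeet$ or $\pjoin$). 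Therefore $a\le\underline{2}\pmeet\pid$ in $\mathcal{F}$ implies $\phi(a)\le\underline{2}\pmeet\pid$ in $\mathcal{G}$, so $\phi(\mathit{Si}_{\underline{2}}(\mathcal{F}))\subseteq\mathit{Si}_{\underline{2}}(\mathcal{G})$.

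Next I would run the same argument for $\phi^{-1}$, which is again an isomorphism of proper fork algebras, to obtain $\phi^{-1}(\mathit{Si}_{\underline{2}}(\mathcal{G}))\subseteq\mathit{Si}_{\underline{2}}(\mathcal{F})$, i.e. $\mathit{Si}_{\underline{2}}(\mathcal{G})\subseteq\phi(\mathit{Si}_{\underline{2}}(\mathcal{F}))$. Combining the two inclusions gives $\phi(\mathit{Si}_{\underline{2}}(\mathcal{F})) = \mathit{Si}_{\underline{2}}(\mathcal{G})$. Since $\phi$ is injective on all of $A$, its restriction to the subset $\mathit{Si}_{\underline{2}}(\mathcal{F})$ is injective; by the surjectivity onto $\mathit{Si}_{\underline{2}}(\mathcal{G})$ just established, that restriction is the desired bijection.

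There is essentially no hard step here: the whole proof is unwinding definitions plus the elementary fact that a homomorphism commutes with terms and preserves the induced partial order. The only point demanding a word of care is making explicit that $\underline{2}$ is an \emph{absolute} (derived-constant) element — it is $\pid\pfork\pid$, built from operations and constants alone — so that $\phi(\underline{2}_{\mathcal{F}})=\underline{2}_{\mathcal{G}}$ rather than merely $\phi(\underline{2}_{\mathcal{F}})$ being some element of $\mathcal{G}$; once this is noted, the set $\mathit{Si}_{\underline{2}}$ is genuinely invariant under the isomorphism. One could alternatively phrase the whole thing as: $\mathit{Si}_{\underline{2}}$ is a functor from (proper fork algebras with isomorphisms) to (sets with bijections), but the two-inclusion argument above is the most economical write-up.
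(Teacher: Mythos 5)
Your proof is correct and matches the paper's approach: the paper proves the generalised version (Proposition~\ref{isomorphism}, for arbitrary $\underline{t}$) by exactly this argument, namely that $\phi$ commutes with the term $\underline{t}$ (Lemma~\ref{bijection-t}, which for $\underline{2}=\pid\pfork\pid$ is your computation $\phi(\pid\pfork\pid)=\pid\pfork\pid$) and preserves the order, so membership in $\mathit{Si}$ is invariant. The only cosmetic difference is that the paper runs a single chain of biconditionals where you argue the two inclusions separately via $\phi$ and $\phi^{-1}$.
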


\begin{definition}[Fixpoints of $\star$]
\label{fixpoints}
Let $\star: U^2 \to U$, the fixpoints of $\star$ are defined as $\mathit{fix} (\star) = \setof{u \in U}{u \star u = u}$.
\end{definition}

This set can also be presented as a relation contained in the identity relation, as follows:
\begin{equation}
\label{Id-fixpoints}
\pid_{\mathit{fix}(\star)} = \setof{\langle u, u \rangle \in U^2}{u \in \mathit{fix}(\star)}
\end{equation}
\noindent for which it is possible to prove the following properties.

\begin{proposition}
Let $\mathcal{F} = \< A, \pjoin, \pmeet, \pcomple, \pzero, \punit{E}, \pcompo, \pconve, \pid, \pfork \> \in \PFA$, then $\underline{2} \pmeet \pid = \pid_{\mathit{fix} (\star)}$.
\qed
\end{proposition}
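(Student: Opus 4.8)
The statement to prove is: for $\mathcal{F} = \< A, \pjoin, \pmeet, \pcomple, \pzero, \punit{E}, \pcompo, \pconve, \pid, \pfork \> \in \PFA$, we have $\underline{2} \pmeet \pid = \pid_{\mathit{fix}(\star)}$, where $\underline{2} = \pid \pfork \pid$ and $\pid_{\mathit{fix}(\star)}$ is defined by Eq.~\ref{Id-fixpoints}. The plan is to work entirely at the level of the concrete (proper) fork algebra, using the set-theoretic definition of $\pfork$ in Eq.~\ref{def_pfork}, and prove the two set inclusions.

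First I would unfold $\underline{2} = \pid \pfork \pid$ directly from Eq.~\ref{def_pfork}: a pair $\pair{a}{b}$ belongs to $\pid \pfork \pid$ iff there exist $x, y \in U$ with $b = x \star y$, $a \mathbin{\pid} x$, and $a \mathbin{\pid} y$. Since $\pid$ is the identity relation, $a \mathbin{\pid} x$ forces $x = a$ and likewise $y = a$; hence $\pair{a}{b} \in \underline{2}$ iff $b = a \star a$. Therefore $\underline{2} = \setof{\pair{a}{a \star a}}{a \in U}$ (intersected with $\punit{E}$, which is harmless since all these pairs lie in $\punit{E}$ by the domain condition on $\star$).

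Next I would intersect with $\pid$. A pair $\pair{a}{b} \in \underline{2} \pmeet \pid$ iff $b = a \star a$ and $b = a$, i.e. iff $a \star a = a$ and $b = a$; that is, iff $a \in \mathit{fix}(\star)$ and $b = a$. This is exactly the condition defining $\pid_{\mathit{fix}(\star)}$ in Eq.~\ref{Id-fixpoints}, so the two relations coincide. Both inclusions fall out of this single equivalence, so there is essentially nothing left to do once the unfolding is carried out carefully.

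The only subtlety — and it is minor — is bookkeeping about the ambient relation $\punit{E}$: the definition of $\pfork$ in a proper fork algebra (Def.~\ref{sPFA}) builds pairs $\pair{a}{x \star y}$ and one must check these land in $\punit{E}$ so that the resulting relation really is an element of $A$; but this is guaranteed because $\star$ has domain restricted to $\punit{E}$ and $\pid \subseteq \punit{E}$, so $x = y = a$ already forces $\pair{a}{a} \in \punit{E}$ and hence $a \star a$ is defined and the pair $\pair{a}{a \star a}$ lies in $\punit{E}$. I do not anticipate any genuine obstacle here; this is a routine verification whose point is simply to record, for later use, the explicit description of $\underline{2} \pmeet \pid$ as the identity on fixpoints of $\star$.
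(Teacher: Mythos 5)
Your proof is correct and follows essentially the same route the paper takes for the generalised statement (Prop.~\ref{prop-id}, proved via Lemma~\ref{fork2star}): unfold $\pid \pfork \pid$ set-theoretically so that the identity collapses the witnesses to $x=y=a$, giving $\underline{2}=\setof{\pair{a}{a\star a}}{a\in U}$, and then intersect with $\pid$ to recover exactly $\pid_{\mathit{fix}(\star)}$. The paper leaves this special case unproved, but your argument is precisely the instance $t=\bin{\nil}{\nil}$ of its general computation, and your side remark about the pairs landing in $\punit{E}$ is a harmless but accurate piece of bookkeeping.
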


\begin{proposition}
Let $\mathcal{F} = \< A, \pjoin, \pmeet, \pcomple, \pzero, \punit{E}, \pcompo, \pconve, \pid, \pfork \> \in \PFA$ simple, such that $\pfork$ is induced by $\star: U^2 \to U$, then $\mathit{Si}_{\underline{2}} (\mathcal{F}) = \wp (\pid_{\mathit{fix} (\star)}) \pmeet A$.
\qed
\end{proposition}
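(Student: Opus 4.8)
The plan is to reduce the statement to the two propositions that immediately precede it, leaving only elementary set-theoretic bookkeeping. By the definition of $\mathit{Si}_{\underline{2}}(\mathcal{F})$ we have $\mathit{Si}_{\underline{2}}(\mathcal{F}) = \setof{a \in A}{a \subseteq \underline{2} \pmeet \pid}$, and by the preceding proposition (valid for every $\mathcal{F} \in \PFA$ whose fork $\pfork$ is induced by $\star$) we have $\underline{2} \pmeet \pid = \pid_{\mathit{fix}(\star)}$. Substituting the latter into the former gives $\mathit{Si}_{\underline{2}}(\mathcal{F}) = \setof{a \in A}{a \subseteq \pid_{\mathit{fix}(\star)}}$.

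It then remains to verify the purely set-theoretic identity $\setof{a \in A}{a \subseteq \pid_{\mathit{fix}(\star)}} = \wp(\pid_{\mathit{fix}(\star)}) \pmeet A$, where $\wp(\pid_{\mathit{fix}(\star)})$ denotes the family of all subsets of the relation $\pid_{\mathit{fix}(\star)} \subseteq U^2$. This is a routine double inclusion: if $a \in A$ and $a \subseteq \pid_{\mathit{fix}(\star)}$, then $a$ is both a subset of $\pid_{\mathit{fix}(\star)}$ and a member of $A$, hence $a \in \wp(\pid_{\mathit{fix}(\star)}) \pmeet A$; conversely, any $a$ in $\wp(\pid_{\mathit{fix}(\star)}) \pmeet A$ is an element of $A$ contained in $\pid_{\mathit{fix}(\star)} = \underline{2} \pmeet \pid$, so $a \in \mathit{Si}_{\underline{2}}(\mathcal{F})$. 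Chaining the two equalities yields the claim.

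Accordingly, there is no substantial obstacle here; the single point that deserves care is keeping the definitions aligned, in particular not conflating $\wp(\pid_{\mathit{fix}(\star)})$ — the collection of all subrelations of $\pid_{\mathit{fix}(\star)}$ — with the (generally smaller) set $\wp(\pid_{\mathit{fix}(\star)}) \pmeet A$ of those subrelations that happen to be elements of the carrier $A$. The hypothesis that $\pfork$ is induced by a concrete pairing $\star$ is what makes $\pid_{\mathit{fix}(\star)}$ meaningful and lets us invoke the preceding proposition; simplicity is part of the ambient hypotheses under which this result will be applied (in the construction behind Prop.~\ref{existsEXP}), but the deduction itself appeals to nothing beyond that preceding proposition and the definition of $\mathit{Si}_{\underline{2}}$.
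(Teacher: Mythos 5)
Your proposal is correct and follows essentially the same route as the paper: unfold Def.~\ref{def:sit} (specialised to $\underline{2}$), substitute $\underline{2} \pmeet \pid = \pid_{\mathit{fix}(\star)}$ from the preceding proposition, and finish with the set-theoretic identification of $\setof{a \in A}{a \subseteq \pid_{\mathit{fix}(\star)}}$ with $\wp(\pid_{\mathit{fix}(\star)}) \pmeet A$ --- exactly the chain of equalities the paper uses for the generalised Prop.~\ref{Sit-powerset}, to which it defers. If anything, you are more careful than the paper's displayed derivation, which in its final lines slips into writing $\wp\(\pid_{\mathit{fix}_t(\star)} \pmeet F\)$ where the intended object is $\wp\(\pid_{\mathit{fix}_t(\star)}\) \pmeet F$; your explicit warning against conflating the two is well placed, and your observation that simplicity plays no role in the deduction itself is also accurate.
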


\begin{proposition}[\cite{veloso:05-96}, Sec.~6]
\label{existsEXPconst}
Let $U$ be a set such that $|U| = \kappa$ and $\aleph_0 \leq \kappa$ then, for all $S \subseteq U$ such that $|S| < |U|$, there exists $\star_S: U^2 \to U$ bijective such that $\mathit{fix}(\star_S) = S$.
\end{proposition}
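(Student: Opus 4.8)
The plan is to construct $\star_S$ explicitly as a bijection $U^2 \to U$ whose set of fixpoints (in the sense of Def.~\ref{fixpoints}, i.e. those $u$ with $u \star_S u = u$) is exactly $S$. Since $|S| < |U| = \kappa \geq \aleph_0$, we have $|U \setminus S| = \kappa$ as well, and of course $|U^2| = \kappa$. The idea is to handle the ``diagonal'' pairs $\langle u, u\rangle$ separately from the off-diagonal pairs, sending the diagonal of $S$ to $S$ identically and then bijecting everything that remains on the source side with everything that remains on the target side.

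First I would fix the cardinal arithmetic: let $D = \setof{\langle u,u\rangle}{u \in U}$ be the diagonal of $U^2$, let $D_S = \setof{\langle u,u\rangle}{u \in S} \subseteq D$, and set $P = U^2 \setminus D_S$ (the pairs not yet assigned). On the target side, set $Q = U \setminus S$. I claim $|P| = |Q| = \kappa$: indeed $|Q| = \kappa$ because $|S| < \kappa$ and $\kappa$ is infinite, and $|P| = |U^2| = \kappa$ since removing $|D_S| \leq |S| < \kappa$ elements from a set of size $\kappa$ leaves $\kappa$ elements. Hence there is a bijection $h \colon P \to Q$. Now define
$$
\star_S(\langle u, v\rangle) =
\begin{cases}
u & \text{if } u = v \text{ and } u \in S,\\
h(\langle u,v\rangle) & \text{otherwise.}
\end{cases}
$$

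Second, I would verify the three required properties. \emph{Injectivity:} the two branches have disjoint ranges ($S$ versus $Q = U\setminus S$), the first branch is injective on $D_S$ by construction, and the second is injective because $h$ is; so $\star_S$ is injective. \emph{Surjectivity:} every $s \in S$ is hit by $\langle s,s\rangle$ via the first branch, and every $q \in U\setminus S$ is hit via the second branch since $h$ is onto $Q$; thus $\star_S$ is onto, hence bijective. \emph{Fixpoint set:} if $u \in S$ then $\star_S(\langle u,u\rangle) = u$ by the first branch, so $u \in \mathit{fix}(\star_S)$. Conversely, if $u \notin S$, then $\langle u,u\rangle \in P$ and $\star_S(\langle u,u\rangle) = h(\langle u,u\rangle) \in U \setminus S$, which cannot equal $u$; so $u \notin \mathit{fix}(\star_S)$. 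Therefore $\mathit{fix}(\star_S) = S$, as desired.

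The only real subtlety — and the step I would present most carefully — is the cardinality computation $|P| = |Q| = \kappa$, which is exactly where the hypothesis $|S| < |U|$ together with the infinitude of $\kappa$ is used; if $S$ could have size $\kappa$ the argument for $|Q| = \kappa$ (and the matching of $P$ with $Q$) would break down, and indeed a bijection cannot have all of $U$ as its fixpoint set unless $U$ is a singleton. Everything else is a routine verification. (One may additionally remark that the resulting $\star_S$ has no urelements in the sense of the surrounding text precisely because it is surjective, which is consistent with its intended use in building the proper fork algebras witnessing Prop.~\ref{existsEXP}.)
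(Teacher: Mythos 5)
Your construction follows the same basic plan as the paper's (send the diagonal over $S$ identically to $S$, and biject everything else onto $U\setminus S$ by a cardinality count), but it has a genuine gap at the one point where the construction is actually delicate. In the fixpoint verification you argue that for $u\notin S$ the value $\star_S(\langle u,u\rangle)=h(\langle u,u\rangle)$ lies in $U\setminus S$ and therefore ``cannot equal $u$''. This does not follow: $u$ itself belongs to $U\setminus S=Q$, so an arbitrary bijection $h\colon P\to Q$ may perfectly well send $\langle u,u\rangle$ to $u$, producing a fixpoint outside $S$ and falsifying $\mathit{fix}(\star_S)=S$. The cardinality computation guarantees that \emph{some} bijection $P\to Q$ exists, but the existence of one satisfying the additional $\kappa$-many constraints $h(\langle u,u\rangle)\neq u$ for $u\in U\setminus S$ is a further claim that you neither isolate nor prove.

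This is precisely the point the paper's proof is engineered around: it writes $\overline{S}=A\cup\bigcup_{i\in\nat}B_i$ with all pieces pairwise disjoint and of cardinality $\kappa$, and builds a bijection $g\colon\overline{S}\to\bigcup_{i\in\nat}B_i$ \emph{without fixpoints} by shifting ($g_A\colon A\to B_0$ and $g_i\colon B_i\to B_{i+1}$); diagonal pairs $\langle u,u\rangle$ with $u\in\overline{S}$ are sent to $g(u)\neq u$, while the off-diagonal pairs are sent into the disjoint piece $A$ by a separate bijection $f$, so no element of $\overline{S}$ can be a fixpoint. Your argument is repairable --- you must additionally require that $h$, restricted to the diagonal $\setof{\langle u,u\rangle}{u\in U\setminus S}$, avoid the values $u$, and justify that such an $h$ exists (the shift construction above is the standard way) --- but as written the key step is asserted rather than proved.
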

\begin{proof}
If $|U| = \kappa$, then $|U^2| = \kappa$, $|\pid| = \kappa$ and $|\overline{Id}| = \kappa$. Then, let $S \subseteq U$ such that $|S| < \kappa$, we know that the cardinality of the complement of $S$ with respect to $U$ (denoted as $\overline{S}$ when no ambiguity arises) is $\kappa$ (denoted as $|\overline{S}| = \kappa$) and, therefore, it is possible to take $\overline{S} = A \cup \bigcup_{i \in \nat} B_i$ such that:
\begin{itemize}
\item for all $i \in \nat$, $A \cap B_i = \emptyset$,
\item for all $i, j \in \nat$, such that $i \not= j$, $B_i \cap B_j = \emptyset$, 
\item $|A| = \kappa$ and for all $i \in \nat$, $|B_i| = \kappa$.
\end{itemize}
Then, there exists $g: \overline{S} \to \bigcup_{i \in \nat} B_i$ bijective and without fixpoints defined as the union of the bijective functions $g_A: A \to B_0$ and $\{g_i: B_i \to B_{i+1}\}_{i \in \nat}$. Notice that such a union is disjoint as the domains and codomains of all of the functions are disjoint.

On the other hand, there exists $f: \pcompl{\pid} \to A$, as a consequence of analysing the cardinality of $\pcompl{\pid}$ and $A$ and, consequently, it is possible to define $\star_S: U^2 \to U$ in the following way:
\[
\begin{array}{rcl}
u \star_S v & = & \left\{
\begin{array}{lr}
u & \mbox{, if $u \in S$ and $v = u$.}\\
g (u) & \mbox{, if $u \in \overline{S}$ and $v = u$.}\\
f (\langle u, v \rangle) & \mbox{, if $v \not= u$.}
\end{array}
\right.
\end{array}
\]
From this we know that $\star_S:U^2 \to U$ is a bijective function as it is the union of bijective functions whose domains and codomains are pairwise disjoint such that $\mathit{fix}(\star_S) = S$.
\qed\end{proof}

Now, it is possible to prove Prop.~\ref{existsEXP} by resorting to the construction of a proper relation algebra with the corresponding properties.

\thmt{Proposition}{existsEXP}{Existence of prime, big and explosive proper relation algebras, \cite{veloso:ES-418-96}, Sec.~6.2}{Let $\kappa \geq \aleph_0$, there exists $\mathcal{R}_\kappa \in \PRA$ prime and explosive such that $|\mathcal{R}_\kappa| = \kappa$ (i.e. $\mathcal{R}_\kappa$ has $\kappa$ non-isomorphic expansions to fork algebras $\{\mathcal{F}_\gamma\}_{\gamma < \kappa}$).}
\begin{proof}
Let $U$ be an infinite set such that $|U| = \kappa$ then, for all $\phi < \kappa$, there exists $S \subseteq U$ such that $|S| = \phi$. By Prop.~\ref{existsEXPconst}, there exists $\star_S:U^2 \to U$ bijective inducing $\pfork_S: \(\wp (U^2)\)^2 \to \wp (U^2)$ and the corresponding projections $\pi_S$ and $\rho_S$.

Let $\{S_\phi\}_{\aleph_0 \leq \phi \leq \kappa}$ such that $S_\phi \subseteq U$, for all $\aleph_0 \leq \phi \leq \kappa$ then, we define the set $H$ in the following way\footnote{$\wp_\mathit{fin} (A)$ is interpreted as $\setof{a}{a \subseteq A \land |a| \in \nat}$, the finite powerset of $A$.}:
\[
H = \wp_\mathit{fin} \(U^2\) \cup \bigcup_{\phi < \kappa} \(\{\pi_{S_\phi}, \rho_{S_\phi}\} \cup \wp \(\pid_{\mathit{fix} (\star_{S_\phi})}\)\)
\]
From the previous definition we know that $|H| = \kappa$. Therefore, it is enough to consider $\mathcal{R}_H$ as the subalgebra generated by $H$ of the full proper relation algebra generated by $U$ which, by \cite[Sec.~3]{burris81} has cardinality $\kappa$. Finally, by \cite[Thm.~4.11]{jonsson:ajm-74}, as $\mathcal{R}_H$ is a subalgebra of a simple algebra, it is simple, and for each $\aleph_0 \leq \phi \leq \kappa$, $S_\phi$ induces $\pfork_{S_\phi}$ determining a possible extension of $\mathcal{R}_H$ to a fork algebra. Notice that all this possible extensions are pairwise non-isomorphic as they differ in the cardinality of the set of fixpoints for $\star$.
\qed\end{proof}

Next, we reproduce the proof of Prop.~\ref{existsEXP2} and Thm.~\ref{existsEXP3}.

\thmt{Proposition}{existsEXP2}{Non-isomorphic combinations of prime, big and explosive proper relation algebras, \cite{veloso:ES-418-96}, Sec.~6.2}{
Let $\kappa \geq \aleph_0$ and $\mathcal{R}_\kappa \in \PRA$ prime and explosive such that $|\mathcal{R}_\kappa| = \kappa$; then for all cardinal $\zeta < \kappa$, $\mathbf{2}^\zeta \times \mathcal{R}_\kappa \in \PRA$ is representable, explosive, $|\mathbf{2}^\zeta \times \mathcal{R}_\kappa| = \kappa$ and has $2^{\zeta+1}$ ideal elements.
}
\begin{proof}
Since $\mathbf{2}^\zeta \leq \kappa$, the direct product $\mathbf{2}^\zeta \times \mathcal{R}_\kappa$ is a representable relation algebra such that $|\mathbf{2}^\zeta \times \mathcal{R}_\kappa| = \kappa$; it's prime factors are the rigid $\mathbf{2}$ and the explosive $\mathcal{R}_\kappa$, so $\phi (\mathbf{2}^\zeta \times \mathcal{R}_\kappa) = \kappa$ leading to the fact that $\mathbf{2}^\zeta \times \mathcal{R}_\kappa$ has $2^\zeta \cdot 2$ ideal elements.
\qed\end{proof}

\thmt{Theorem}{existsEXP3}{Many prime, big  and explosive proper relation algebras, \cite{veloso:ES-418-96}, Sec.~6.2}{
Let $\kappa \geq \aleph_0$, then there exists $\kappa$ non-isomorphic proper relation algebras of cardinality $\kappa$ (i.e $|\mathsf{EXP}[\kappa]| = \kappa$)
}
\begin{proof}
This theorem is a direct consequence of Prop.~\ref{existsEXP2}, as algebras with different cardinality of ideal elements cannot be isomorphic.
\qed\end{proof}

\subsection{Generalising the control of the fixpoints of $\star$}
In the previous section we have shown the construction of a prime, big and explosive proper relation algebra where explosiveness is guaranteed by controlling the cardinality of the set of fixpoints of $\star$, each of which leads to a non-isomorphic fork algebra. In this section we propose a generalisation of the controlling technique of such fixpoints.

Let us first introduce some useful definitions.

\newcommand\bt{\mathsf{bt}}
\newcommand\BT{\mathsf{BT}}
\newcommand\nil{\mathtt{nil}}
\newcommand\bin[2]{\mathtt{bin}\ #1\ #2}

\begin{definition}[Binary trees]
\emph{Binary trees} are the elements of $\BT$, the smallest set of terms produced by the grammar $\bt ::= \nil\ |\ \bin{\bt}{\bt}$. 
\end{definition}

\begin{definition}
The predicates $\bullet = \bullet \subseteq \BT \times \BT$ and $\bullet < \bullet \subseteq \BT \times \BT$ are defined as follows:
\[
\begin{array}{rclcl}
\nil & = & \nil\\
\bin{i}{d} & = & \bin{i'}{d'} & \text{ iff } & i = i' \text{ and } d = d'\\

\nil & < & \bin{i}{d}\\
\bin{i}{d} & < & \bin{i'}{d'} & \text{ iff } & \bin{i}{d} = i' \text{ or } \bin{i}{d} < i' \text{ or}\\
              &    &                  &                & \bin{i}{d} = d' \text{ or } \bin{i}{d} < d'
\end{array}
\]
\end{definition}

\newcommand\map[3]{\mathit{map}\ #1\ #2\ #3}

\begin{definition}[Map]
\label{map}
Let $U$ be a set, we define $\mathit{map}: \BT \times [U^2 \to U] \times U \to U$ as follows: let $f: U^2 \to U$ a binary function over $U$ and $u \in U$
\[
\begin{array}{rcl}
\map{\nil}{f}{u} & = & u\\
\map{\(\bin{\mathit{ab}_1}{\mathit{ab}_2}\)}{f}{u} & = & f \(\(\map{\mathit{ab}_1}{f}{u}\), \(\map{\mathit{ab}_2}{f}{u}\)\)
\end{array}
\]
\end{definition}

\begin{definition}[$\underline{t}$]
\label{term-underline}
Let $\mathcal{F} = \< A, \pjoin, \pmeet, \pcomple, \pzero, \punit{E}, \pcompo, \pconve, \pid, \pfork \> \in \PFA$ and $t \in \BT$, then $\underline{t} = \mathit{map} (t, \pfork, \pid)$.
\end{definition}

\begin{definition}[Subidentities of $\underline{t}$]
\label{def:sit}
Let $\mathcal{F} = \< A, \pjoin, \pmeet, \pcomple, \pzero, \punit{E}, \pcompo, \pconve, \pid, \pfork \> \in \PFA$ and $t \in \BT$, then $\mathit{Si}_{\underline{t}} (\mathcal{F}) = \setof{a \in A}{a \subseteq \underline{t} \pmeet \pid}$.
\end{definition}

\begin{proposition}
\label{isomorphism}
Let $\mathcal{F} = \< F, \pjoin^\mathcal{F}, \pmeet^\mathcal{F}, \pcomple^\mathcal{F}, \pzero^\mathcal{F}, \punit{E}^\mathcal{F}, \pcompo^\mathcal{F}, {\pconve}^\mathcal{F}, \pid^\mathcal{F}, \pfork^\mathcal{F} \>$ and $\mathcal{G} = \< G, \pjoin^\mathcal{G}, \pmeet^\mathcal{G}, \pcomple^\mathcal{G}, \pzero^\mathcal{G}, \punit{E}^\mathcal{G}, \pcompo^\mathcal{G}, {\pconve}^\mathcal{G}, \pid^\mathcal{G}, \pfork^\mathcal{G} \>$ be $\PFA$ and $t \in \BT$, if $\phi: \mathcal{F} \to \mathcal{G}$ is an isomorphism, then $\phi$ induces a bijection between $\mathit{Si}_{\underline{t}^\mathcal{F}} (F)$ and $\mathit{Si}_{\underline{t}^\mathcal{G}} (G)$.
\end{proposition}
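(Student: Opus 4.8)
The plan is to show that an isomorphism $\phi:\mathcal F\to\mathcal G$ of proper fork algebras respects the term operation $\underline t$, so that it restricts to a bijection between the respective sets of subidentities. The key observation is that $\underline t$ is, by Definitions~\ref{map} and~\ref{term-underline}, a composite term built solely from the constant $\pid$ and the operation $\pfork$ by recursion on the structure of $t\in\BT$. Since $\phi$ is an algebra homomorphism for the full fork-algebra signature, it commutes with $\pid$ (constants are preserved) and with $\pfork$; hence by a straightforward structural induction on $t$ we get $\phi(\underline t^{\mathcal F}) = \underline t^{\mathcal G}$.

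First I would carry out that induction explicitly: the base case $t=\nil$ gives $\underline{\nil}^{\mathcal F}=\map{\nil}{\pfork^{\mathcal F}}{\pid^{\mathcal F}}=\pid^{\mathcal F}$, and $\phi(\pid^{\mathcal F})=\pid^{\mathcal G}=\underline{\nil}^{\mathcal G}$. For the inductive step $t=\bin{t_1}{t_2}$, by Definition~\ref{map}, $\underline t^{\mathcal F} = \pfork^{\mathcal F}(\underline{t_1}^{\mathcal F},\underline{t_2}^{\mathcal F})$; applying $\phi$, using that it is a homomorphism for $\pfork$, and then the induction hypotheses for $t_1$ and $t_2$, yields $\phi(\underline t^{\mathcal F}) = \pfork^{\mathcal G}(\underline{t_1}^{\mathcal G},\underline{t_2}^{\mathcal G}) = \underline t^{\mathcal G}$. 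Next, since $\pmeet$ and $\pid$ are likewise preserved by $\phi$, we obtain $\phi(\underline t^{\mathcal F}\pmeet^{\mathcal F}\pid^{\mathcal F}) = \underline t^{\mathcal G}\pmeet^{\mathcal G}\pid^{\mathcal G}$.

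Finally I would conclude by recalling that an isomorphism of relation-type algebras preserves the ordering $\leq$ (definable from $\pmeet$), so for $a\in F$ we have $a\subseteq \underline t^{\mathcal F}\pmeet^{\mathcal F}\pid^{\mathcal F}$ if and only if $\phi(a)\subseteq \underline t^{\mathcal G}\pmeet^{\mathcal G}\pid^{\mathcal G}$. Hence $\phi$ maps $\mathit{Si}_{\underline t^{\mathcal F}}(F)$ into $\mathit{Si}_{\underline t^{\mathcal G}}(G)$, and the same argument applied to $\phi^{-1}$ shows the inclusion in the other direction; since $\phi$ is a bijection, its restriction is a bijection between the two sets of subidentities.

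There is no serious obstacle here: the statement is essentially the bookkeeping fact that a term operation is preserved by homomorphisms, combined with preservation of the lattice order. The only point that requires mild care is making the structural induction on binary trees precise and checking that $\map{t}{\pfork}{\pid}$ genuinely is the evaluation of a term in the fork-algebra signature (so that the generic "homomorphisms preserve term operations'' principle applies); once that is granted, everything else is immediate. This mirrors, in the generalised setting of arbitrary $t\in\BT$, the special case $t=\bin{\nil}{\nil}$ already recorded for $\underline 2$ earlier in the section.
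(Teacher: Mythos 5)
Your proposal is correct and follows essentially the same route as the paper: the paper's key auxiliary result (Lemma~\ref{bijection-t}) is exactly your structural induction showing $\phi(\underline{t}^{\mathcal{F}}) = \underline{t}^{\mathcal{G}}$, and the paper then likewise concludes via preservation of the order (expressed there through $\pjoin$ rather than $\pmeet$, an immaterial difference). Your explicit appeal to $\phi^{-1}$ for surjectivity is handled in the paper by the chain of equivalences, but the substance is identical.
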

\begin{proof}
Let $\phi: F \to G$ be an isomorphism between $\mathcal{F}$ and $\mathcal{G}$, and $r \in F$ such that $r \in \mathit{Si}_{\underline{t}^\mathcal{F}} (F)$,
\[
\begin{array}{rclr}
r \in \mathit{Si}_{\underline{t}^\mathcal{F}} (F) 
		& \text{ iff } & r \subseteq^\mathcal{F} \underline{t}^\mathcal{F} \pmeet^\mathcal{F} \pid^\mathcal{F} & \text{[by Def.~\ref{def:sit}.]}\\
		& \text{ iff } & r \pjoin^\mathcal{F} \(\underline{t}^\mathcal{F} \pmeet^\mathcal{F} \pid^\mathcal{F}\) = \underline{t}^\mathcal{F} \pmeet^\mathcal{F} \pid^\mathcal{F}  & \text{[by Def. of $\subseteq$.]}\\
		& \text{ iff } & \phi\(r \pjoin^\mathcal{F} \(\underline{t}^\mathcal{F} \pmeet^\mathcal{F} \pid^\mathcal{F}\)\) = \phi\(\underline{t}^\mathcal{F} \pmeet^\mathcal{F} \pid^\mathcal{F}\)  \\
		& &\qquad  \text{[because $\phi$ is an isomorphism.]}\\
		& \text{ iff } & \phi\(r\) \pjoin^\mathcal{G} \(\phi\(\underline{t}^\mathcal{F}\) \pmeet^\mathcal{G} \pid^\mathcal{G}\) = \phi\(\underline{t}^\mathcal{F}\) \pmeet^\mathcal{G} \pid^\mathcal{G}  \\
		& &\qquad  \text{[because $\phi$ is an isomorphism.]}\\
		& \text{ iff } & \phi\(r\) \pjoin^\mathcal{G} \(\underline{t}^\mathcal{G} \pmeet^\mathcal{G} \pid^\mathcal{G}\) = \underline{t}^\mathcal{G} \pmeet^\mathcal{G} \pid^\mathcal{G}  & \text{[by Lemma~\ref{bijection-t}.]}\\
		& \text{ iff } & \phi\(r\) \subseteq^\mathcal{G} \underline{t}^\mathcal{G} \pmeet^\mathcal{G} \pid^\mathcal{G}  & \text{[by Def. of $\subseteq$.]}\\
		& \text{ iff } & \phi\(r\) \in \mathit{Si}_{\underline{t}^\mathcal{G}} (G) & \text{[by Def.~\ref{def:sit}.]}
\end{array}
\]
Thus, finishing the proof.\qed\end{proof}

\begin{definition}[$t$-controlled fixpoints of $\star$]
\label{fixpoints-t}
Let $\star: U^2 \to U$ and $t \in \BT$, the $t$-controlled fixpoints of $\star$ are defined as $\mathit{fix}_t (\star) = \setof{u \in U}{\map{t}{\star}{u} = u}$.
\end{definition}

Recalling the definition of the set of fixpoints of $\star$ given in Def.~\ref{fixpoints} we can present the $t$-controlled fixpoints of $\star$ as a partial identity as follows.
\begin{equation}
\label{Id-fixpoints-t}
\pid_{\mathit{fix}_t (\star)} = \setof{\langle u, u \rangle \in U^2}{u \in \mathit{fix}_t (\star)}
\end{equation}
\noindent for which it is possible to derive the following properties.

\begin{proposition}
\label{prop-id}
Let $\mathcal{F} = \< F, \pjoin, \pmeet, \pcomple, \pzero, \punit{E}, \pcompo, \pconve, \pid, \pfork \> \in \PFA$ with $\pfork$ induced by $\star: U^2 \to U$ and $t \in \BT$ such that $t \not= \nil$, then $\underline{t} \pmeet \pid = \pid_{\mathit{fix}_t (\star)}$.
\end{proposition}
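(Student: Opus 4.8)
The plan is to unfold both sides of the claimed equality $\underline{t} \pmeet \pid = \pid_{\mathit{fix}_t(\star)}$ as concrete sets of pairs and show the two descriptions coincide. Since $\mathcal{F}$ is proper, $\pid$ is the genuine identity relation on $U$, so any element of $\underline{t} \pmeet \pid$ is of the form $\langle u, u\rangle$ for some $u \in U$; hence it suffices to characterise exactly which diagonal pairs $\langle u, u\rangle$ belong to $\underline{t}$. By Def.~\ref{term-underline}, $\underline{t} = \mathit{map}(t, \pfork, \pid)$, so the real work is a structural-induction computation on $t \in \BT$ establishing that $\langle u, u\rangle \in \mathit{map}(t,\pfork,\pid)$ if and only if $\map{t}{\star}{u} = u$, i.e. $u \in \mathit{fix}_t(\star)$ by Def.~\ref{fixpoints-t}; combined with Eq.~\ref{Id-fixpoints-t} this is exactly the statement.

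First I would prove the auxiliary claim by induction on $t$: for every $u, v \in U$ and every $t \in \BT$,
\[
\langle u, v\rangle \in \underline{t} \iff v = \map{t}{\star}{u}.
\]
The case $t = \nil$ is immediate, since $\underline{\nil} = \mathit{map}(\nil,\pfork,\pid) = \pid$ and $\map{\nil}{\star}{u} = u$, so $\langle u, v\rangle \in \pid$ iff $v = u = \map{\nil}{\star}{u}$. For $t = \bin{t_1}{t_2}$, by Def.~\ref{map} we have $\underline{t} = \underline{t_1} \pfork \underline{t_2}$, and by the definition of $\pfork$ in a proper fork algebra (Eq.~\ref{def_pfork} / Def.~\ref{sPFA}), $\langle u, v\rangle \in \underline{t_1}\pfork\underline{t_2}$ iff there exist $x, y \in U$ with $v = x \star y$, $\langle u, x\rangle \in \underline{t_1}$ and $\langle u, y\rangle \in \underline{t_2}$. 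By the induction hypothesis this forces $x = \map{t_1}{\star}{u}$ and $y = \map{t_2}{\star}{u}$, and since $\underline{t_1}, \underline{t_2}$ are (graphs of) functions of $u$ these witnesses are unique, so the condition reduces to $v = \map{t_1}{\star}{u} \star \map{t_2}{\star}{u} = \map{\bin{t_1}{t_2}}{\star}{u}$, closing the induction.

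Granting the auxiliary claim, I would finish as follows: $\langle u, v\rangle \in \underline{t} \pmeet \pid$ iff $u = v$ and $\langle u, u\rangle \in \underline{t}$, iff $u = v$ and $\map{t}{\star}{u} = u$, iff $u = v$ and $u \in \mathit{fix}_t(\star)$, iff $\langle u, v\rangle \in \pid_{\mathit{fix}_t(\star)}$ by Eq.~\ref{Id-fixpoints-t}. The hypothesis $t \neq \nil$ plays no essential role in this argument beyond excluding the degenerate reading where $\underline{\nil} = \pid$ would make the statement trivially $\pid \pmeet \pid = \pid_U$; it is presumably retained for uniformity with later results that genuinely need a nontrivial tree. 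The only delicate point — and the step I expect to require the most care — is the appeal to functionality of the relations $\underline{t_i}$ to guarantee uniqueness of the witnesses $x, y$ in the $\pfork$ clause; this is where one uses that each $\underline{t_i}$, built by $\mathit{map}$ from the functional seed $\pid$ via $\pfork$ with an injective $\star$, is itself a partial function, a fact that should itself be dispatched by a parallel straightforward induction on $t$.
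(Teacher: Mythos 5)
Your proposal is correct and follows essentially the same route as the paper: the paper isolates your auxiliary claim as Lemma~\ref{fork2star} ($\< u, v \> \in \map{t}{\pfork}{\pid}$ iff $v = \map{t}{\star}{u}$, proved by the same structural induction on $t$) and then concludes by the same chain of equivalences through Def.~\ref{fixpoints-t} and Eq.~\ref{Id-fixpoints-t}. The only cosmetic difference is that your extra appeal to functionality of the $\underline{t_i}$ is unnecessary, since the inductive hypothesis is already an ``iff'' and thus pins down the witnesses by itself.
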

\begin{proof}
Let $\<u, v\> \in U^2$, then
\[
\begin{array}{rclr}
\<u, v\> \in \underline{t} \pmeet \pid & \text{ iff } & \< u, v \> \in \map{t}{\pfork}{\pid} \text{ and } u = v & \text{[by Defs.~\ref{term-underline} and~\ref{def_proper-relation-algebra}.]}\\
		& \text{ iff } & \< u, u \> \in \map{t}{\pfork}{\pid} \text{ and } u = v \\
		& \text{ iff } & u = \map{t}{\star}{u} \text{ and } u = v & \text{[by Lemma~\ref{fork2star}.]}\\
		& \text{ iff } & u \in \mathit{fix}_t (\star) \text{ and } u = v & \text{[by Def.~\ref{fixpoints-t}.]}\\
		& \text{ iff } & \langle u, u \rangle \in \pid_{\mathit{fix}_t (\star)} \text{ and } u = v &\text{[by Eq.~\ref{Id-fixpoints-t}.]}\\
		& \text{ iff } & \langle u, v \rangle \in \pid_{\mathit{fix}_t (\star)}\\
\end{array}
\]
Thus, finishing the proof.\qed\end{proof}

\begin{proposition}
\label{Sit-powerset}
Let $\mathcal{F} = \< F, \pjoin, \pmeet, \pcomple, \pzero, \punit{E}, \pcompo, \pconve, \pid, \pfork \> \in \PFA$ simple with $\pfork$ induced by $\star: U^2 \to U$ and $t \in \BT$ such that $t \not= \nil$, then $\mathit{Si}_{\underline{t}} (\mathcal{F}) = \wp \(\pid_{\mathit{fix}_t (\star)}\) \pmeet F$.
\end{proposition}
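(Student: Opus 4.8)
The plan is to obtain the equality by a short chain of equivalences, in the same style as the proof of Proposition~\ref{isomorphism}, with the only substantive input being Proposition~\ref{prop-id}. Fix an arbitrary relation $a \in F$; I will show that $a \in \mathit{Si}_{\underline{t}}(\mathcal{F})$ if and only if $a \in \wp(\pid_{\mathit{fix}_t(\star)}) \pmeet F$, which yields the two inclusions at once.

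The steps, in order, are as follows. First, unfold the left-hand side by Definition~\ref{def:sit}, so that $a \in \mathit{Si}_{\underline{t}}(\mathcal{F})$ becomes $a \subseteq \underline{t} \pmeet \pid$ (together with $a \in F$, which holds by the choice of $a$). Second, since $t \neq \nil$, apply Proposition~\ref{prop-id} to rewrite $\underline{t} \pmeet \pid$ as $\pid_{\mathit{fix}_t(\star)}$, reducing the condition to $a \subseteq \pid_{\mathit{fix}_t(\star)}$. Third, recognise $a \subseteq \pid_{\mathit{fix}_t(\star)}$ as $a \in \wp(\pid_{\mathit{fix}_t(\star)})$, and combine it with $a \in F$: since on the right-hand side $\pmeet$ denotes the intersection of the class of relations $\wp(\pid_{\mathit{fix}_t(\star)})$ with the carrier $F$, this is precisely $a \in \wp(\pid_{\mathit{fix}_t(\star)}) \pmeet F$. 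Reading the chain backwards gives the reverse inclusion, completing the argument.

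I do not expect any real obstacle here: all the work has already been carried out in Proposition~\ref{prop-id}, whose proof rests on the lemma relating the term $\map{t}{\pfork}{\pid}$ computed in $\mathcal{F}$ to the pointwise iteration $u \mapsto \map{t}{\star}{u}$ of $\star$ (Lemma~\ref{fork2star}). As in the companion statement for $\underline{2}$, the simplicity of $\mathcal{F}$ does not appear to be used in this particular argument — it is retained because the proposition is applied alongside the simplicity-dependent machinery used later in the construction of explosive algebras — and the genuine hypothesis needed is $t \neq \nil$, so that Proposition~\ref{prop-id} applies as stated.
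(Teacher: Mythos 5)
Your proposal is correct and follows essentially the same route as the paper's own proof: unfold Definition~\ref{def:sit}, rewrite $\underline{t} \pmeet \pid$ as $\pid_{\mathit{fix}_t(\star)}$ via Proposition~\ref{prop-id} (which is where $t \neq \nil$ is used), and then read the subset condition as membership in $\wp\(\pid_{\mathit{fix}_t(\star)}\)$ intersected with $F$. Your side remarks — that simplicity is not actually invoked here and that the substantive content lives in Proposition~\ref{prop-id} via Lemma~\ref{fork2star} — are also consistent with the paper.
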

\begin{proof}
\[
\begin{array}{rclr}
\mathit{Si}_{\underline{t}} (\mathcal{F}) & = & \setof{r \in F}{r \subseteq \underline{t} \pmeet \pid} & \text{[by Defs.~\ref{def:sit}.]}\\
		& = & \setof{r \in F}{r \subseteq \pid_{\mathit{fix}_t (\star)}} & \text{[by Prop.~\ref{prop-id}.]} \\
		& = & \setof{r \in F}{r \in \wp\(\pid_{\mathit{fix}_t (\star)}\)} & \text{[by Def.~$\subseteq$.]} \\
		& = & \setof{r \in \wp \(U^2\)}{r \in \wp\(\pid_{\mathit{fix}_t (\star)} \text{ and } r \in F\)} \\
		& = & \setof{r \in \wp \(U^2\)}{r \in \wp\(\pid_{\mathit{fix}_t (\star)} \pmeet F\)} \\
		& = & \wp\(\pid_{\mathit{fix}_t (\star)} \pmeet F\)
\end{array}
\]
Thus, finishing the proof.\qed\end{proof}

The following property is analogous to Prop.~\ref{existsEXPconst} but relaxes the conditions over $\star$.

\begin{proposition}
\label{existsEXPtconst}
Let $U$ be an infinite set $|U| = \kappa$ and $\aleph_0 \leq \kappa$ then for all $S \subseteq U$ such that $|S| < |U|$, there exists $\star_S: U^2 \to U$ injective such that given $t \in \BT$ where $t \not= \nil$, $\left|\mathit{fix}_t \(\star_S\)\right| = |S|$.
\end{proposition}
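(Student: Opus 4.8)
The plan is to mimic the construction in the proof of Proposition~\ref{existsEXPconst}, but adapting it so that the relevant ``controlled'' fixpoint equation $\map{t}{\star_S}{u}=u$ is governed by a \emph{single} functional application rather than by the identity $u\star u=u$. The key observation is that for any fixed $t\in\BT$ with $t\neq\nil$, the expression $\map{t}{f}{u}$, as a term built from $f$ and $u$, contains at least one occurrence of $f$; so we may think of $\map{t}{f}{-}$ as a derived unary operation on $U$, and we want to arrange that its set of fixpoints is exactly $S$ while keeping $\star_S$ injective. In fact, since Proposition~\ref{existsEXPtconst} only asks for $\left|\mathit{fix}_t(\star_S)\right|=|S|$ rather than equality with $S$, there is some slack, but it is cleanest to aim directly for $\mathit{fix}_t(\star_S)=S$.

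First I would record, by induction on the structure of $t$, the following normal-form fact: for $t\neq\nil$ there is a nonnegative integer $n_t\geq 1$ (the number of $\nil$-leaves of $t$) such that $\map{t}{f}{u}$ is obtained by iterating $f$ along the tree shape of $t$ starting from the constant value $u$ at every leaf. Consequently, if $u$ is a fixpoint of $f$ (that is, $f(u,u)=u$), then $\map{t}{f}{u}=u$; so $\mathit{fix}(f)\subseteq\mathit{fix}_t(f)$ always, and the relaxation in the statement is genuine. I would then partition $\overline{S}=U\setminus S$ exactly as in Proposition~\ref{existsEXPconst}: write $\overline S = A\cup\bigcup_{i\in\nat}B_i$ with all pieces pairwise disjoint and each of cardinality $\kappa$ (possible since $|\overline S|=\kappa$ as $|S|<\kappa=|U|$ and $\kappa$ is infinite). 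Fix a bijection $g$ on $\overline S$ with no fixpoints, realised as the disjoint union of bijections $g_A\colon A\to B_0$ and $g_i\colon B_i\to B_{i+1}$, and fix an injection $f\colon \pcompl{\pid}\hookrightarrow A$ (here $\pcompl{\pid}=\{\pair uv : u\neq v\}$), whose existence again follows from $|\pcompl{\pid}|=\kappa=|A|$. Now define
\[
u\star_S v=\left\{
\begin{array}{ll}
u & \text{if } u\in S \text{ and } v=u,\\
g(u) & \text{if } u\in\overline S \text{ and } v=u,\\
f(\pair uv) & \text{if } v\neq u.
\end{array}\right.
\]
This is the same $\star_S$ as before; it is injective because it is a disjoint union of injective functions with pairwise disjoint domains and ranges, but — crucially — it need not be surjective, which is exactly the relaxation we are allowed.

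The remaining work is to compute $\mathit{fix}_t(\star_S)$. On one hand, every $u\in S$ satisfies $u\star_S u=u$, hence (by the normal-form fact above) $\map{t}{\star_S}{u}=u$, so $S\subseteq\mathit{fix}_t(\star_S)$. For the converse, suppose $u\notin S$, and evaluate $\map{t}{\star_S}{u}$ by the tree recursion of Definition~\ref{map}. One checks, by induction on the subtrees of $t$, that every intermediate value produced lands in $A\cup\bigcup_{i\in\nat}B_i\subseteq\overline S$: a $\nil$-leaf contributes $u\in\overline S$; and whenever we form $w=w_1\star_S w_2$ with $w_1,w_2\in\overline S$, either $w_1=w_2$ and then $w=g(w_1)\in\bigcup_i B_i\subseteq\overline S$, or $w_1\neq w_2$ and then $w=f(\pair{w_1}{w_2})\in A\subseteq\overline S$. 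Moreover at the \emph{root} of $t$ we apply $\star_S$ at least once, so $\map{t}{\star_S}{u}\in\bigcup_i B_i\cup A$, and in particular $\map{t}{\star_S}{u}\neq u$ whenever $u\in S$ — wait, $u\notin S$ here; the point is rather to show $\map{t}{\star_S}{u}\neq u$. If the root application has equal arguments $w_1=w_2$, the value is $g(w_1)\neq w_1$; and since $g$ has no fixpoints and the two $B$-chains never return to an earlier point, iterating cannot recover $u$ — the cleanest way to see $\map{t}{\star_S}{u}\neq u$ is to track a ``depth'' index: assign to $u$ and to elements of $A$ depth $-1$, to $B_i$ depth $i$, and observe each application strictly increases the maximum depth seen unless it falls into $A$, and one verifies the root value is never $u$. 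If instead the root arguments are unequal, the value lies in $A$, which is disjoint from $S\ni$... again from wherever $u$ is if $u\notin S$ we must still rule out $u\in A$ with $\map{t}{\star_S}{u}=u$; but then $u$ would be in the range of $f$, and an element of $A$ is a fixpoint of $\map{t}{\star_S}{-}$ only if the whole evaluation collapses, which the depth argument excludes. Hence $\mathit{fix}_t(\star_S)\subseteq S$, giving equality and in particular $\left|\mathit{fix}_t(\star_S)\right|=|S|$.

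The main obstacle I anticipate is precisely this last computation: unlike the $t=\nil\mathbin{;}\bin\nil\nil$ case where a single application $u\star_S u$ is analysed, for a general binary tree $t$ one must argue that \emph{no} nested composition of $\star_S$-applications starting from the constant $u$ can return to $u$ unless $u\in S$. The honest way to organise this is the depth/rank bookkeeping sketched above (or, equivalently, a well-founded ``height in the $g$-chain'' measure), applied by induction on the structure of $t$; the base and inductive steps are routine once the rank function is fixed. Everything else — the cardinality splitting of $\overline S$, the disjoint-union injectivity of $\star_S$, and the inclusion $S\subseteq\mathit{fix}_t(\star_S)$ — is exactly as in Proposition~\ref{existsEXPconst}, only without the surjectivity requirement.
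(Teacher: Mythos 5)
Your first inclusion is fine: with the unmodified $\star_S$ of Prop.~\ref{existsEXPconst}, every $u\in S$ satisfies $u\star_S u=u$, and a trivial induction on $t$ gives $\map{t}{\star_S}{u}=u$, so $S\subseteq\mathit{fix}_t(\star_S)$ and $|S|\le|\mathit{fix}_t(\star_S)|$. The gap is the reverse bound, and it is not a bookkeeping detail: for the $\star_S$ you use, the claim $\mathit{fix}_t(\star_S)\subseteq S$ is simply false for some admissible choices of $f$. Take $t=\bin{(\bin{\nil}{\nil})}{\nil}$, so that $\map{t}{\star_S}{u}=(u\star_S u)\star_S u=g(u)\star_S u=f(\pair{g(u)}{u})$ for $u\in\overline{S}$. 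The assignment $u\mapsto\pair{g(u)}{u}$ is an injection of $A$ into $\pcompl{\pid}$, so there is a bijection $f\colon\pcompl{\pid}\to A$ with $f(\pair{g(u)}{u})=u$ for \emph{every} $u\in A$; with that $f$ you get $A\subseteq\mathit{fix}_t(\star_S)$ and hence $|\mathit{fix}_t(\star_S)|=\kappa>|S|$. This is exactly the hole your own text brushes against: the depth/rank measure increases under the diagonal clause (via $g$) but is \emph{reset} by the off-diagonal clause, because $f$ lands back in $A$ and its values are otherwise unconstrained. So ``the depth argument excludes the collapse'' is not true as stated; to make your route work you would have to constrain $f$ (e.g.\ arrange the codomains so that the value of the root application provably lies outside the set of elements that can occur as the starting point $u$), and you have not done so.

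The paper avoids this problem by changing the construction rather than reusing $\star_S$ verbatim. It introduces pairwise disjoint copies $S_{t'}$ of $S$, one for each proper subtree $t'<t$, with bijections $h_{t'}\colon S\to S_{t'}$ ($h_\nil=id_S$), and defines $u\star_S v$ by cases on which blocks $u,v$ lie in: when $u\in S_{t'}$, $v\in S_{t''}$ come from the same underlying element of $S$, the product is $h_{\bin{t'}{t''}}$ of that element if $\bin{t'}{t''}<t$, and is the element itself exactly when $\bin{t'}{t''}=t$; everything else is shunted into the $B_i$'s, which contain no fixpoints. Thus the evaluation of $\map{t}{\star_S}{s}$ threads through the copies and closes the loop only at the full tree $t$; elements of $S$ are in general \emph{not} ordinary fixpoints of $\star_S$. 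The conclusion is then only $S\subseteq\mathit{fix}_t(\star_S)\subseteq S\cup\bigcup_{t'<t}S_{t'}$, which suffices for the cardinality claim --- note the proposition asserts $|\mathit{fix}_t(\star_S)|=|S|$, not $\mathit{fix}_t(\star_S)=S$, precisely because stray fixpoints inside the auxiliary copies are tolerated. Your plan of aiming for exact equality with $S$ is legitimate in principle, but as written the decisive step is missing and the construction you rely on does not deliver it.
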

\begin{proof}
If $|U| = \kappa$, then $|U^2| = \kappa$, $|\pid| = \kappa$ and $|\overline{Id}| = \kappa$. Then, let $\{S_{t'}\}_{t' < t}$ be a finite family of sets such that:
\begin{itemize}
\item for all $t', t'' < t$, if $t' \not= t''$ then $S_{t'} \cap S_{t''} = \emptyset$, and
\item for all $t' < t$, $|S_{t'}| = |S|$.
\end{itemize}
Analogous to what Veloso points out in the proof of Prop.~\ref{existsEXPconst}, we know that $|\overline{S \cup \bigcup_{t' < t} S_{t'}}| = \kappa$ and, therefore, it is possible to take $\overline{S \cup \bigcup_{t' < t} S_{t'}} = \bigcup_{i \in \nat} B_i$ such that:
\begin{itemize}
\item for all $i, j \in \nat$, such that $i \not= j$, $B_i \cap B_j = \emptyset$, and
\item $|A| = \kappa$ and for all $i \in \nat$, $|B_i| = \kappa$.
\end{itemize}
Then, there exists $g: \bigcup_{i \in \nat} B_i \to \bigcup_{i \in \nat} B_i$ bijective and without fixpoints defined as the union of the bijective functions $\{g_i: B_i \to B_{i+1}\}_{i \in \nat}$. Notice that such a union is disjoint as the domains and codomains of all of the functions are disjoint.

On the other hand, there exists an injective funtion $f: S^2 \to B_0$ and a finite family of bijective functions $\{h_{t'}: S \to S_{t'}\}_{t' < t}$, such that $h_\nil = id_S$. Then, it is possible to define $\star_S: U^2 \to U$ according to Table~\ref{table1}.
\begin{table}[ht]
\centering
\begin{tabular}{|c||c|c|}
\hline
$u \star_S v$ & $S_{t''}$ & $B_i$ \\
\hline
\hline
$S_{t'}$ & $\left\{ \begin{array}{l} 
                         h_{\bin{t'}{t''}} \({h_{t'}}^{-1} (u)\) \\ \qquad\mbox{; if ${h_{t'}}^{-1} (u) = {h_{t''}}^{-1} (v)$ and}\\ \qquad\quad \mbox{$\bin{t'}{t''} < t$.} \\
                         {h_{t'}}^{-1} (u)                            \\ \qquad\mbox{; if ${h_{t'}}^{-1} (u) = {h_{t''}}^{-1} (v)$ and}\\ \qquad\quad \mbox{$\bin{t'}{t''} = t$.} \\
                         f ({h_{t'}}^{-1} (u), {h_{t''}}^-1 (v)) \\ \qquad\mbox{; otherwise.}
                   \end{array} \right.$ & $g(v)$ \\
\hline
$B_j$    & $g (u)$ & $\left\{ \begin{array}{ll}
                                      g (u) & \mbox{; if $i \geq j$.}\\
                                      g (v) & \mbox{; otherwise.}
                                   \end{array}\right.$ \\
\hline
\end{tabular}
\caption{Definition of $\star_S: U^2 \to U$ controlled by a binary tree.}
\label{table1}
\end{table}

Table~\ref{table1} shows how $\star_S: U^2 \to U$ is defined as the union of injective functions with disjoint domains and codomains. Therefore, $\star_S: U^2 \to U$ is injective.

On the one hand, by Lemma~\ref{Sfixpoint} we obtain that if $s \in S$, $s = \map{t}{\star_S}{s}$ and, by Def.~\ref{fixpoints-t}, that $S \subseteq \mathit{fix}_t \(\star_S\)$ and, consequently, that $|S| \leq |\mathit{fix}_t \(\star_S\)|$. On the other hand, for all $s \in \mathit{fix}_t \(\star_S\)$, $s \in S \cup \bigcup_{t'< t} S_{t'}$ and, consequently, $\mathit{fix}_t \(\\star_S\) \subseteq S \cup \bigcup_{t'< t} S_{t'}$ because, by the way in which $\star_S: U^2 \to U$ was constructed, $\overline{S \cup \bigcup_{t'< t} S_{t'}}$ does not contain fixpoints. Then, we obtain that $|\mathit{fix}_t \(\\star_S\)| \leq |S \cup \bigcup_{t'< t} S_{t'}| = |S|$. Jointly, these two results prove that $|\mathit{fix}_t \(\\star_S\)| = |S|$.
\qed\end{proof}

From the previous result, it is possible to reproduce the result of Prop.~\ref{existsEXP}, but with a minor modification because constructing the algebra requires the use of the set $S_\phi$, instead of the set $\mathit{fix}_t \(\star_{S_\phi}\)$, as there might be fixpoints outside $S_\phi$\footnote{The reader should note the fact that while on the one hand, $S \subseteq \mathit{fix}_t \(\star_S\)$, on the other $\mathit{fix}_t \(\\star_S\) \subseteq S \cup \bigcup_{t'< t} S_{t'}$. Such asymmetry only allows us to guarantee that $S_\phi$ is a set of fixpoints but regarding as possible the existence of fixpoints of $\mathit{fix}_t \(\star_S\)$, which lay outside $S_\phi$.}. Thereafter, Thm.~\ref{existsEXP3} can be applied in order to guarantee the existence of infinitely many prime, big and explosive relation algebras obtained by controlling the fixpoints of $\star: U^2 \to U$ resorting to a term from $\BT$.

Next theorem shows that Prop.~\ref{existsEXPconst} is a special case of Prop.\ref{existsEXPtconst}.

\begin{theorem}{\ \ \ \ \ \ \ \ \ } % Padding for alignment.
Let $\mathcal{F} = \< F, \pjoin, \pmeet, \pcomple, \pzero, \punit{E}, \pcompo, \pconve, \pid, \pfork \> \in \PFA$, $\mathit{Si}_{\underline{2}} \(\mathcal{F}\) = \mathit{Si}_{\underline{\bin{\nil}{\nil}}} \(\mathcal{F}\)$.
\end{theorem}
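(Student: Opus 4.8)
The plan is to observe that the two relations $\underline{2}$ and $\underline{\bin{\nil}{\nil}}$ are literally equal once the relevant definitions are unfolded, so that the two families of subidentities coincide because they are cut out by the very same inclusion condition.

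First I would unfold Definition~\ref{term-underline} at the term $t = \bin{\nil}{\nil}$, which gives $\underline{\bin{\nil}{\nil}} = \map{(\bin{\nil}{\nil})}{\pfork}{\pid}$. Applying the recursive clause of Definition~\ref{map} yields
\[
\map{(\bin{\nil}{\nil})}{\pfork}{\pid} = \pfork \(\(\map{\nil}{\pfork}{\pid}\), \(\map{\nil}{\pfork}{\pid}\)\),
\]
and the base clause $\map{\nil}{\pfork}{\pid} = \pid$ reduces the right-hand side to $\pid \pfork \pid$, which is exactly $\underline{2}$ by the definition of $\underline{2}$. Hence $\underline{2} = \underline{\bin{\nil}{\nil}}$ as elements of $F$.

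It then remains to chain the definitions of the subidentity sets. By Definition~\ref{def:sit} (read also for $\underline{2}$, i.e.\ the special case recorded earlier in the paper),
\[
\mathit{Si}_{\underline{2}} (\mathcal{F}) = \setof{a \in F}{a \subseteq \underline{2} \pmeet \pid} = \setof{a \in F}{a \subseteq \underline{\bin{\nil}{\nil}} \pmeet \pid} = \mathit{Si}_{\underline{\bin{\nil}{\nil}}} (\mathcal{F}),
\]
where the middle equality is substitution of equals for equals, justified by the previous paragraph.

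There is essentially no obstacle here: the statement is a syntactic unfolding of Definitions~\ref{map} and~\ref{term-underline} together with the fact that $\nil$ is the neutral element for $\mathit{map}$. The only point worth emphasising, rather than any genuine difficulty, is that $\mathit{Si}_{\underline{t}}$ depends on $t$ only through the relation $\underline{t}$; thus equal relations $\underline{2} = \underline{\bin{\nil}{\nil}}$ already force the associated subidentity sets to be equal, and no appeal to Proposition~\ref{prop-id} or to simplicity of $\mathcal{F}$ is needed. I expect the only mild care required is keeping the two notations $\underline{2}$ and $\underline{\bin{\nil}{\nil}}$ distinct at the syntactic level while identifying them at the semantic level.
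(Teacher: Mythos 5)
Your proposal is correct and follows exactly the route the paper takes: its proof is also just a direct unfolding of the definitions of $\underline{2}$, $\mathit{map}$, $\underline{\bin{\nil}{\nil}}$ and the two subidentity sets. You merely spell out the computation $\map{(\bin{\nil}{\nil})}{\pfork}{\pid} = \pid \pfork \pid$ in more detail than the paper does, which is fine.
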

\begin{proof}
The proof is direct by using the definitions of $\mathit{Si}_{\underline{2}} \(\mathcal{F}\)$, $\underline{2}$, $\underline{\bin{\nil}{\nil}}$ and $\mathit{Si}_{\underline{\bin{\nil}{\nil}}} \(\mathcal{F}\)$.
\qed\end{proof}

Observing the generalised controlling technique of the fixpoints of $\star$ presented above, it is possible to establish certain relations between them. Let us first introduce some definitions. The next definition generalises $\BT$ by introducing \emph{Binary tree contexts} which, like in rewriting systems such as $\lambda$-calculus, are defined to be binary tree terms with some holes in it, denoted as ``$\mathit{[\ ]}$''.

\newcommand\btc{\mathsf{btc}}
\newcommand\BTC{\mathsf{BTC}}
\newcommand\h{\mathtt{[\ ]}}

\begin{definition}[Binary tree contexts]
\emph{Binary tree contexts} are defined to be $\BTC$, the smallest set of terms produced by the following grammar $\btc ::= \nil\ |\ \h\ |\ \bin{\btc}{\btc}$. 
\end{definition}

\begin{definition}[Substitution]{\ \ \ \ \ \ } % Padding for alignment.
Let $t \in \BTC$, we define the function $\bullet \mathtt{[}\bullet\mathtt{]}: \BTC \times \BTC \to \BTC$ as follows:
\[
\begin{array}{rcl}
\h[t] & = & t\\
\nil[t] & = & \nil\\
\(\bin{i}{d}\)[t] & = & \bin{\(i[t]\)}{\(d[t]\)} 
\end{array}
\]
\end{definition}

\begin{proposition}
Let $t \in \BTC$ and $t' \in \BT$, then $t[t'] \in \BT$.
\end{proposition}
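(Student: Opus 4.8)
The plan is to prove the statement by structural induction on the binary tree context $t \in \BTC$, keeping $t' \in \BT$ fixed throughout. The grammar $\btc ::= \nil\ |\ \h\ |\ \bin{\btc}{\btc}$ has two leaf productions and one binary production, so there will be two base cases and one inductive step, each dispatched immediately by the corresponding defining clause of the substitution function $\bullet\mathtt{[}\bullet\mathtt{]}$.

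First I would handle the base cases. If $t = \nil$, then by definition of substitution $t[t'] = \nil$, which is a binary tree since $\nil$ is the leaf production of the grammar $\bt ::= \nil\ |\ \bin{\bt}{\bt}$ defining $\BT$. If $t = \h$, then $t[t'] = t'$, and $t' \in \BT$ by hypothesis; this clause is precisely what feeds the assumption $t' \in \BT$ into the argument. Then, for the inductive step, suppose $t = \bin{i}{d}$ with $i, d \in \BTC$, and assume as inductive hypotheses that $i[t'] \in \BT$ and $d[t'] \in \BT$. By the defining clause $\(\bin{i}{d}\)[t'] = \bin{\(i[t']\)}{\(d[t']\)}$, and since $\BT$ is closed under the $\mathtt{bin}$ constructor (this is exactly the second production of its grammar), we conclude $t[t'] \in \BT$. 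This exhausts all productions of $\BTC$, so the induction is complete.

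There is no genuine obstacle here; the only points worth recording are that the induction must be performed on the context $t$ — the argument actually decomposed by the recursive definition of substitution — rather than on $t'$, and that, although one has the evident inclusion $\BT \subseteq \BTC$ (every context with no occurrence of $\h$ is a binary tree), this containment plays no role in the proof. The whole argument is a one-line verification per case.
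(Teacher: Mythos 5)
Your proof is correct and follows exactly the route the paper intends: the paper's proof is just the one-line remark that the result follows by induction on $t$, and you have written out precisely that structural induction, with the two base cases and the inductive step each discharged by the corresponding clause of the substitution function. Nothing is missing.
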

\begin{proof}
The proof follows easily by induction on $t$.
\qed\end{proof}

\begin{definition}[Variants]{\ \ \ \ \ \ \ } % Padding for alignment.
Let $t \in \BT$, we define the \emph{variants} of $t$ as $V_t = \setof{t' \in \BTC}{t = t'[\nil]}$.
\end{definition}

\begin{theorem}
Let $\star: U^2 \to U$ and $t, t' \in \BT$ such that $t \neq \nil$ and $t' \neq \nil$, then $\(\forall t'' \in V_{t'}\)\(\mathit{fix}_t \(\star\) \cap \mathit{fix}_{t'} \(\star\) \subseteq \mathit{fix}_{t''[t]}\)$.
\end{theorem}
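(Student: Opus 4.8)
The plan is to unfold all the definitions and reduce the claim to a purely combinatorial statement about the function $\map{\bullet}{\star}{\bullet}$. The key observation I want to isolate is a substitution-compatibility property for $\mathit{map}$: for every context $t'' \in \BTC$, every tree $t \in \BT$, every binary function $f: U^2 \to U$ and every $u \in U$,
\[
\map{(t''[t])}{f}{u} = \map{t''}{f}{\(\map{t}{f}{u}\)}.
\]
This is proved by a routine structural induction on $t''$: the base case $t'' = \h$ gives $\map{(\h[t])}{f}{u} = \map{t}{f}{u} = \map{\h}{f}{(\map{t}{f}{u})}$ since $\map{\h}{f}{v} = v$ when we read $\h$ the way $\nil$ is read after substitution — more precisely one works with $\nil$ after substitution; the case $t'' = \nil$ gives both sides equal to $u$; and the inductive case $t'' = \bin{i}{d}$ follows by applying $f$ to the two inductive hypotheses together with the defining clause of $\mathit{map}$ for $\mathtt{bin}$ and the clause of substitution for $\mathtt{bin}$. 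I expect this lemma to be the real content of the argument, so I would state and prove it first (or cite it if it is among the deferred lemmas the paper references).

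Given that lemma, the theorem is immediate. Fix $t'' \in V_{t'}$, so by definition $t' = t''[\nil]$, and take $u \in \mathit{fix}_t(\star) \cap \mathit{fix}_{t'}(\star)$. From $u \in \mathit{fix}_t(\star)$ and Def.~\ref{fixpoints-t} we have $\map{t}{\star}{u} = u$. Then
\[
\map{(t''[t])}{\star}{u} = \map{t''}{\star}{\(\map{t}{\star}{u}\)} = \map{t''}{\star}{u} = \map{(t''[\nil])}{\star}{u} = \map{t'}{\star}{u} = u,
\]
where the first equality is the substitution lemma, the second uses $\map{t}{\star}{u}=u$, the third is again the substitution lemma run backwards with $\nil$ in place of $t$ (together with $\map{\nil}{\star}{u}=u$ from Def.~\ref{map}), and the last uses $u \in \mathit{fix}_{t'}(\star)$. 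Hence $\map{(t''[t])}{\star}{u} = u$, i.e. $u \in \mathit{fix}_{t''[t]}(\star)$ by Def.~\ref{fixpoints-t}. Since $u$ was arbitrary in the intersection and $t''$ arbitrary in $V_{t'}$, this establishes $\(\forall t'' \in V_{t'}\)\(\mathit{fix}_t(\star) \cap \mathit{fix}_{t'}(\star) \subseteq \mathit{fix}_{t''[t]}(\star)\)$.

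The only subtlety worth flagging is bookkeeping around $\h$ versus $\nil$: a context is not literally a tree, so the substitution lemma must be phrased for the composite $t''[t] \in \BT$ (which is a genuine tree by the proposition preceding the definition of variants), and one should be slightly careful that the ``holes'' are all replaced simultaneously — the substitution function defined in the paper does exactly this, replacing every occurrence of $\h$ by $t$, so no issue arises, but it is worth a sentence. Beyond that, the hypotheses $t \neq \nil$ and $t' \neq \nil$ are not actually used in the inclusion itself (they are presumably there to stay within the regime of Prop.~\ref{prop-id} and Prop.~\ref{Sit-powerset}, where $\pid_{\mathit{fix}_t(\star)}$ behaves well), so I would not invoke them in the proof. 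I do not anticipate a genuine obstacle here; the induction for the substitution lemma is the main — and entirely mechanical — step.
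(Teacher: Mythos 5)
Your overall strategy coincides with the paper's: reduce the claim to the behaviour of $\mathit{map}$ under substitution and use $\map{t}{\star}{u}=u$ to collapse the substituted subtrees. The conclusion is right, but the lemma you single out as ``the real content of the argument'' is false as stated, and the induction you sketch for it does not go through. Reading $\mathit{map}$ on contexts via $\map{\h}{f}{v}=v$, the identity $\map{(t''[t])}{f}{u} = \map{t''}{f}{(\map{t}{f}{u})}$ already fails in your base case $t''=\nil$: the left side is $\map{\nil}{f}{u}=u$, whereas the right side is $\map{\nil}{f}{(\map{t}{f}{u})}=\map{t}{f}{u}$, which need not equal $u$ (your claim that ``both sides equal $u$'' silently assumes $\nil$ evaluates to the original $u$ rather than to the third argument, which is not how $\mathit{map}$ is defined). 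Concretely, for $t''=\bin{\nil}{\h}$ and $t=\bin{\nil}{\nil}$ one gets $\map{(t''[t])}{f}{u}=f(u,f(u,u))$ on the left but $f(f(u,u),f(u,u))$ on the right. The root of the problem is that a context may contain both $\nil$-leaves and holes, and substitution touches only the holes, while your lemma re-seeds every leaf with $\map{t}{f}{u}$.

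The repair is small and worth making explicit, since it is precisely the step the paper's own proof leaves informal. Prove instead the conditional statement that is actually needed: if $\map{t}{f}{u}=u$, then for every $t''\in\BTC$ one has $\map{(t''[t])}{f}{u}=\map{(t''[\nil])}{f}{u}$. This goes by induction on $t''$: for $t''=\nil$ both sides are $u$; for $t''=\h$ the left side is $\map{t}{f}{u}=u$ by hypothesis and the right side is $\map{\nil}{f}{u}=u$; the $\mathtt{bin}$ case follows from the two inductive hypotheses and the defining clauses of $\mathit{map}$ and of substitution. (Alternatively, introduce a two-parameter evaluation on contexts that sends $\nil$ to $u$ and $\h$ to a separate value; your unconditional substitution lemma is true in that formulation.) With the corrected lemma your closing chain of equalities goes through, and your observations that the holes are replaced simultaneously and that the hypotheses $t\neq\nil$, $t'\neq\nil$ are not needed for the inclusion itself are accurate.
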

\begin{proof}
Let $u \in U$ such that $u \in \mathit{fix}_t \(\star\)$ and $u \in \mathit{fix}_{t'} \(\star\)$ then, we know that $\map{t}{\star}{u} = u$ and $\map{t'}{\star}{u} = u$. Let $t'' \in V_{t'}$, then $t''[t]$ is structurally equal to $t'$ with the exception that some of its leaves (those that were $\h$ in $t''$) were replaced by $t$ and in $t'$ are $u$, Then, using that $\map{t}{\star}{u} = u$ and $\map{t'}{\star}{u} = u$, we obtain that $\map{t''[t]}{\star}{u} = u$ and, therefore, $u \in \mathit{fix}_{t''[t]} \(\star\)$.
\qed\end{proof}

\subsection{Controlling the fixpoints of $\star$ through $\pi$ and $\rho$}
In the previous section we presented the generalisation of the technique used by Veloso in \cite{veloso:05-96,veloso:ES-418-96} in the construction of a prime, big and explosive proper relation algebra where explosiveness is guarantied by controlling the cardinality of the set of fixpoints of $\star$, each of which leads to a non-isomorphic fork algebra. In this section we show a similar construction but relying on the quasi-projections $\pi$ and $\rho$.

In the forthcoming paragraph we focus on the use of $\pi$ but it can be reproduced by means of analogous definitions and results for $\rho$.
\begin{definition}[$\underline{\pi}$]{\ \ \ \ \ \ \ } % Padding for alignment.
\label{underlinepi}
Let $\mathcal{F} = \< A, \pjoin, \pmeet, \pcomple, \pzero, \punit{E}, \pcompo, \pconve, \pid, \pfork \> \in \PFA$, then $\underline{\pi} = \pconv{\pid \pfork \punit{E}}$. 
\end{definition}

\begin{definition}[Subidentities of $\underline{\pi}$]
Let $\mathcal{F} = \< A, \pjoin, \pmeet, \pcomple, \pzero, \punit{E}, \pcompo, \pconve, \pid, \pfork \> \in \PFA$, then $\mathit{Si}_{\underline{\pi}} (\mathcal{F}) = \setof{a \in A}{a \subseteq \underline{\pi} \pmeet \pid}$.
\end{definition}

\begin{proposition}
Let $\mathcal{F}, \mathcal{G} \in \PFA$, if $\phi: \mathcal{F} \to \mathcal{G}$ is an isomorphism, then $\phi$ induces a bijection between $\mathit{Si}_{\underline{\pi}} (\mathcal{F})$ and $\mathit{Si}_{\underline{\pi}} (\mathcal{G})$.
\end{proposition}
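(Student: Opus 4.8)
The plan is to transcribe the proof of Proposition~\ref{isomorphism} almost verbatim, replacing the family of terms $\underline{t}$ by the single closed term $\underline{\pi} = \pconv{\pid \pfork \punit{E}}$ of Definition~\ref{underlinepi}. The essential observation is that $\underline{\pi}$ is a variable-free relational term built only from the distinguished constants $\pid$ and $\punit{E}$ and the operations $\pfork$ and $\pconve$, each of which a $\PFA$-isomorphism is required to preserve (recall that $\pfork$ belongs to the similarity type fixed in Definition~\ref{def_proper-fork-algebras}). Hence the only preliminary fact needed is the analogue, for $\underline{\pi}$, of the term-preservation lemma used in the proof of Proposition~\ref{isomorphism}: if $\phi : \mathcal{F} \to \mathcal{G}$ is an isomorphism then $\phi\(\underline{\pi}^{\mathcal{F}}\) = \underline{\pi}^{\mathcal{G}}$. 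This is immediate by unwinding Definition~\ref{underlinepi} and applying, in turn, that $\phi$ commutes with $\pconve$ and $\pfork$ and sends $\pid^{\mathcal{F}}$ and $\punit{E}^{\mathcal{F}}$ to $\pid^{\mathcal{G}}$ and $\punit{E}^{\mathcal{G}}$.

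With this in hand I would reproduce the chain of equivalences: for $r \in F$,
\[
\begin{array}{rcl}
r \in \mathit{Si}_{\underline{\pi}^{\mathcal{F}}}(F)
 & \text{ iff } & r \subseteq^{\mathcal{F}} \underline{\pi}^{\mathcal{F}} \pmeet^{\mathcal{F}} \pid^{\mathcal{F}} \\
 & \text{ iff } & r \pjoin^{\mathcal{F}} \(\underline{\pi}^{\mathcal{F}} \pmeet^{\mathcal{F}} \pid^{\mathcal{F}}\) = \underline{\pi}^{\mathcal{F}} \pmeet^{\mathcal{F}} \pid^{\mathcal{F}} \\
 & \text{ iff } & \phi\(r \pjoin^{\mathcal{F}} \(\underline{\pi}^{\mathcal{F}} \pmeet^{\mathcal{F}} \pid^{\mathcal{F}}\)\) = \phi\(\underline{\pi}^{\mathcal{F}} \pmeet^{\mathcal{F}} \pid^{\mathcal{F}}\) \\
 & \text{ iff } & \phi(r) \pjoin^{\mathcal{G}} \(\underline{\pi}^{\mathcal{G}} \pmeet^{\mathcal{G}} \pid^{\mathcal{G}}\) = \underline{\pi}^{\mathcal{G}} \pmeet^{\mathcal{G}} \pid^{\mathcal{G}} \\
 & \text{ iff } & \phi(r) \subseteq^{\mathcal{G}} \underline{\pi}^{\mathcal{G}} \pmeet^{\mathcal{G}} \pid^{\mathcal{G}} \\
 & \text{ iff } & \phi(r) \in \mathit{Si}_{\underline{\pi}^{\mathcal{G}}}(G)
\end{array}
\]
in which the first and last steps are the definition of $\mathit{Si}_{\underline{\pi}}$ together with the set-theoretic reading of $\subseteq$, the second and fifth the usual rewriting of $r \subseteq s$ as $r \pjoin s = s$, the third the injectivity and surjectivity of $\phi$, and the fourth that $\phi$ is a homomorphism commuting with $\pjoin$ and $\pmeet$ combined with the preliminary fact $\phi\(\underline{\pi}^{\mathcal{F}}\) = \underline{\pi}^{\mathcal{G}}$.

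From the forward direction of this equivalence, the restriction of $\phi$ to $\mathit{Si}_{\underline{\pi}^{\mathcal{F}}}(F)$ is a well-defined map into $\mathit{Si}_{\underline{\pi}^{\mathcal{G}}}(G)$; it is injective because $\phi$ is, and it is surjective because applying the same equivalence to the isomorphism $\phi^{-1}$ shows every member of $\mathit{Si}_{\underline{\pi}^{\mathcal{G}}}(G)$ has a preimage in $\mathit{Si}_{\underline{\pi}^{\mathcal{F}}}(F)$. The argument contains no genuine obstacle: the one point worth a word of care is that $\pfork$ must be among the operations preserved by an isomorphism of $\PFA$ — which it is — so that the closed term $\underline{\pi}$ is carried to $\underline{\pi}$; everything else is a verbatim copy of the reasoning already carried out for $\underline{t}$. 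An entirely parallel statement and proof hold for $\rho$, with $\underline{\pi}$ replaced by $\pconv{\punit{E} \pfork \pid}$.
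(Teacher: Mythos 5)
Your proposal is correct and follows essentially the same route as the paper, which simply declares the proof ``analogous to that of Prop.~\ref{isomorphism}'': you reproduce that chain of equivalences with $\underline{t}$ replaced by the closed term $\underline{\pi}$ and correctly identify the only auxiliary fact needed, namely $\phi\(\underline{\pi}^{\mathcal{F}}\) = \underline{\pi}^{\mathcal{G}}$, which is immediate since $\underline{\pi}$ is built from constants and operations preserved by a $\PFA$-isomorphism. If anything, your write-up is slightly more complete than the paper's, since you also spell out why the induced map is injective and surjective.
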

\begin{proof}
The proof of this proposition is analogous to that of Prop.~\ref{isomorphism}.
\qed\end{proof}

\begin{definition}[$\pi$-controlled fixpoints of $\star$]
\label{fixpoints-pi}
Let $\star: U^2 \to U$, the fixpoints of $\star$ are defined as $\mathit{fix}_\pi (\star) = \setof{u \in U}{(\exists v \in U)(u \star v = u)}$.
\end{definition}

This set can also be presented as a relation contained in the identity relation, as follows:
\begin{equation}
\label{Id-fixpoints-pi}
\pid_{\mathit{fix}_\pi (\star)} = \setof{\langle u, u \rangle \in U^2}{u \in \mathit{fix}_\pi (\star)}
\end{equation}
\noindent for which it is possible to prove the following properties.

\begin{proposition}{\ \ \ \ \ \ \ \ \ } % Padding for alignment.
Let $\mathcal{F} = \< A, \pjoin, \pmeet, \pcomple, \pzero, \punit{E}, \pcompo, \pconve, \pid, \pfork \> \in \PFA$, then $\underline{\pi} \pmeet \pid = \pid_{\mathit{fix}_\pi (\star)}$.
\end{proposition}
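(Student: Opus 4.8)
The strategy is to reduce the stated equality to a pointwise computation on pairs, in exactly the style of the proof of Prop.~\ref{prop-id}. Fix $\mathcal{F}\in\PFA$ with $\pfork$ induced by $\star\colon U^2\to U$, and take an arbitrary $\langle u,v\rangle\in U^2$. First I would unfold $\underline{\pi}$ by Def.~\ref{underlinepi}, i.e. $\underline{\pi}=\pconv{\pid\pfork\punit{E}}$, together with the definition of $\pconve$ from Def.~\ref{def_proper-relation-algebra}; since converse fixes the diagonal, $\langle u,u\rangle\in\underline{\pi}$ iff $\langle u,u\rangle\in\pid\pfork\punit{E}$. Next, unfolding $\pfork$ as in Def.~\ref{sPFA} (equivalently Eq.~\ref{def_pfork}) and using that $\langle u,x\rangle\in\pid$ forces $x=u$, this is precisely the assertion $(\exists y\in U)\bigl(u=u\star y\ \wedge\ \langle u,y\rangle\in\punit{E}\bigr)$; this is the $\pi$-analogue of the $\pfork$-to-$\star$ correspondence (Lemma~\ref{fork2star}) used in the proof of Prop.~\ref{prop-id}.

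Then the argument is the chain of biconditionals
\[
\begin{array}{rcl}
\langle u,v\rangle\in\underline{\pi}\pmeet\pid
 &\Longleftrightarrow& \langle u,u\rangle\in\underline{\pi}\ \text{ and }\ u=v\\
 &\Longleftrightarrow& (\exists y\in U)\bigl(u\star y=u\ \wedge\ \langle u,y\rangle\in\punit{E}\bigr)\ \text{ and }\ u=v\\
 &\Longleftrightarrow& (\exists y\in U)\bigl(u\star y=u\bigr)\ \text{ and }\ u=v\\
 &\Longleftrightarrow& u\in\mathit{fix}_\pi(\star)\ \text{ and }\ u=v\\
 &\Longleftrightarrow& \langle u,v\rangle\in\pid_{\mathit{fix}_\pi(\star)},
\end{array}
\]
where the first step is Def.~\ref{def_proper-relation-algebra} ($\pmeet$ is intersection and $\pid$ is the diagonal), the second is the unfolding of $\underline{\pi}$ just described, the fourth is Def.~\ref{fixpoints-pi}, and the last is Eq.~\ref{Id-fixpoints-pi}. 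Extensionality of binary relations then gives $\underline{\pi}\pmeet\pid=\pid_{\mathit{fix}_\pi(\star)}$.

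\textbf{Main obstacle.} The only step that is not a mechanical unfolding is the third biconditional, i.e. checking that the side condition $\langle u,y\rangle\in\punit{E}$ imposed by $\pfork$ does not cut down the set of available witnesses. The forward implication is immediate; for the converse one uses that $\punit{E}$ is the top element of $\mathcal{F}$ and that $\pid\subseteq\punit{E}$ (so $u\in\pdom{\punit{E}}$), and—as holds for all the proper fork algebras constructed in this paper—that $\punit{E}$ is the full relation $U\times U$, so any $y$ with $u\star y=u$ automatically satisfies $\langle u,y\rangle\in\punit{E}$. Everything else is a routine adaptation of the proof of Prop.~\ref{prop-id}, with $\underline{t}$ replaced by $\underline{\pi}$ and the iterated-fork lemma replaced by the single-step unfolding of $\pid\pfork\punit{E}$; an entirely symmetric argument establishes the corresponding statement for $\rho$.
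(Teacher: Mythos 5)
Your proposal is correct and follows essentially the same route as the paper: a pointwise unfolding of $\pconve$, $\pfork$, $\pid$ and $\punit{E}$ on an arbitrary pair $\langle u,v\rangle$, reducing membership in $\underline{\pi}\pmeet\pid$ to $(\exists y)(u\star y=u)$ and hence to $u\in\mathit{fix}_\pi(\star)$. The point you flag as the main obstacle (discharging the side condition $\langle u,y\rangle\in\punit{E}$) is handled identically in the paper, which silently absorbs it by appeal to the definition of $\punit{E}$ in a proper relation algebra.
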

\begin{proof}
\[
\begin{array}{rcl}
\<u, v\> \in \underline{\pi} \pmeet \pid & \text{iff} & \<u, v\> \in \pconv{\pid \pfork \punit{E}} \text{ and } u = v \\
                                                            && \qquad \text{[by Defs.~\ref{underlinepi} and~\ref{def_proper-relation-algebra} - $\pid$.]}\\
                                                            & \text{iff} & \<v, u\> \in \pid \pfork \punit{E} \text{ and } u = v \\
                                                            && \qquad \text{[by Def.~\ref{def_proper-relation-algebra} - $\pconve$.]}\\
                                                            & \text{iff} & \text{there exist } r, s \in U \text{ such that } u = r \star s \text{, } \<v, r\> \in \pid \text{, }\\
                                                            &              & \quad \<v, s\> \in \punit{E} \text{ and } u = v \\
                                                            && \qquad \text{[by Def.~\ref{def_proper-fork-algebras} - $\pfork$.]}\\
                                                            & \text{iff} & \text{there exist } r, s \in U \text{ such that } u = r \star s \text{, } v = r \text{ and } u = v \\
                                                            && \qquad \text{[by Def.~\ref{def_proper-relation-algebra} - $\pid$ and $\punit{E}$.]}\\
                                                            & \text{iff} & \text{there exists } s \in U \text{ such that } u = v \star s \text{ and } u = v \\
                                                            & \text{iff} & \text{there exists } s \in U \text{ such that } u = u \star s \text{ and } u = v \\
                                                            & \text{iff} & u \in \setof{u' \in U}{(\exists s \in U)(u' = u' \star s)} \text{ and } u = v \\
                                                            & \text{iff} & u \in \mathit{fix}_\pi (\star) \text{ and } u = v \\
                                                            && \qquad \text{[by Def.~\ref{fixpoints-pi}.]}\\
                                                            & \text{iff} & \<u, u\> \in \pid_{\mathit{fix}_\pi (\star)} \text{ and } u = v \\
                                                            && \qquad \text{[by Eq.~\ref{Id-fixpoints-pi}.]}\\
                                                            & \text{iff} & \<u, v\> \in \pid_{\mathit{fix}_\pi (\star)}
\end{array}
\]
Thus, finishing the proof.\qed\end{proof}

\begin{proposition}
Let $\mathcal{F} = \< A, \pjoin, \pmeet, \pcomple, \pzero, \punit{E}, \pcompo, \pconve, \pid, \pfork \> \in \PFA$ simple, such that $\pfork$ is induced by $\star: U^2 \to U$, then $\mathit{Si}_{\underline{2}} (\mathcal{F}) = \wp (\pid_{\mathit{fix} (\star)}) \pmeet A$.
\end{proposition}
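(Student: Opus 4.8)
The plan is to recognise this statement as the $t = \bin{\nil}{\nil}$ instance of Proposition~\ref{Sit-powerset}. First I would unfold the definition of $\map{\cdot}{\cdot}{\cdot}$ (Definition~\ref{map}) to verify two syntactic identifications. From $\map{\bin{\nil}{\nil}}{\pfork}{\pid} = \pfork\(\map{\nil}{\pfork}{\pid},\map{\nil}{\pfork}{\pid}\) = \pid \pfork \pid$ together with Definition~\ref{term-underline} we get $\underline{\bin{\nil}{\nil}} = \pid \pfork \pid = \underline{2}$; and from $\map{\bin{\nil}{\nil}}{\star}{u} = \star(u,u) = u \star u$ together with Definitions~\ref{fixpoints} and~\ref{fixpoints-t} we get that $u \in \mathit{fix}_{\bin{\nil}{\nil}}(\star)$ iff $u \star u = u$ iff $u \in \mathit{fix}(\star)$, hence $\pid_{\mathit{fix}(\star)} = \pid_{\mathit{fix}_{\bin{\nil}{\nil}}(\star)}$. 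Since $\bin{\nil}{\nil} \neq \nil$, Proposition~\ref{Sit-powerset} applied to this term yields $\mathit{Si}_{\underline{2}}(\mathcal{F}) = \mathit{Si}_{\underline{\bin{\nil}{\nil}}}(\mathcal{F}) = \wp\(\pid_{\mathit{fix}_{\bin{\nil}{\nil}}(\star)}\) \pmeet A = \wp\(\pid_{\mathit{fix}(\star)}\) \pmeet A$, which is the claim.

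Alternatively, the same equality can be obtained directly by replaying the chain of rewritings used for Proposition~\ref{Sit-powerset}: unfolding the definition of subidentities gives $\mathit{Si}_{\underline{2}}(\mathcal{F}) = \setof{a \in A}{a \subseteq \underline{2} \pmeet \pid}$; substituting the already-established identity $\underline{2} \pmeet \pid = \pid_{\mathit{fix}(\star)}$ (recorded earlier for $\underline{2}$, and the $\bin{\nil}{\nil}$-instance of Proposition~\ref{prop-id}) turns this into $\setof{a \in A}{a \subseteq \pid_{\mathit{fix}(\star)}}$; and rewriting $a \subseteq \pid_{\mathit{fix}(\star)}$ as $a \in \wp\(\pid_{\mathit{fix}(\star)}\)$, the condition ``$a \in A$ and $a \in \wp\(\pid_{\mathit{fix}(\star)}\)$'' is literally membership in $\wp\(\pid_{\mathit{fix}(\star)}\) \pmeet A$, which closes the computation.

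I do not expect a genuine obstacle here, since the statement is a corollary of Proposition~\ref{Sit-powerset} (equivalently, of the identity $\underline{2} \pmeet \pid = \pid_{\mathit{fix}(\star)}$). The two points deserving a word of care are: (i) the hypothesis of simplicity is inherited from the cited propositions and is used only through them --- it is what lets $\star$ behave as a total pairing function on the relevant domain, so that the lemmas relating $\map{\cdot}{\cdot}{\cdot}$ taken over $\pfork$ and over $\star$ apply --- and (ii) the final bookkeeping step must keep the comprehension bounded to $A$, because $\pid_{\mathit{fix}(\star)}$ itself need not belong to $A$: the right-hand side deliberately retains only those of its subsets that lie in $A$, which is exactly what the left-hand side collects.
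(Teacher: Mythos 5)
Your proposal is correct and matches the paper's own treatment: the paper proves this proposition simply by declaring it analogous to Prop.~\ref{Sit-powerset}, whose proof is exactly the rewriting chain you replay in your second paragraph (unfold $\mathit{Si}$, substitute $\underline{2} \pmeet \pid = \pid_{\mathit{fix}(\star)}$, then the powerset bookkeeping). Your first route, specialising Prop.~\ref{Sit-powerset} at $t = \bin{\nil}{\nil}$, is the same computation in different packaging --- the paper itself records the identification $\mathit{Si}_{\underline{2}}(\mathcal{F}) = \mathit{Si}_{\underline{\bin{\nil}{\nil}}}(\mathcal{F})$ as a separate theorem --- so there is nothing further to add.
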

\begin{proof}
The proof of this proposition is analogous to that of Prop.~\ref{Sit-powerset}.
\qed\end{proof}

The following property is analogous to Prop.~\ref{existsEXPconst} but resorting to $\pi$ as a determinant means for controlling the fixpoints of $\star$.

\begin{proposition}
\label{existsEXPpiconst}
Let $U$ be an infinite set such that $|U| = \kappa$ and $\aleph_0 \leq \kappa$ then, for all $S \subseteq U$ such that $|S| < |U|$, there exists $\star_S: U^2 \to U$ injective such that $\mathit{fix}_ \pi \(\star_S\) = S$.
\end{proposition}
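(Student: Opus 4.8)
The plan is to build $\star_S$ one \emph{row} at a time. For each $u \in U$ put $R_u = \setof{u \star_S v}{v \in U}$, the set of all values the operation is to take on pairs whose first coordinate is $u$; by Def.~\ref{fixpoints-pi} we have $u \in \mathit{fix}_\pi(\star_S)$ if and only if $u \in R_u$. So the whole task reduces to producing a family $\setof{R_u}{u \in U}$ that partitions $U$, has $|R_u| = \kappa$ for every $u$, and satisfies $u \in R_u \iff u \in S$; once this is done one realises the $u$-th row of the operation as any bijection $\beta_u \colon U \to R_u$ and sets $u \star_S v := \beta_u(v)$. Injectivity of $\star_S$ then comes for free: each $\beta_u$ is injective and the $R_u$ are pairwise disjoint, hence their images are disjoint and $\bigcup_{u\in U}\beta_u$ is an injection on $U^2$. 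Likewise $\mathit{fix}_\pi(\star_S) = \setof{u \in U}{u \in R_u} = S$ by construction.

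It remains to construct the partition. First I would record the only arithmetic needed: since $|S| < |U| = \kappa$ and $\kappa \geq \aleph_0$, the complement $\overline{S}$ has cardinality $\kappa$, a set of size $\kappa$ splits into $\kappa$ pairwise-disjoint pieces each of size $\kappa$ (that is, $\kappa = \kappa \cdot \kappa$), and $\kappa + |S| = \kappa$. Fix a splitting $\overline{S} = A \sqcup B$ with $|A| = |B| = \kappa$. I would then use a ``swap'' device, which plays here the role that the fixpoint-free shift plays in Veloso's proof of Prop.~\ref{existsEXPconst}: partition $B$ into $\kappa$ blocks of size $\kappa$, indexed by the elements of $A$, and let $R_a$ be the block indexed by $a$ (so $R_a \subseteq B$, whence $a \notin R_a$); partition $A$ into $\kappa$ blocks of size $\kappa$, indexed by the elements of $B \sqcup S$, let $R_b$ be the block indexed by $b$ (so $R_b \subseteq A$, whence $b \notin R_b$), and for $s \in S$ let $R_s = \set{s} \cup C_s$ where $C_s \subseteq A$ is the block indexed by $s$ (so $s \in R_s$).

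A quick check then shows $\bigsqcup_{a \in A} R_a = B$, $\bigsqcup_{b \in B} R_b \sqcup \bigsqcup_{s \in S} C_s = A$ and $\bigsqcup_{s \in S} \set{s} = S$, so the $R_u$ are pairwise disjoint, their union is $U = S \sqcup A \sqcup B$, each has cardinality $\kappa$, and $u \in R_u$ holds exactly when $u \in S$. Choosing the bijections $\beta_u \colon U \to R_u$ --- and, when $u \in S$, arranging in addition that $\beta_u(u) = u$, which is legitimate since then $u \in R_u$ --- produces, by the first paragraph, an injective $\star_S \colon U^2 \to U$ with $\mathit{fix}_\pi(\star_S) = S$, as required.

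The delicate point --- and the reason this is not the one-line diagonal recipe of Prop.~\ref{existsEXPconst} --- is that a $\pi$-fixpoint of $\star_S$ is witnessed by \emph{some} second coordinate, not necessarily by $v = u$, so the entire row $R_u$ must be controlled at once, while at the same time every $R_u$ has to keep full cardinality $\kappa$ under the constraint $u \in R_u \iff u \in S$. This is precisely what the splitting $\overline{S} = A \sqcup B$ buys: the $S$-blocks borrow their bulk from $A$, the $A$-blocks live inside $B$ and the $B$-blocks inside $A$, so the $|S|$ forced memberships never exhaust any block and no spurious $\pi$-fixpoint is created. (The case $S = \emptyset$, where one simply wants $u \star_S v \neq u$ for all $u, v$, is covered without change.)
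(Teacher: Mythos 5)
Your proof is correct, and it takes a genuinely different route from the one in the paper. The paper follows the same template as Prop.~\ref{existsEXPconst}: it splits $\overline{S}$ into $A \cup \bigcup_{i\in\nat} B_i$, builds a fixpoint-free bijection $g$ by shifting along the chain of blocks, codes pairs into $A$ via an injection $f$, introduces a set $P$ of partners with a bijection $l: S \to P$, and then defines $u \star_S v$ by an explicit case table in which the only cell yielding $u \star_S v = u$ is the one with $u \in S$ and $v = l(u)$. You instead compress the entire problem into one combinatorial condition: produce a partition $\setof{R_u}{u \in U}$ of $U$ into blocks of cardinality $\kappa$ with $u \in R_u$ iff $u \in S$, and realise each row by an arbitrary bijection $\beta_u: U \to R_u$. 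Your diagnosis of why the diagonal recipe of Prop.~\ref{existsEXPconst} does not transfer --- the witness $v$ in Def.~\ref{fixpoints-pi} is existentially quantified, so the whole row image must be controlled, not just the diagonal entry --- is exactly right, and your cross-exchange $\overline{S} = A \sqcup B$ (rows of $A$ landing in $B$, rows of $B$ and the bulk of the $S$-rows landing in $A$) handles it cleanly. What your formulation buys is that injectivity of $\star_S$ is structural (injective rows with pairwise disjoint images) rather than a cell-by-cell check against a table; this is a real advantage, since in the table-based definition several cells assign values such as $g(u)$ that do not depend on $v$, so verifying injectivity there requires care. What the paper's route buys is an explicit formula for $\star_S$ whose skeleton is reused almost verbatim in Props.~\ref{existsEXPtconst} and~\ref{existsEXPsconst}, where the fixpoint condition is controlled by a tree or a sequence and a bare row-image argument would no longer suffice on its own.
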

\begin{proof}
If $|U| = \kappa$, then $|U^2| = \kappa$, $|\pid| = \kappa$ and $|\overline{Id}| = \kappa$. Then, let $S \subseteq U$ such that $|S| < \kappa$, we know that $|\overline{S}| = \kappa$ and, therefore, it is possible to take $\overline{S} = A \cup \bigcup_{i \in \nat} B_i$ such that:
\begin{itemize}
\item for all $i \in \nat$,$A \cap B_i = \emptyset$,
\item for all $i, j \in \nat$,such that $i \not= j$, $B_i \cap B_j = \emptyset$, 
\item $|A| = \kappa$ and for all $i \in \nat$, $|B_i| = \kappa$.
\end{itemize}
Then, there exists $g: \overline{S} \to \bigcup_{i \in \nat} B_i$ bijective and without fixpoints defined as the union of the bijective functions $g_A: A \to B_0$ and $\{g_i: B_i \to B_{i+1}\}_{i \in \nat}$. Notice that such union is disjoint as the domains and codomains of all of the functions are disjoint.

Let $P$ be a set such that $|P| = |S|$ and $l: S \to P$ be a bijective function. Then, there exists $f: (S \cup P)^2 \to A$ injective and, consequently, it is possible to define $\star_S: U^2 \to U$  according to Table~\ref{table2}
\begin{table}[ht]
\centering
\begin{tabular}{|c||c|c|c|c|}
\hline
$u \star_S v$ & $S$ & $P$ & $A$ & $B_j$ \\
\hline
\hline
$S$                 & $f (u. v)$ & $\left\{ \begin{array}{ll}
                                                    u         & \mbox{; if $v = l (u)$.}\\
                                                    f (u, v) & \mbox{; otherwise.}
                                                \end{array}\right.$                       & $g(v)$ & $g (v)$\\
\hline
P                     & $f (u. v)$ & $f (u. v)$ & $g(v)$ & $g (v)$\\
\hline
A                     & $g (u)$    & $g (u)$   & $g(u)$ & $g (v)$\\
\hline
$B_j$              & $g (u)$    & $g (u)$   & $g(u)$ & $\left\{ \begin{array}{ll}
                                                                                     g (u) & \mbox{; if $i \geq j$.}\\
                                                                                     g (v) & \mbox{; otherwise.}
                                                                                 \end{array}\right.$\\
\hline
\end{tabular}
\caption{Definition of $\star_S: U^2 \to U$ controlled by $\pi$.}
\label{table2}
\end{table}
                   
Table~\ref{table2} shows how $\star_S: U^2 \to U$ is defined as the union of injective functions with disjoint domains and codomains. Therefore, $\star_S: U^2 \to U$ is injective an injective function such that $\mathit{fix}(\star_S) = S$.
\qed\end{proof}

Once again, it is possible to apply Prop.~\ref{existsEXP} and Thm.~\ref{existsEXP3} in order to prove the existence of many prime, big and explosive proper relation algebras.\\

As we mentioned at the beginning of this section, analogous definitions and results can be developed for controlling the fixpoints of $\star$ but resorting to $\rho$.

\subsection{Generalising the control of the fixpoints of $\star$ through the projections $\pi$ and $\rho$}
The generalisation of the controlling technique of the fixpoints of $\star$ through $\pi$ and $\rho$ presented in the previous section is somehow similar to what was presented in the previous section.

Let us first consider the following data type formalising non-empty sequences of relations ``$\pi$'' and ``$\rho$''.

\newcommand\secu{\mathsf{sec}}
\newcommand\Secu{\mathsf{Sec}}
\newcommand\elem[1]{\mathtt{elem}\ #1}
\newcommand\cons[2]{\mathtt{cons}\ #1\ #2}

\begin{definition}[Sequences]
\emph{Sequences} are the elements of $\Secu$, the smallest set of terms produced by the following grammar $\secu ::= \elem{*}\ |\ \cons{*}{s}$, where $* \in \set{\pi, \rho}$ and $s \in \Secu$.
\end{definition}

\begin{definition}
\label{functions}
The functions $\mathit{long}: \Secu \to \nat$, $\bullet[\bullet]: \Secu \times \nat \to \set{\pi, \rho}$ and $\bullet|\bullet: \Secu \times\nat \to \Secu$\footnote{Note that the last two functions are partial and are only defined on those elements $n \in \nat$ and $s \in \Secu$ such that $1 \leq n \leq \mathit{long} (s)$.} are defined as follows: let $* \in \set{\pi, \rho}$ and $s \in \Secu$,
\[
\begin{array}{rcl}
\mathit{long} (\elem{*}) & = & 1\\
\mathit{long} (\cons{*}{s}) & = & 1 + \mathit{long} (s)
\end{array}
\]
\[
\begin{array}{rcl}
(\elem{*})[1] & = & \elem{*}\\
(\cons{*}{s})[i] & = & \left\{\begin{array}{lr}* & \text{; if $i = 1$.} \\ s[i-1] & \text{; otherwise.} \end{array} \right.
\end{array}
\]
\[
\begin{array}{rcl}
(\elem{*})|1 & = & *\\
(\cons{*}{s})|i & = & \left\{\begin{array}{lr}(\cons{*}{s}) & \text{; if $i = \mathit{long} (s) + 1$.} \\ s|i & \text{; otherwise.} \end{array} \right.
\end{array}
\]
\end{definition}

\begin{definition}
\label{sec-underline}
Let $\mathcal{F} = \< A, \pjoin, \pmeet, \pcomple, \pzero, \punit{E}, \pcompo, \pconve, \pid, \pfork \> \in \PFA$, $* \in \set{\pi, \rho}$ and $s \in \Secu$, then $\underline{\elem{*}} = \underline{*}$ and $\underline{\cons{*}{s}} = \underline{*} \pcompo \underline{s}$.
\end{definition}

\begin{definition}[Subidentities of $\underline{s}$]
\label{def:sis}
Let $\mathcal{F} = \< A, \pjoin, \pmeet, \pcomple, \pzero, \punit{E}, \pcompo, \pconve, \pid, \pfork \> \in \PFA$ and $s \in \Secu$, then $\mathit{Si}_{\underline{s}} (\mathcal{F}) = \setof{a \in A}{a \subseteq \underline{s} \pmeet \pid}$.
\end{definition}

\begin{proposition}
\label{isomorphisms}
Let $\mathcal{F} = \< F, \pjoin^\mathcal{F}, \pmeet^\mathcal{F}, \pcomple^\mathcal{F}, \pzero^\mathcal{F}, \punit{E}^\mathcal{F}, \pcompo^\mathcal{F}, {\pconve}^\mathcal{F}, \pid^\mathcal{F}, \pfork^\mathcal{F} \>$ and $\mathcal{G} = \< G, \pjoin^\mathcal{G}, \pmeet^\mathcal{G}, \pcomple^\mathcal{G}, \pzero^\mathcal{G}, \punit{E}^\mathcal{G}, \pcompo^\mathcal{G}, {\pconve}^\mathcal{G}, \pid^\mathcal{G}, \pfork^\mathcal{G} \>$ be $\PFA$ and $s \in \Secu$, if $\phi: \mathcal{F} \to \mathcal{G}$ is an isomorphism, then $\phi$ induces a bijection between $\mathit{Si}_{\underline{s}^\mathcal{F}} (F)$ and $\mathit{Si}_{\underline{s}^\mathcal{G}} (G)$.
\end{proposition}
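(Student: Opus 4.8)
The plan is to mirror the proof of Proposition~\ref{isomorphism}, replacing the role of binary trees with sequences. First I would isolate the key auxiliary fact (the analogue of Lemma~\ref{bijection-t} for sequences): for any isomorphism $\phi : \mathcal{F} \to \mathcal{G}$ between proper fork algebras and any $s \in \Secu$, one has $\phi(\underline{s}^\mathcal{F}) = \underline{s}^\mathcal{G}$. I would prove this by structural induction on $s$. In the base case $s = \elem{*}$ with $* \in \set{\pi, \rho}$, Definition~\ref{sec-underline} gives $\underline{\elem{*}} = \underline{*}$, and since $\underline{\pi} = \pconv{\pid \pfork \punit{E}}$ (Definition~\ref{underlinepi}) is built only from $\pid$, $\punit{E}$, $\pfork$, $\pconve$ — all preserved by a fork-algebra homomorphism — we get $\phi(\underline{\pi}^\mathcal{F}) = \underline{\pi}^\mathcal{G}$, and symmetrically for $\underline{\rho}$. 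In the inductive step $s = \cons{*}{s'}$, Definition~\ref{sec-underline} gives $\underline{\cons{*}{s'}} = \underline{*} \pcompo \underline{s'}$, so $\phi(\underline{*}^\mathcal{F} \pcompo^\mathcal{F} \underline{s'}^\mathcal{F}) = \phi(\underline{*}^\mathcal{F}) \pcompo^\mathcal{G} \phi(\underline{s'}^\mathcal{F}) = \underline{*}^\mathcal{G} \pcompo^\mathcal{G} \underline{s'}^\mathcal{G} = \underline{\cons{*}{s'}}^\mathcal{G}$ by the base case and the inductive hypothesis.

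With that fact established, I would run the same chain of equivalences as in Proposition~\ref{isomorphism}: for $r \in F$, $r \in \mathit{Si}_{\underline{s}^\mathcal{F}}(F)$ iff $r \subseteq^\mathcal{F} \underline{s}^\mathcal{F} \pmeet^\mathcal{F} \pid^\mathcal{F}$ (Definition~\ref{def:sis}), iff $r \pjoin^\mathcal{F} (\underline{s}^\mathcal{F} \pmeet^\mathcal{F} \pid^\mathcal{F}) = \underline{s}^\mathcal{F} \pmeet^\mathcal{F} \pid^\mathcal{F}$ (definition of $\subseteq$), iff $\phi$ maps both sides to equal elements (injectivity of $\phi$), iff $\phi(r) \pjoin^\mathcal{G} (\underline{s}^\mathcal{G} \pmeet^\mathcal{G} \pid^\mathcal{G}) = \underline{s}^\mathcal{G} \pmeet^\mathcal{G} \pid^\mathcal{G}$ (using that $\phi$ preserves $\pjoin$, $\pmeet$, $\pid$, together with the auxiliary fact to rewrite $\phi(\underline{s}^\mathcal{F})$ as $\underline{s}^\mathcal{G}$), iff $\phi(r) \subseteq^\mathcal{G} \underline{s}^\mathcal{G} \pmeet^\mathcal{G} \pid^\mathcal{G}$, iff $\phi(r) \in \mathit{Si}_{\underline{s}^\mathcal{G}}(G)$. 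This shows that $\phi$ restricts to a well-defined injection $\mathit{Si}_{\underline{s}^\mathcal{F}}(F) \to \mathit{Si}_{\underline{s}^\mathcal{G}}(G)$; applying the same argument to $\phi^{-1}$ (also an isomorphism) yields the reverse injection, so the restriction is a bijection.

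I do not expect a genuine obstacle here: this is the ``sequence'' counterpart of Proposition~\ref{isomorphism}, and the only new ingredient is the induction showing that $\phi$ commutes with the term-former $\underline{(\,\cdot\,)} : \Secu \to \mathit{RelDes}$. The one point needing a little care is that $\underline{s}$ is not a primitive operation but a compound term — iterated relative products of $\underline{\pi}$ and $\underline{\rho}$ — so its commutation with $\phi$ must be pushed through by structural induction rather than read off directly from ``$\phi$ is a homomorphism''; once the base clause for $\underline{\pi}$ and $\underline{\rho}$ is verified by unfolding $\underline{\pi} = \pconv{\pid \pfork \punit{E}}$ into primitives, the remainder is bookkeeping. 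One could shorten the argument by remarking that $\underline{s}$ is definable by a term in the fork-algebra signature and that homomorphisms commute with term operations, but I would present the explicit induction to keep the proof self-contained and parallel to that of Proposition~\ref{isomorphism}.
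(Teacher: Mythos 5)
Your proposal is correct and follows essentially the same route as the paper: the paper's proof of Proposition~\ref{isomorphisms} is exactly the chain of equivalences you describe, relying on the auxiliary fact $\phi(\underline{s}^\mathcal{F}) = \underline{s}^\mathcal{G}$, which the paper isolates as Lemma~\ref{bijection-s} and proves by the same structural induction on $s$ (base case unfolding $\underline{\pi} = \pconv{\pid \pfork \punit{E}}$ into primitives, inductive step via preservation of $\pcompo$). Your closing remark on obtaining the reverse injection from $\phi^{-1}$ is a harmless addition; the paper's biconditional chain already yields the bijection directly.
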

\begin{proof}
Let $\phi: F \to G$ be an isomorphism between $\mathcal{F}$ and $\mathcal{G}$, and $r \in F$ such that $r \in \mathit{Si}_{\underline{s}^\mathcal{F}} (F)$,
\[
\begin{array}{rclr}
r \in \mathit{Si}_{\underline{s}^\mathcal{F}} (F) 
		& \text{ iff } & r \subseteq^\mathcal{F} \underline{s}^\mathcal{F} \pmeet^\mathcal{F} \pid^\mathcal{F} & \text{[by Def.~\ref{def:sis}.]}\\
		& \text{ iff } & r \pjoin^\mathcal{F} \(\underline{s}^\mathcal{F} \pmeet^\mathcal{F} \pid^\mathcal{F}\) = \underline{s}^\mathcal{F} \pmeet^\mathcal{F} \pid^\mathcal{F} & \text{[by Def. of $\subseteq$.]}\\
		& \text{ iff } & \phi\(r \pjoin^\mathcal{F} \(\underline{s}^\mathcal{F} \pmeet^\mathcal{F} \pid^\mathcal{F}\)\) = \phi\(\underline{s}^\mathcal{F} \pmeet^\mathcal{F} \pid^\mathcal{F}\) \\
		&&\qquad \text{[because $\phi$ is an isomorphism.]}\\
		& \text{ iff } & \phi\(r\) \pjoin^\mathcal{G} \(\phi\(\underline{s}^\mathcal{F}\) \pmeet^\mathcal{G} \pid^\mathcal{G}\) = \phi\(\underline{s}^\mathcal{F}\) \pmeet^\mathcal{G} \pid^\mathcal{G} \\
		&&\qquad \text{[because $\phi$ is an isomorphism.]}\\
		& \text{ iff } & \phi\(r\) \pjoin^\mathcal{G} \(\underline{s}^\mathcal{G} \pmeet^\mathcal{G} \pid^\mathcal{G}\) = \underline{s}^\mathcal{G} \pmeet^\mathcal{G} \pid^\mathcal{G} & \text{[by Lemma~\ref{bijection-s}.]}\\
		& \text{ iff } & \phi\(r\) \subseteq^\mathcal{G} \underline{s}^\mathcal{G} \pmeet^\mathcal{G} \pid^\mathcal{G} & \text{[by Def. of $\subseteq$.]}\\
		& \text{ iff } & \phi\(r\) \in \mathit{Si}_{\underline{s}^\mathcal{G}} (G) & \text{[by Def.~\ref{def:sis}.]}
\end{array}
\]
Thus, finishing the proof.\qed\end{proof}

\begin{definition}[$s$-controlled fixpoints of $\star$]
\label{fixpoints-s}
Let $\star: U^2 \to U$ and $s \in \Secu$, the $s$-controlled fixpoints of $\star$ are defined as $\mathit{fix}_s (\star) = \setof{u \in U}{\<u, u\> \in \underline{s}}$.
\end{definition}

Recalling the definition of the set of fixpoints of $\star$ given in Def.~\ref{fixpoints-s} we can present the $s$-controlled fixpoints of $\star$ as a partial identity as follows.
\begin{equation}
\label{Id-fixpoints-s}
\pid_{\mathit{fix}_s (\star)} = \setof{\langle u, u \rangle \in U^2}{u \in \mathit{fix}_s (\star)}
\end{equation}
\noindent for which it is possible to derive the following properties.

\begin{proposition}
\label{prop-id-s}
Let $\mathcal{F} = \< F, \pjoin, \pmeet, \pcomple, \pzero, \punit{E}, \pcompo, \pconve, \pid, \pfork \> \in \PFA$ with $\pfork$ induced by $\star: U^2 \to U$ and $s \in \Secu$, then $\underline{s} \pmeet \pid = \pid_{\mathit{fix}_s (\star)}$.
\end{proposition}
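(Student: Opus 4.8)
The plan is to imitate, almost verbatim, the proof of Proposition~\ref{prop-id}. I would fix an arbitrary pair $\<u, v\> \in U^2$ and unwind the membership condition on each side of the claimed equality into a chain of equivalences. Concretely: $\<u, v\>$ belongs to $\underline{s} \pmeet \pid$ exactly when $\<u, v\> \in \underline{s}$ and $u = v$ (reading off $\pmeet$ as intersection and $\pid$ as the diagonal of $U$ from Def.~\ref{def_proper-relation-algebra}); this is equivalent to $\<u, u\> \in \underline{s}$, i.e.\ to $u \in \mathit{fix}_s(\star)$ by Def.~\ref{fixpoints-s}, i.e.\ to $\<u, u\> \in \pid_{\mathit{fix}_s(\star)}$ by Eq.~\ref{Id-fixpoints-s}, i.e.\ (together with $u = v$) to $\<u, v\> \in \pid_{\mathit{fix}_s(\star)}$. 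Since $u, v$ are arbitrary, the two relations have the same elements and are therefore equal.

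Before running this chain I would note that $\underline{s}$ is a genuine binary relation on $U$ — it is built from $\underline{\pi}$ and $\underline{\rho}$ by iterated relative composition inside the proper fork algebra $\mathcal{F}$ (Def.~\ref{sec-underline}) — so that $\underline{s} \pmeet \pid$ is an honest set intersection and the first equivalence above is legitimate. Strictly speaking this is the only ``structural'' fact used; everything else is pure unfolding of notation.

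I do not expect a genuine obstacle here. Note in particular that, unlike in Proposition~\ref{prop-id}, no ``fork-to-star'' translation (the analogue of Lemma~\ref{fork2star}) is required, and there is no counterpart of the side hypothesis $t \ne \nil$: Definition~\ref{fixpoints-s} already phrases $\mathit{fix}_s(\star)$ directly as membership of $\<u,u\>$ in $\underline{s}$, so the statement reduces immediately to unfolding the definitions of $\pmeet$, $\pid$ and Eq.~\ref{Id-fixpoints-s}. The mild point of care is simply that $\underline{s}$ is a legitimate element of $\mathcal{F}$, which holds by Def.~\ref{sec-underline}.
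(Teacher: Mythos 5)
Your proposal is correct and matches the paper's intent exactly: the paper's proof of this proposition is literally ``analogous to that of Prop.~\ref{prop-id}'', and your chain of equivalences is that analogous argument spelled out. Your observation that the analogy is in fact simpler here --- since Def.~\ref{fixpoints-s} already defines $\mathit{fix}_s(\star)$ via membership of $\<u,u\>$ in $\underline{s}$, no counterpart of Lemma~\ref{fork2star} is needed and no side condition like $t \neq \nil$ arises (sequences in $\Secu$ are non-empty by construction) --- is accurate and worth recording.
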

\begin{proof}
The proof is analogous to that of Prop.~\ref{prop-id}.
\qed\end{proof}

\begin{proposition}
\label{Sis-powerset}
Let $\mathcal{F} = \< F, \pjoin, \pmeet, \pcomple, \pzero, \punit{E}, \pcompo, \pconve, \pid, \pfork \> \in \PFA$ simple with $\pfork$ induced by $\star: U^2 \to U$ and $s \in \Secu$, then $\mathit{Si}_{\underline{s}} (\mathcal{F}) = \wp \(\pid_{\mathit{fix}_s (\star)}\) \pmeet F$.
\end{proposition}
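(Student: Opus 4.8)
The plan is to reproduce, \emph{mutatis mutandis}, the chain of set-theoretic equalities used to establish Prop.~\ref{Sit-powerset}, replacing every appeal to the binary-tree machinery by its sequence counterpart. Concretely, I would start from the defining equation $\mathit{Si}_{\underline{s}}(\mathcal{F}) = \setof{r \in F}{r \subseteq \underline{s} \pmeet \pid}$ of Def.~\ref{def:sis}; then rewrite $\underline{s} \pmeet \pid$ as $\pid_{\mathit{fix}_s(\star)}$ using Prop.~\ref{prop-id-s}; then observe that $r \subseteq \pid_{\mathit{fix}_s(\star)}$ is by definition the same as $r \in \wp(\pid_{\mathit{fix}_s(\star)})$; and finally fold the two membership conditions $r \in F$ and $r \in \wp(\pid_{\mathit{fix}_s(\star)})$ into the single condition $r \in \wp(\pid_{\mathit{fix}_s(\star)}) \pmeet F$, yielding $\mathit{Si}_{\underline{s}}(\mathcal{F}) = \wp(\pid_{\mathit{fix}_s(\star)}) \pmeet F$.

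Two small points deserve attention before carrying this out. First, Prop.~\ref{prop-id} carries the side condition $t \neq \nil$, and so does Prop.~\ref{prop-id-s} only implicitly: here the issue disappears, since the grammar $\secu ::= \elem{*} \mid \cons{*}{s}$ produces only non-empty sequences, so no analogous hypothesis is needed and $\pid_{\mathit{fix}_s(\star)}$ is always well-behaved. Second, the hypothesis that $\mathcal{F}$ is simple is carried along exactly as in Prop.~\ref{Sit-powerset}: it is what licenses speaking of ``the'' inducing function $\star$ and guarantees the fork structure behaves uniformly on the ambient algebra, and it is inherited unchanged through each step of the chain; I would keep it in the statement for parallelism even though, once Prop.~\ref{prop-id-s} is available, the equality chain itself is purely computational.

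Since each link in the chain is a definitional unfolding or a direct appeal to an already-proved proposition, there is no genuine obstacle: the real work was pushed into Prop.~\ref{prop-id-s} (itself reduced to Prop.~\ref{prop-id}), and the present statement is then a one-display corollary. Accordingly I would present the proof as the displayed chain of equalities, annotated on the right with the justifications [Def.~\ref{def:sis}], [Prop.~\ref{prop-id-s}], [Def.\ of $\subseteq$], and a final regrouping step, exactly mirroring the proof of Prop.~\ref{Sit-powerset}.
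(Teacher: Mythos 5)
Your proposal is correct and matches the paper exactly: the paper's own proof is literally ``analogous to that of Prop.~\ref{Sit-powerset}'', and the chain you describe (Def.~\ref{def:sis}, then Prop.~\ref{prop-id-s}, then the definition of $\subseteq$, then regrouping the two membership conditions) is precisely that analogue. Your side remark that no counterpart of the $t \neq \nil$ hypothesis is needed, because the grammar of $\Secu$ only produces non-empty sequences, is also accurate.
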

\begin{proof}
The proof is analogous to that of Prop.~\ref{Sit-powerset}.
\qed\end{proof}

\begin{proposition}
\label{existsEXPsconst}
Let $U$ be an infinite set $|U| = \kappa$ and $\aleph_0 \leq \kappa$ then for all $S \subseteq U$ such that $|S| < |U|$, there exists $\star_S: U^2 \to U$ injective such that given $s \in \Secu$, $\left|\mathit{fix}_s \(\star_S\)\right| = |S|$.
\end{proposition}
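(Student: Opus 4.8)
The plan is to construct a single injective $\star_S\colon U^2\to U$, \emph{not depending on $s$}, with $\mathit{fix}_s(\star_S)=S$ for every $s\in\Secu$; the cardinality assertion is then immediate. The guiding idea is the one behind Props.~\ref{existsEXPtconst} and~\ref{existsEXPpiconst}: make every element of $S$ a $\star_S$-idempotent, so that it is fixed by \emph{any} sequence of projections, and route every other product into a ``reservoir'' inside $\overline{S}$ whose elements can never lie on a $\star_S$-cycle. Concretely, since $|\overline{S}|=\kappa$ (as $|S|<|U|$ and $U$ is infinite), first I would fix a bijection $\overline{S}\cong\nat\times\kappa$ and assign each element of $U$ a \emph{rank}: rank $0$ to the elements of $S$, and rank $i+1$ to the element coded by $\langle i,\alpha\rangle$. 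Then I would set $u\star_S u=u$ for $u\in S$, and for every other pair $\langle u,v\rangle$ let $u\star_S v$ be a reservoir element of rank strictly larger than $\max(\mathrm{rank}(u),\mathrm{rank}(v))$, the values being picked by injecting the at most $\kappa$ pairs of each rank-level into pairwise disjoint rank-$(m+1)$ slices of $\nat\times\kappa$. This $\star_S$ is total, and it is injective because the diagonal $\setof{\langle u,u\rangle}{u\in S}$ is mapped bijectively onto $S$, every other pair is mapped into $\overline{S}$, and the rank-plus-index bookkeeping keeps the latter assignment injective. Being injective, $\star_S$ induces a full proper fork algebra over $\wp(U^2)$, to which the results of this section apply.

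For the inclusion $S\subseteq\mathit{fix}_s(\star_S)$: since $u\star_S u=u$, the concrete description of $\underline{\pi}$ obtained above (and its evident $\rho$-analogue) gives $\langle u,u\rangle\in\underline{\pi}$ and $\langle u,u\rangle\in\underline{\rho}$, hence $\langle u,u\rangle\in\underline{*}$ for each $*\in\set{\pi,\rho}$. Because $\underline{s}$ is, by Def.~\ref{sec-underline}, a $\pcompo$-composition of such relations $\underline{*}$, the constant chain $u,u,\dots,u$ witnesses $\langle u,u\rangle\in\underline{s}$, i.e.\ $u\in\mathit{fix}_s(\star_S)$ by Def.~\ref{fixpoints-s}. (Note that, unlike $\BT$, the grammar of $\Secu$ produces only non-empty sequences, so no side condition analogous to ``$t\neq\nil$'' is needed here.)

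For the reverse inclusion $\mathit{fix}_s(\star_S)\subseteq S$ --- which I expect to be the main obstacle --- I would argue by rank descent. Let $v\in\mathit{fix}_s(\star_S)$ with $s=*_1\cdots*_n$ and $n\ge 1$; then there is a chain $v=w_0,w_1,\dots,w_n=v$ with $\langle w_{j-1},w_j\rangle\in\underline{*_j}$ for each $j$, so each $w_{j-1}$ is $\star_S$ applied to a pair having $w_j$ as one of its coordinates. If $w_{j-1}\in S$, then that pair must be the diagonal pair $\langle w_{j-1},w_{j-1}\rangle$ --- the only pair $\star_S$ sends into $S$ --- forcing $w_j=w_{j-1}$; hence once the chain meets $S$ it is constant from then on. If instead $w_{j-1},w_j\in\overline{S}$, the defining pair is non-diagonal, so $\mathrm{rank}(w_{j-1})>\mathrm{rank}(w_j)$. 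Consequently, if $v\in\overline{S}$ the entire chain remains in $\overline{S}$ (it cannot reach $S$ and then return to $v$), whence $\mathrm{rank}(w_n)<\mathrm{rank}(w_0)$, contradicting $w_n=w_0$ --- here the hypothesis $n\ge 1$ is exactly what is used. Therefore $v\in S$.

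Combining the two inclusions yields $\mathit{fix}_s(\star_S)=S$, and in particular $|\mathit{fix}_s(\star_S)|=|S|$, as claimed. The only delicate point is the rank argument of the previous paragraph, and it is precisely what the ``strictly larger rank'' clause in the definition of $\star_S$ was designed to make work; everything else is routine bookkeeping. Since the construction never refers to $s$, it actually proves the stronger, uniform statement; alternatively, one may repackage the same definition of $\star_S$ as a case table in the style of Table~\ref{table1}.
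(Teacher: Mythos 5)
Your proof is correct, but it takes a genuinely different route from the paper's. The paper builds $\star_S$ \emph{from} $s$: it introduces pairwise disjoint copies $S_1,\dots$ of $S$ (one per position of $s$) together with an auxiliary set $P$, and wires them up in Table~\ref{table3} so that reading off the projections of $s$ walks an element of $S$ through the copies and back (Lemma~\ref{Sfixpoint-s}); the price is that the copies themselves contribute fixpoints, so the paper only obtains $S \subseteq \mathit{fix}_s(\star_S) \subseteq S \cup \bigcup_i S_i$, hence equality of \emph{cardinalities} rather than of sets --- precisely the asymmetry it must later account for when feeding the construction back into Prop.~\ref{existsEXP}. Note also that, since Table~\ref{table3} depends on $\mathit{long}(s)$ and $s[i]$, the paper really proves the reading ``for each $s$ there is a $\star_S$''. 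Your construction is uniform in $s$: making every $u\in S$ idempotent gives $\langle u,u\rangle\in\underline{\pi}$ and $\langle u,u\rangle\in\underline{\rho}$, hence $\langle u,u\rangle\in\underline{s}$ for every $s$ via the constant chain and Def.~\ref{sec-underline}; and routing every non-diagonal product into a reservoir of strictly larger rank turns the converse into a clean well-founded-descent argument (a chain $w_0,\dots,w_n=w_0$ with $n\ge 1$ either enters $S$ and is thereafter constant, or stays in $\overline{S}$ and strictly decreases in rank, which is absurd). This buys a single $\star_S$ witnessing the claim for all $s\in\Secu$ simultaneously, the exact identity $\mathit{fix}_s(\star_S)=S$ with no spurious fixpoints, and --- since idempotents also satisfy $\map{t}{\star_S}{u}=u$ --- a construction that would serve Prop.~\ref{existsEXPtconst} equally well. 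The only item to spell out in a full write-up is the injectivity bookkeeping (at most $\kappa$ pairs at each rank level, sent into pairwise disjoint slices of cardinality $\kappa$), which your sketch handles adequately.
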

\begin{proof}
If $|U| = \kappa$, then $|U^2| = \kappa$, $|\pid| = \kappa$ and $|\overline{Id}| = \kappa$. Then, let $\{S_i\}_{1 \leq i < \mathit{long} (s)}$ be a finite family of sets such that:
\begin{itemize}
\item for all $1 \leq i, j < \mathit{long} (s)$, such that $i \not= j$, $B_i \cap B_j = \emptyset$, and
\item for all $1 \leq i < \mathit{long} (s)$, $|B_i| = |S|$.
\end{itemize}
Analogous to previous results, we know that $|\overline{S \cup \bigcup_{i = 1}^{\mathit{long} (s)} S_{i}}| = \kappa$ and, therefore, it is possible to take $\overline{S \cup \bigcup_{i = 1}^{\mathit{long} (s)} S_{i}} = \bigcup_{i \in \nat} B_i$ such that:
\begin{itemize}
\item for all $i, j \in \nat$, such that $i \not= j$, $B_i \cap B_j = \emptyset$, 
\item for all $i \in \nat$, $|B_i| = \kappa$.
\end{itemize}
Then, there exists $g: \bigcup_{i \in \nat} B_i \to \bigcup_{i \in \nat} B_i$ bijective and without fixpoints defined as the union of the bijective functions $\{g_i: B_i \to B_{i+1}\}_{i \in \nat}$. Notice that such union is disjoint as the domains and codomains of all of the functions are disjoint.

Let $P$ be a set such that $|P| = |S|$ and $l: S \to P$ be a bijective function. Then, there exists $f: (S \cup P)^2 \to B_0$ injective and a finite family of bijective functions $\{h^s_{i}: S \to S_{i}\}_{1 \leq i < \mathit{long} (s)}$, such that $h^s_0 = id_S$. Then, it is possible to define $\star_S: U^2 \to U$  according to Table~\ref{table3}
\begin{sidewaystable}
\centering
\begin{tabular}{|c||c|c|c|c|}
\hline
$u \star_S v$ & $S_{j}$ & $P$ & $B_n$ \\
\hline
\hline
$S_{i}$ & f (u, v) & $\left\{ \begin{array}{ll} 
                                           h^s_{i+1} \({h^s_{i}}^{-1} (u)\)      & \begin{array}{ll} \mbox{; if $1 \leq i < \mathit{long} (s)$}\\ \qquad\mbox{and $s [\mathit{long} (s) - i] = \pi$.}\\ \qquad\mbox{and $v = l \({h^s_i}^{-1} (u)\)$.} \end{array} \\
                                          {h^s_{i}}^{-1} (u)                           & \begin{array}{ll} \mbox{; if $i = \mathit{long} (s) - 1$}\\ \qquad\mbox{and $s [1] = \pi$.}\\ \qquad\mbox{and $v = l \({h^s_i}^{-1} (u)\)$.} \end{array} \\
                                          f \( {h^s_{i}}^{-1} (u), v \)  & \mbox{; otherwise.}
                                \end{array} \right.$ & $g(v)$ \\
\hline
$P$      & $\left\{ \begin{array}{ll} 
                            h^s_{i+1} \({h^s_{i}}^{-1} (v)\)       & \begin{array}{ll} \mbox{; if $1 \leq i < \mathit{long} (s)$}\\ \qquad\mbox{and $s [\mathit{long} (s) - i] = \rho$.}\\ \qquad\mbox{and $u = l \({h^s_i}^{-1} (v)\)$.} \end{array} \\
                            {h^s_{i}}^{-1} (v)                           & \begin{array}{ll} \mbox{; if $i = \mathit{long} (s) - 1$}\\ \qquad\mbox{and $s [1] = \rho$.}\\ \qquad\mbox{and $u = l \({h^s_i}^{-1} (v)\)$.} \end{array} \\
                            f \( u, {h^s_{i}}^{-1} (v) \)  & \mbox{; otherwise.}
                  \end{array} \right.$ & $f (u, v)$ & $g(v)$ \\
\hline
$B_m$  & $g (u)$  & $g (u)$ & $\left\{ \begin{array}{ll}
                                                             g (u) & \mbox{; if $m \geq n$.}\\
                                                             g (v) & \mbox{; otherwise.}
                                                    \end{array}\right.$ \\
\hline
\end{tabular}
\caption{Definition of $\star_S: U^2 \to U$ controlled by a non empty sequence.}
\label{table3}
\end{sidewaystable}

Table~\ref{table3} shows how $\star_S: U^2 \to U$ is defined as the union of injective functions with disjoint domains and codomains. Therefore, $\star_S: U^2 \to U$ is injective.

On the one hand, by Lemma~\ref{Sfixpoint-s} we obtain that if $u \in S$, $\langle u, u \rangle \in \underline{s}$ and, by Def.~\ref{fixpoints-s}, that $S \subseteq \mathit{fix}_s \(\star_S\)$ and, consequently, that $|S| \leq |\mathit{fix}_s \(\star_S\)|$. On the other hand, for all $u \in \mathit{fix}_s \(\star_S\)$, $u \in S \cup \bigcup_{i = 1}^{\mathit{long} (s)} S_{i}$ and, consequently, $\mathit{fix}_s \(\\star_S\) \subseteq S \cup \bigcup_{i = 1}^{\mathit{long} (s)} S_{i}$ because, by the way in which $\star_S: U^2 \to U$ was constructed, $\overline{S \cup \bigcup_{i = 1}^{\mathit{long} (s)} S_{i}}$ does not contain fixpoints. Then, we obtain that $|\mathit{fix}_s \(\\star_S\)| \leq |S \cup \bigcup_{i = 1}^{\mathit{long} (s)} S_{i}| = |S|$. Jointly, these two results prove that $|\mathit{fix}_s \(\\star_S\)| = |S|$.
\qed\end{proof}

Once again, from the previous result, it is possible to reproduce the result of Prop.~\ref{existsEXP}. Analogous to what we did in the proof of Prop.~\ref{existsEXPtconst}, we must consider the use of the set $S_\phi$, instead of the set $\mathit{fix}_t \(\star_{S_\phi}\)$, for constructing the algebra, as there might be fixpoints outside $S_\phi$. Thereafter, Thm.~\ref{existsEXP3} can be applied in order to guarantee the existence of infinitely many prime, big and explosive relation algebras obtained by controlling the fixpoints of $\star: U^2 \to U$ resorting to a sequence from $\Secu$.\\

Once again, from observing the generalised controlling technique of the fixpoints of $\star$ presented above, it is possible to establish certain relations between them.

\newcommand\concat[2]{#1 +\!\!+\ #2}

\begin{definition}
The functions $\concat{\bullet}{\bullet}: \Secu^2 \to \Secu$ is defined as follows: let $* \in \set{\pi, \rho}$ and $s, s' \in \Secu$
\[
\begin{array}{rcl}
\concat{\(\elem{*}\)}{s} & = & \cons{*}{s}\\
\concat{\(\cons{*}{s'}\)}{s} & = & \cons{*}{\(\concat{s'}{s}\)}
\end{array}
\]
\end{definition}

\begin{theorem}
Let $\star: U^2 \to U$ and $s, s' \in \Secu$, $\mathit{fix}_s (\star) \cap \mathit{fix}_{s'} (\star) \subseteq \mathit{fix}_{\concat{s}{s'}}$.
\end{theorem}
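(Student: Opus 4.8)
The plan is to reduce the statement to a single structural lemma---that concatenation of sequences corresponds to relational composition of the associated relations, i.e.\ $\underline{\concat{s}{s'}} = \underline{s} \pcompo \underline{s'}$ for all $s, s' \in \Secu$---and then observe that the hypothesis, which places $u$ in \emph{both} fixpoint sets, supplies exactly the intermediate element needed to witness membership in the composite relation. Note that for any $s \in \Secu$ the relation $\underline{s}$ is, by Def.~\ref{sec-underline}, a composite of copies of the binary relations $\underline{\pi}$ and $\underline{\rho}$ on $U$ determined by $\star$, so all the expressions below denote genuine binary relations on $U$ and $\pcompo$ is ordinary relational composition as in Def.~\ref{def_proper-relation-algebra}.

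First I would prove the lemma by induction on the structure of $s$. For the base case $s = \elem{*}$ with $* \in \set{\pi, \rho}$: by the definition of $\concat{\bullet}{\bullet}$ we have $\concat{s}{s'} = \cons{*}{s'}$, so Def.~\ref{sec-underline} gives $\underline{\concat{s}{s'}} = \underline{\cons{*}{s'}} = \underline{*} \pcompo \underline{s'} = \underline{\elem{*}} \pcompo \underline{s'} = \underline{s} \pcompo \underline{s'}$. For the inductive step $s = \cons{*}{s''}$: again by the definition of $\concat{\bullet}{\bullet}$ we get $\concat{s}{s'} = \cons{*}{(\concat{s''}{s'})}$, hence $\underline{\concat{s}{s'}} = \underline{*} \pcompo \underline{\concat{s''}{s'}}$; applying the induction hypothesis and then associativity of $\pcompo$ (an axiom of $\RA$, hence valid in every $\PFA$) yields $\underline{*} \pcompo (\underline{s''} \pcompo \underline{s'}) = (\underline{*} \pcompo \underline{s''}) \pcompo \underline{s'} = \underline{\cons{*}{s''}} \pcompo \underline{s'} = \underline{s} \pcompo \underline{s'}$, closing the induction.

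Second, I would conclude directly. Let $u \in \mathit{fix}_s(\star) \cap \mathit{fix}_{s'}(\star)$; by Def.~\ref{fixpoints-s} this means $\langle u, u \rangle \in \underline{s}$ and $\langle u, u \rangle \in \underline{s'}$. Taking $u$ itself as the intermediate point then shows $\langle u, u \rangle \in \underline{s} \pcompo \underline{s'}$, which by the lemma equals $\underline{\concat{s}{s'}}$. Thus $\langle u, u \rangle \in \underline{\concat{s}{s'}}$, i.e.\ $u \in \mathit{fix}_{\concat{s}{s'}}(\star)$, again by Def.~\ref{fixpoints-s}, as required.

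I do not expect any genuine obstacle: the argument is essentially a short unfolding of definitions, and the only place needing care is the inductive step of the lemma, where associativity of $\pcompo$ is invoked. This mirrors the earlier theorem on binary-tree contexts, but is in fact simpler, since concatenation of sequences translates into composition rather than into substitution into a context.
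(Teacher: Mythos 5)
Your proposal is correct and follows essentially the same route as the paper: the paper's proof also reduces the statement to the lemma $\underline{s} \pcompo \underline{s'} = \underline{\concat{s}{s'}}$ (its Lemma on concatenation, proved by induction on $s$, which the paper only sketches and you spell out) and then uses $u$ itself as the intermediate element witnessing $\langle u, u\rangle \in \underline{s} \pcompo \underline{s'}$.
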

\begin{proof}
\[
\begin{array}{rclr}
u \in \mathit{fix}_s (\star) \cap \mathit{fix}_{s'} (\star) & \text{iff} & u \in \mathit{fix}_s (\star) \text{ and } \mathit{fix}_{s'} (\star) \\
                             & \text{iff}         & \< u, u \> \in \underline{s} \text{ and } \< u, u \> \in \underline{s'} & \text{[by Def.~\ref{fixpoints-s}.]}\\
                             & \text{implies} & \< u, u \> \in \underline{s} \pcompo \underline{s'} & \text{[by Def.~\ref{def_proper-relation-algebra} - $\pcompo$.]}\\
                             & \text{iff}         & \< u, u \> \in \underline{\concat{s}{s'}} & \text{[by Lemma~\ref{fix-inclusion}.]}\\
                             & \text{iff}         & u \in \mathit{fix}_{\concat{s}{s'}} & \text{[by Def.~\ref{fixpoints-s}.]}
\end{array}
\]
Thus, finishing the proof.\qed\end{proof}

Finally, it is possible to connect $t$-controled fixpoints and $s$-controlled fixpoints of $\star$ by considering properties like the next one.

\begin{definition}
\label{function-ll}
The predicates $\bullet = \bullet \subseteq \BT \times \BT$ and $\bullet < \bullet \subseteq \BT \times \BT$ are defined as follows:
\[
\begin{array}{rcl}
s \ll t & \text{iff} & \(s = \elem{\pi} \text{ and } t = \bin{\nil}{t'}\) \text{ or}\\
         &             & \(s = \elem{\rho} \text{ and } t = \bin{t'}{\nil}\) \text{ or}\\
         &             & \(s = \cons{\pi}{s'} \text{ and } t = \bin{t'}{t''} \text{ and } s' \ll t'\) \text{ or}\\
         &             & \(s = \cons{\rho}{s'} \text{ and } t = \bin{t'}{t''} \text{ and } s' \ll t''\)
\end{array}
\]
\end{definition}

\begin{theorem}
Let $U$ be an infinite set $|U| = \kappa$ with $\aleph_0 \leq \kappa$, $S \subseteq U$ such that $|S| < |U|$, $\star_S: U^2 \to U$ injective and $t \in \BT$ then, for all $s \in \Secu$, $s \ll t$ implies $\mathit{fix}_t (\star_S) \subseteq \mathit{fix}_{s} (\star_S)$.
\end{theorem}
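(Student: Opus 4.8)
The plan is to reduce the statement to an auxiliary claim that holds for \emph{every} element of $U$, not just for the fixpoints: a naive induction restricted to $\mathit{fix}_t(\star_S)$ fails, because the left subterm $\map{t_1}{\star_S}{u}$ of a $t$-fixpoint $u$ need not be a $t_1$-fixpoint. Throughout, work in the (full) proper fork algebra over $U$ whose fork is induced by $\star_S$, in which $\underline{s}$, $\underline{\pi}$ and $\underline{\rho}$ are concrete relations on $U$. Unwinding Definition~\ref{underlinepi} with the definitions of $\pfork$ (Definition~\ref{def_proper-fork-algebras}) and converse (Definition~\ref{def_proper-relation-algebra}) gives, for all $x,y\in U$, that $\langle x,y\rangle\in\underline{\pi}$ iff $(\exists z\in U)(x = y\star_S z)$, and dually $\langle x,y\rangle\in\underline{\rho}$ iff $(\exists z\in U)(x = z\star_S y)$. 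These two equivalences are the only facts about the fork algebra that the argument needs.

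The auxiliary claim to prove is: for all $s\in\Secu$ and $t\in\BT$, if $s\ll t$ then $\langle \map{t}{\star_S}{w},w\rangle\in\underline{s}$ for every $w\in U$. Proceed by structural induction on $s$, equivalently on the derivation of $s\ll t$ given by Definition~\ref{function-ll}, unfolding $\underline{s}$ with Definition~\ref{sec-underline} and the value of $\map{t}{\star_S}{w}$ with Definition~\ref{map}. If $s=\elem{\pi}$ then $t=\bin\nil{t'}$, so $\map{t}{\star_S}{w}=w\star_S\map{t'}{\star_S}{w}$ and hence $\langle\map{t}{\star_S}{w},w\rangle\in\underline{\pi}=\underline{s}$, with witness $z=\map{t'}{\star_S}{w}$; the case $s=\elem{\rho}$ is symmetric. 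If $s=\cons{\pi}{s'}$ then $t=\bin{t_1}{t_2}$ with $s'\ll t_1$, and $\underline{s}=\underline{\pi}\pcompo\underline{s'}$; taking the intermediate point $c=\map{t_1}{\star_S}{w}$ we get $\langle\map{t}{\star_S}{w},c\rangle\in\underline{\pi}$ because $\map{t}{\star_S}{w}=c\star_S\map{t_2}{\star_S}{w}$, and $\langle c,w\rangle\in\underline{s'}$ by the induction hypothesis applied to $s'\ll t_1$ and $w$; composing yields $\langle\map{t}{\star_S}{w},w\rangle\in\underline{s}$. The case $s=\cons{\rho}{s'}$ is symmetric, using the second coordinate and $\underline{\rho}$.

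The theorem then follows in one line. Fix $s\in\Secu$ with $s\ll t$ and let $u\in\mathit{fix}_t(\star_S)$; by Definition~\ref{fixpoints-t} this means $\map{t}{\star_S}{u}=u$, so instantiating the auxiliary claim at $w=u$ gives $\langle u,u\rangle=\langle\map{t}{\star_S}{u},u\rangle\in\underline{s}$, i.e. $u\in\mathit{fix}_s(\star_S)$ by Definition~\ref{fixpoints-s}. As $u$ was arbitrary, $\mathit{fix}_t(\star_S)\subseteq\mathit{fix}_s(\star_S)$. The only genuine difficulty is spotting the right strengthening of the inductive hypothesis (quantifying over all $w$ rather than over fixpoints); everything else is routine unfolding of the definitions of $\mathit{map}$, $\underline{\,\cdot\,}$ and $\ll$. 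It is worth noting that neither the cardinality hypotheses on $U$ and $S$ nor the injectivity of $\star_S$ are actually used --- only that $\star_S$ is a total binary operation on $U$.
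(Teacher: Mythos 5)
Your proof is correct and follows essentially the same route as the paper: your auxiliary claim is precisely the paper's Lemma~\ref{fix-t-fix-s} (that $s \ll t$ implies $\langle \map{t}{\star_S}{w}, w\rangle \in \underline{s}$ for all $w \in U$), proved by the same structural induction on $s$ guided by the definition of $\ll$, after which the theorem follows by instantiating at a fixpoint. Your closing observation that the cardinality and injectivity hypotheses are never used is accurate, though it does not change the argument.
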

\begin{proof}
The proof of this theorem follows from Def.~\ref{fixpoints-t}, \ref{fixpoints-s} and~\ref{function-ll}, and applying Lemma~\ref{fix-t-fix-s}.
\qed\end{proof}
%!TEX root = ./main.tex
\section{Conclusions}
\label{conclusions}

As we mentioned at the beginning of this work, binary relations are ubiquitous in computer science as they provide the concept perfectly fitted for formalising programs by rationalising them as the connection between its inputs and its outputs. In this context, the language of relation algebras is expected to provide the reasoning tool for program verification, derivation and refinement. The mismatch between the models of the calculus of relations (see Defs.\ref{def_relation-algebra} and~\ref{def_new-calculus-of-relations}) and the class of proper relation algebras (see Def.~\ref{def_proper-relation-algebra}), evidenced by Lyndon in \cite{lyndon:ams2-51_2,lyndon:ams2-63_2}, by constructing a finite, non-simple and non-trivial relation algebra that is not representable as a proper relation algebra, results in a major drawback for its adoption as a specification language and formal development tool.

The study of the relational reduct of fork algebras, started and promoted by Paulo {A.}{S.} Veloso in \cite{veloso:ES-418-96,veloso:05-96}, is of great interest for the community of applied relational methods in computer science as fork algebras, thought of as the models of the calculus for fork algebras (see Defs.~\ref{def_fork-algebras} and~\ref{def_calculus-of-fork-algebras}), are representable in proper fork algebras (see Def.~\ref{def_proper-fork-algebras}), a class of algebras whose carrier is formed by binary relations. 

In this paper we summarised some of Velosos's results in this field, like the construction of explosive relation algebras, by controlling the fixpoints of $\star: U^2 \to U$. Our contribution is twofold; on the one hand, a generalisation of such a construction by introducing the notion of $t$-controlled fixpoints of $\star: U^2 \to U$, where $t$ is a term induced by a tree-like structure and, on the other hand, the controlling technique based on the use of the pseudo-projections $\pi$ and $\rho$, as an alternative to the $\pfork$-controlled one, introduced by Veloso. Finally, we generalise the technique by introducing the notion of $s$-controlled fixpoints of $\star: U^2 \to U$, where $s$ is a term induced by a sequence-like structure.

\bibliographystyle{splncs04}
\bibliography{bibdatabase}

\begin{thebibliography}{10}
\providecommand{\url}[1]{\texttt{#1}}
\providecommand{\urlprefix}{URL }
\providecommand{\doi}[1]{https://doi.org/#1}

\bibitem{burris81}
Burris, S., Sankappanavar, H.P.: A course in universal algebra. Graduate Texts
  in Mathematics, Springer-Verlag, Berlin, Germany (1981)

\bibitem{enderton72}
Enderton, H.B.: A mathematical introduction to logic. Academic Press (1972)

\bibitem{frias02}
Frias, M.F.: Fork algebras in algebra, logic and computer science, Advances in
  logic, vol.~2. World Scientific Publishing Co., Singapore (2002)

\bibitem{frias:fi-32}
Frias, M.F., Baum, G.A., Haeberer, A.M.: Fork algebras in algebra, logic and
  computer science. Fundamenta Informaticae  \textbf{32},  1--25 (1997)

\bibitem{frias:bsl-24_2}
Frias, M.F., Baum, G.A., Haeberer, A.M., Veloso, P.A.: Fork algebras are
  representable. Bulletin of the Section of Logic  \textbf{24}(2),  64--75
  (1995)

\bibitem{frias:relmics01+}
Frias, M.F., Baum, G.A., Maibaum, T.S.E.: Interpretability of first-order
  dynamic logic in a relational calculus. In: de~Swart, H. (ed.) Proceedings of
  the 6th. Conference on Relational Methods in Computer Science ({RelMiCS}) -
  {TARSKI}. Lecture Notes in Computer Science, vol.~2561, pp. 66--80.
  Springer-Verlag, Oisterwijk, The Netherlands (October 2002)

\bibitem{frias:relmics05}
Frias, M.F., Galeotti, J.P., {Lopez Pombo}, C.G., Roman, M.: Fork algebra as a
  formalism to reason across behavioral specifications (extended abstract). In:
  D{\"{u}}ntsch, I., Winter, M. (eds.) Proceedings of the 8th. Conference on
  Relational Methods in Computer Science ({RelMiCS}) - 3nd. International
  Workshop on Applications of Kleene Algebra. pp. 61--68. St. Catharines,
  Ontario, Canada (February 2005)

\bibitem{frias:bsl-24_4}
Frias, M.F., Haeberer, A.M., Veloso, P.A.: A finite axiomatization for fork
  algebras. Bulletin of the Section of Logic  \textbf{24}(4),  193--200 (1995)

\bibitem{frias:jigpl-5_3}
Frias, M.F., Haeberer, A.M., Veloso, P.A.: A finite axiomatization for fork
  algebras. Logic Journal of the {IGPL}  \textbf{5}(3),  311--319 (1997)

\bibitem{frias:relmics03}
Frias, M.F., {Lopez Pombo}, C.G.: Time is on my side. In: Berghammer, R.,
  M{\"{o}}ller, B. (eds.) Proceedings of the 7th. Conference on Relational
  Methods in Computer Science ({RelMiCS}) - 2nd. International Workshop on
  Applications of Kleene Algebra. pp. 105--111. Malente, Germany (May 2003)

\bibitem{frias:jlap-66_2}
Frias, M.F., {Lopez Pombo}, C.G.: Interpretability of first-order linear
  temporal logics in fork algebras. Journal of Logic and Algebraic Programming
  \textbf{66}(2),  161--184 (2006)

\bibitem{frias:jancl-8}
Frias, M.F., Orlowska, E.: Equational reasoning in non-classical logics.
  Journal of Applied Non-classical Logics  \textbf{8}(1--2),  27--66 (1998)

\bibitem{gyuris:tcs-188_1_2}
Gyuris, V.: A short proof of representability of fork algebra. Theoretical
  Computer Science  \textbf{188}(1--2),  211--220 (1997)

\bibitem{haeberer:4-93}
Haeberer, A.M., Baum, G.A., Schmidt, K.: Dealing with non-constructive
  specifications involving quantifiers. Monografias en Ci\^{e}ncias da
  Computa\c{c}\~{a}o~4/93, Departamento de Informatica, Pontif\'{\i}cia
  Universidade Cat\'{o}lica do Rio de Janeiro (May 1993)

\bibitem{haeberer:fmpta93}
Haeberer, A.M., Baum, G.A., Schmidt, K.: On the smooth calculation of
  relational recursive expressions out of first-order non-constructive
  specificationes involving quantifiers. In: Bj{\o}rner, D., Broy, M.,
  Pottosin, I.V. (eds.) International Conference on Formal Methods in
  Programming and Their Applications. Lecture Notes in Computer Science,
  vol.~735, pp. 281--298. Springer-Verlag, Academgorodok, Novosibirsk, Russia
  (June 1993)

\bibitem{haeberer:ifiptc291}
Haeberer, A.M., Veloso, P.A.: Partial relations for program derivation:
  adequacy, inevitability and expressiveness. In: Proceedings of {IFIP TC2}
  working conference on constructing programs from specifications. pp.
  310--352. {IFIP TC2}: Software: Theory and Practice, North Holland (1991)

\bibitem{huntington:tams-5_3}
Huntington, E.V.: Sets of independent postulates for the algebra of logic.
  Transactions of the American Mathematical Society  \textbf{5}(3),  288--309
  (1904)

\bibitem{jonsson:ajm-73}
J{\'{o}}nsson, B., Tarski, A.: Boolean algebra with operators, part {I}.
  American Journal of Mathematics  \textbf{73},  891--939 (1951)

\bibitem{jonsson:ajm-74}
J{\'{o}}nsson, B., Tarski, A.: Boolean algebra with operators, part {II}.
  American Journal of Mathematics  \textbf{74}(3),  127--162 (1952)

\bibitem{lowenheim:ma-76}
L{\"{o}}wenheim, L.: Uber {M}\"{o}glichkeiten im {R}elativkalkul. Feifei Ma
  \textbf{76},  447--470 (1915), see \cite[pp.~228--251]{vanheijenoort67} for
  an english account, and translation, of this work.

\bibitem{lyndon:ams2-51_2}
Lyndon, R.C.: The representation of relation algebras, part {I}. Annals of
  Mathematics (series 2)  \textbf{51}(2),  707--729 (1950)

\bibitem{lyndon:ams2-63_2}
Lyndon, R.C.: The representation of relation algebras, part {II}. Annals of
  Mathematics (series 2)  \textbf{63}(2),  294--307 (1956)

\bibitem{mikulas:s+-192_7}
Mikul{\'{a}}s, S., Sain, I., Simon, A.: Complexity of equational theory of
  relational algebras with standard projection elements. Synthese
  \textbf{192}(7),  2159--2182 (2015)

\bibitem{monk:mmj-11}
Monk, J.D.: On representable relation algebras. Michigan Mathematical Journal
  \textbf{11},  207--210 (1964)

\bibitem{schroder1895}
Sch{\"{o}}der, F.W.K.E.: Algebra und Logik der Relative, der Vorlesungen
  \"{u}ber die Algebra der Logik, vol. 3,~Abt.~1. Teubner, Leipzig (1895)

\bibitem{tarski:jsl-6_3}
Tarski, A.: On the calculus of relations. Journal of Symbolic Logic
  \textbf{6}(3),  73--89 (1941)

\bibitem{tarski:unp43-45}
Tarski, A.: Untitled book manuscript containing some of {Tarski's} early
  contributions to the theory of relation algebras, written during the period
  1943 to 1945 (1943--1945), the book was never published, but most of the
  results in the book were later included in \cite{tarski87}.

\bibitem{tarski:im-17}
Tarski, A.: Contributions to the theory of models. {III}. Indagationes
  Mathematicae (Proceedings)  \textbf{17},  56--64 (1955), also in
  \cite{tarski:knaw-58}.

\bibitem{tarski:knaw-58}
Tarski, A.: Contributions to the theory of models. {III}. Koninklijke
  Nederlandse Akademie van Wetenschappen, Proceedings, Series A  \textbf{58},
  56--64 (1955), also in \cite{tarski:im-17}.

\bibitem{tarski:im-18}
Tarski, A.: Equationally complete rings and relation algebras. Indagationes
  Mathematicae (Proceedings)  \textbf{18},  39--46 (1956), also in
  \cite{tarski:knaw-59}.

\bibitem{tarski:knaw-59}
Tarski, A.: Equationally complete rings and relation algebras. Koninklijke
  Nederlandse Akademie van Wetenschappen, Proceedings, Series A  \textbf{59},
  39--46 (1956), also in \cite{tarski:im-18}.

\bibitem{tarski87}
Tarski, A., Givant, S.: A formalization of set theory without variables,
  Colloqium Publications, vol.~41. American Mathematical Society, Providence,
  RI, USA (1987)

\bibitem{vanheijenoort67}
{van Heijenoort}, J. (ed.): From Frege to G\"{o}del: A source book in
  mathematical logic, 1879--1931. Harvard University Press (1967)

\bibitem{veloso:05-96}
Veloso, P.A.: On finite and infinite fork algebras. Monografias en Ci\^{e}ncias
  da Computa\c{c}\~{a}o 05/96, Departamento de Informatica, Pontif\'{\i}cia
  Universidade Cat\'{o}lica do Rio de Janeiro (January 1996), see also
  \cite{veloso:ES-418-96}

\bibitem{veloso:ES-418-96}
Veloso, P.A.: On finite and infinite fork algebras and their relational
  reducts: classification and examples. Technical Report ES-418-96, Programa de
  Engenharia de Sistemas e Computa\c{c}\~{a}o, COPPE, Departamento de
  Computa\c{c}\~{a}o, Instituto de Matem\'{a}tica, Universidade Federal do Rio
  de Janeiro (December 1996), see also \cite{veloso:05-96}

\bibitem{veloso:bcml96}
Veloso, P.A.: Some connections between logic and computer science. In:
  Carnielli, W.A., {D'Ottaviano}, I.M. (eds.) Proceedings of Eleventh Brazilian
  Conference on Mathematical Logic. Advances in Contemporary Logic and Computer
  Science, vol.~235, pp. 187--260. American Mathematical Society (May 1996)

\bibitem{veloso:bsl-20_2}
Veloso, P.A., Haeberer, A.M.: A finitary relational algebra for classical
  first-order logic. Bulletin of the Section of Logic  \textbf{20}(2),  52--62
  (1991)

\bibitem{veloso:19-92}
Veloso, P.A., Haeberer, A.M., Baum, G.A.: On formal program construction within
  an extended calculus for binary relations. Monografias en Ci\^{e}ncias da
  Computa\c{c}\~{a}o 19/92, Departamento de Informatica, Pontif\'{\i}cia
  Universidade Cat\'{o}lica do Rio de Janeiro (May 1992)

\end{thebibliography}

\appendix
%!TEX root = ./main.tex
\section{Proofs for selected lemmas and properties}
\label{demos}
In this section we present selected auxiliary lemmas and properties used in the preceding sections.

\begin{lemma}{\ \ \ \ \ \ \ \ \ } % Padding for alignment.
\label{bijection-t}
Let $\mathcal{F} = \< F, \pjoin^\mathcal{F}, \pmeet^\mathcal{F}, \pcomple^\mathcal{F}, \pzero^\mathcal{F}, \punit{E}^\mathcal{F}, \pcompo^\mathcal{F}, {\pconve}^\mathcal{F}, \pid^\mathcal{F}, \pfork^\mathcal{F} \>$ and $\mathcal{G} = \< G, \pjoin^\mathcal{G}, \pmeet^\mathcal{G}, \pcomple^\mathcal{G}, \pzero^\mathcal{G}, \punit{E}^\mathcal{G}, \pcompo^\mathcal{G}, {\pconve}^\mathcal{G}, \pid^\mathcal{G}, \pfork^\mathcal{G} \>$ be $\PFA$ and $t \in \BT$, if $\phi: \mathcal{F} \to \mathcal{G}$ is an isomorphism, then $\phi \(\underline{t}^\mathcal{F}\) = \underline{t}^\mathcal{G}$.
\end{lemma}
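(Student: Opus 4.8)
The plan is a straightforward structural induction on the binary tree $t \in \BT$. The key observation is that, by Definition~\ref{term-underline}, $\underline{t}$ is nothing but the relational term obtained by reading $t$ as a term tree whose leaves are $\pid$ and whose internal nodes are applications of $\pfork$. Since an isomorphism of $\PFA$s is in particular a homomorphism for the similarity type containing the constant $\pid$ and the operation $\pfork$, it must commute with every term built from these symbols; the lemma is just this fact specialised to the family $\{\underline{t}\}_{t \in \BT}$.

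For the base case $t = \nil$, unfolding $\map{\nil}{\pfork}{\pid} = \pid$ (Definition~\ref{map}) reduces the claim to $\phi(\pid^\mathcal{F}) = \pid^\mathcal{G}$, which holds because $\phi$ preserves the distinguished element $\pid$.

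For the inductive step, write $t = \bin{t_1}{t_2}$ with $t_1, t_2 \in \BT$, and assume as inductive hypotheses $\phi(\underline{t_1}^\mathcal{F}) = \underline{t_1}^\mathcal{G}$ and $\phi(\underline{t_2}^\mathcal{F}) = \underline{t_2}^\mathcal{G}$. By Definition~\ref{map} we have $\underline{t} = \underline{t_1} \pfork \underline{t_2}$ in either algebra; applying that $\phi$ is a homomorphism for $\pfork$ and then the inductive hypotheses yields $\phi(\underline{t}^\mathcal{F}) = \phi(\underline{t_1}^\mathcal{F}) \pfork^\mathcal{G} \phi(\underline{t_2}^\mathcal{F}) = \underline{t_1}^\mathcal{G} \pfork^\mathcal{G} \underline{t_2}^\mathcal{G} = \underline{t}^\mathcal{G}$, completing the induction.

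I do not expect any real obstacle: the statement is essentially the remark that homomorphisms preserve all derived operations. The only point needing a little care is bookkeeping the superscripts $\mathcal{F}$ and $\mathcal{G}$ and confirming that $\mathit{map}$ indeed produces a term in the language of $\PFA$ (it does, since it only uses $\pfork$ and $\pid$), so that the generic fact "homomorphisms commute with terms" applies verbatim.
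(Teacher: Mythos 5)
Your proof is correct and follows essentially the same route as the paper's: structural induction on $t$, using that $\phi$ preserves the constant $\pid$ in the base case and commutes with $\pfork$ in the inductive step. The framing as the general fact that homomorphisms commute with derived term operations is a fair summary of what the paper's computation carries out explicitly.
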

\begin{proof}
The proof of this lemma follows by induction on the structure of $t$. Let us consider $t = \nil$ as the base case.
\[
\begin{array}{rclr}
\phi \(\underline{\nil}^\mathcal{F}\) & = & \phi \(\map{\nil}{\pfork^\mathcal{F}}{\pid^\mathcal{F}}\) & \text{[by Def.~\ref{term-underline}.]}\\
		& = & \phi \(\pid^\mathcal{F}\) & \text{[by Def.~\ref{map}.]}\\
		& = & \pid^\mathcal{G} & \text{[because $\phi$ is isomorphism.]}\\
		& = & \map{\nil}{\pfork^\mathcal{G}}{\pid^\mathcal{G}} & \text{[by Def.~\ref{map}.]}\\
		& = & \underline{\nil}^\mathcal{F} & \text{[by Def.~\ref{term-underline}.]}
\end{array}
\]
Let us now consider the case $t = \bin{i}{d}$.
\[
\begin{array}{rclr}
\phi \(\underline{\bin{i}{d}}^\mathcal{F}\) & = & \phi \(\map{\(\bin{i}{d}\)}{\pfork^\mathcal{F}}{\pid^\mathcal{F}}\) & \text{[by Def.~\ref{term-underline}.]}\\
		& = & \phi \(\(\map{i}{\pfork^\mathcal{F}}{\pid^\mathcal{F}}\) \pfork^\mathcal{F} \(\map{d}{\pfork^\mathcal{F}}{\pid^\mathcal{F}}\)\) & \text{[by Def.~\ref{map}.]}\\
		& = & \phi \(\map{i}{\pfork^\mathcal{F}}{\pid^\mathcal{F}}\) \pfork^\mathcal{G} \phi \(\map{d}{\pfork^\mathcal{F}}{\pid^\mathcal{F}}\) \\
		&&\qquad \text{[because $\phi$ is isomorphism.]}\\
		& = & \phi \(\underline{i}^\mathcal{F}\) \pfork^\mathcal{G} \phi \(\underline{d}^\mathcal{F}\) & \text{[by Def.~\ref{term-underline}.]}\\
		& = & \underline{i}^\mathcal{G} \pfork^\mathcal{G} \underline{d}^\mathcal{G} \\
		&&\qquad \text{[by inductive hypothesis.]}\\
		& = & \(\map{i}{\pfork^\mathcal{G}}{\pid^\mathcal{G}}\) \pfork^\mathcal{G} \(\map{d}{\pfork^\mathcal{G}}{\pid^\mathcal{G}}\) & \text{[by Def.~\ref{term-underline}.]}\\
		& = & \map{\(\bin{i}{d}\)}{\pfork^\mathcal{G}}{\pid^\mathcal{G}} & \text{[by Def.~\ref{map}.]}
\end{array}
\]
Thus, finishing the proof.\qed\end{proof}

\begin{lemma}
\label{fork2star}
Let $\mathcal{F} = \< F, \pjoin, \pmeet, \pcomple, \pzero, \punit{E}, \pcompo, \pconve, \pid, \pfork \> \in \PFA$ with $\pfork$ induced by $\star: U^2 \to U$ and $t \in \BT$ such that $t \not= \nil$, then $\< u, v \> \in \map{t}{\pfork}{\pid}$ in and only if $v = \map{t}{\star}{u}$.
\end{lemma}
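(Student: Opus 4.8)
The plan is to prove the equivalence by structural induction on $t \in \BT$. It is cleaner to establish the statement for \emph{every} $t \in \BT$, including $t = \nil$, and then read off the lemma as the special case $t \neq \nil$: the grammar of $\BT$ forces the two immediate subterms of $\bin{i}{d}$ to range over all of $\BT$, so the step case needs the case $t = \nil$ available in any event. (And in fact the equivalence already holds for $t = \nil$, since $\underline{\nil} = \pid$ and $\map{\nil}{\star}{u} = u$.)

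For the base case $t = \nil$, I would simply unfold definitions: $\underline{\nil} = \map{\nil}{\pfork}{\pid} = \pid$ by Def.~\ref{term-underline} and Def.~\ref{map}, and $\pid$ denotes the identity relation on $U$ by Def.~\ref{def_proper-relation-algebra}. Hence $\pair{u}{v} \in \underline{\nil}$ iff $u = v$, which by Def.~\ref{map} is exactly $v = \map{\nil}{\star}{u}$.

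For the inductive step, let $t = \bin{i}{d}$ and assume the equivalence for $i$ and for $d$. By Def.~\ref{term-underline} and Def.~\ref{map}, $\underline{t} = \underline{i} \pfork \underline{d}$. Since $\pfork$ is induced by $\star$ (Eq.~\ref{def_pfork}), $\pair{u}{v} \in \underline{i} \pfork \underline{d}$ holds iff there exist $x, y \in U$ with $v = x \star y$, $\pair{u}{x} \in \underline{i}$ and $\pair{u}{y} \in \underline{d}$. Applying the induction hypothesis to $i$ and to $d$ turns the last two conjuncts into $x = \map{i}{\star}{u}$ and $y = \map{d}{\star}{u}$; since these pin $x$ and $y$ down uniquely, the existential quantifier collapses and the condition becomes $v = (\map{i}{\star}{u}) \star (\map{d}{\star}{u})$, which by Def.~\ref{map} is $v = \map{(\bin{i}{d})}{\star}{u} = \map{t}{\star}{u}$. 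This closes the induction.

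I do not expect a genuine obstacle here: the proof is essentially a transfer of the recursion defining $\map{\cdot}{\pfork}{\pid}$ (on relations) into the recursion defining $\map{\cdot}{\star}{\cdot}$ (on elements of $U$). The only points requiring care are bookkeeping: (i) running the induction over all of $\BT$ rather than over $\BT \setminus \set{\nil}$, so that the step case is legal when $i$ or $d$ equals $\nil$; and (ii) observing that, by the induction hypothesis, the witnesses $x, y$ in the defining clause of $\pfork$ are forced to be the unique values $\map{i}{\star}{u}$ and $\map{d}{\star}{u}$, so the existential on the relational side matches the equation on the functional side with no loss.
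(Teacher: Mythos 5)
Your proof is correct and follows essentially the same route as the paper's: a structural induction on $t$ whose base case is $t = \nil$ (where $\underline{\nil} = \pid$) and whose step case unfolds the set-theoretic definition of $\pfork$ and applies the induction hypothesis to both subterms, collapsing the existential witnesses to the unique values $\map{i}{\star}{u}$ and $\map{d}{\star}{u}$. Your explicit remark that the induction must run over all of $\BT$ including $\nil$ (even though the lemma is stated for $t \neq \nil$) is a point the paper handles silently, and it is a worthwhile clarification.
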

\begin{proof}
The proof of this lemma follows by induction on the structure of $t$. Let us consider $t = \nil$ as the base case.
\[
\begin{array}{rcll}
\< u, v \> \in \map{\nil}{\pfork}{\pid} & \text{ iff } & \< u, v \> \in \pid & \text{[by Def.~\ref{map}.]}\\
	& \text{ iff } & u = v & \text{[by Def.~$\pid$.]}\\
	& \text{ iff } & v = \map{t}{\star}{u} & \text{[by Def.~\ref{map}.]}
\end{array}
\]
Let us now consider the case $t = \bin{i}{d}$.
\[
\begin{array}{clr}
    & \< u, v \> \in \map{\(\bin{i}{d}\)}{\pfork}{\pid}\\
 \text{ iff } & \< u, v \> \in \(\map{i}{\pfork}{\pid}\) \pfork \(\map{d}{\pfork}{\pid}\) & \text{[by Def.~\ref{map}.]}\\
 \text{ iff } & \text{there exists } v', v'' \in U \text{ such that } v = v' \star v'' \text{ and } \\
    & \<u, v'\> \in \map{i}{\pfork}{\pid} \text{ and } \<u, v''\> \in \map{d}{\pfork}{\pid} & \text{[by Def. $\pfork$.]}\\
 \text{ iff } & \text{there exists } v', v'' \in U \text{ such that } v = v' \star v'' \text{ and } \\
    & v' = \map{i}{\star}{u} \text{ and } v'' = \map{d}{\star}{u}\\
    & \qquad \text{[by inductive hypothesis.]}\\
 \text{ iff } & v = \(\map{i}{\star}{u}\) \star \(\map{d}{\star}{u}\) \\
    & \qquad \text{[by inductive hypothesis.]}\\
 \text{ iff } & v = \map{\(\bin{i}{d}\)}{\star}{u} & \text{[by Def.~\ref{map}.]}\\
\end{array}
\]
Thus, finishing the proof.\qed\end{proof}

\begin{lemma}
\label{Sfixpoint}
Let $U$ and $S \in U$ be a non-empty sets, $t \in \BT$ such that $t \neq \nil$ and $\star_S: U^2 \to U$ be the injective function defined according to Prop.~\ref{existsEXPtconst}, then $v \in S$ implies that $v = \map{t}{\star_S}{v}$. 
\end{lemma}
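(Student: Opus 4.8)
The plan is to prove the lemma by structural induction, but on a strengthened statement, since the bare equation $\map{t'}{\star_S}{v} = v$ fails for the proper subterms $t'$ of $t$. Recall from the construction of $\star_S$ in Prop.~\ref{existsEXPtconst} (see Table~\ref{table1}) that for each proper subterm $t' < t$ there is a bijection $h_{t'} : S \to S_{t'}$, that $h_\nil = \mathrm{id}_S$ and $S_\nil = S$, and that the blocks $\{S_{t'}\}_{t' < t}$ together with the $B_i$ are pairwise disjoint.

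First I would prove the auxiliary claim: for every $t' \in \BT$ with $t' < t$ and every $v \in S$, one has $\map{t'}{\star_S}{v} = h_{t'}(v)$, and in particular $\map{t'}{\star_S}{v} \in S_{t'}$. The induction is on the structure of $t'$. In the base case $t' = \nil$: since $t \neq \nil$ we have $\nil < t$, and $\map{\nil}{\star_S}{v} = v = \mathrm{id}_S(v) = h_\nil(v)$ by Def.~\ref{map}. In the step $t' = \bin{t_1}{t_2} < t$, both $t_1 < t$ and $t_2 < t$ (proper subterms of a proper subterm of $t$), so the inductive hypothesis gives $\map{t_i}{\star_S}{v} = h_{t_i}(v) \in S_{t_i}$; therefore $\map{t'}{\star_S}{v} = \star_S\!\left(h_{t_1}(v), h_{t_2}(v)\right)$ is computed by the row-$S_{t_1}$, column-$S_{t_2}$ cell of Table~\ref{table1}. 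Since $h_{t_1}^{-1}(h_{t_1}(v)) = v = h_{t_2}^{-1}(h_{t_2}(v))$ the equality test in that cell succeeds, and since $\bin{t_1}{t_2} = t' < t$ the first alternative is selected, yielding $h_{\bin{t_1}{t_2}}\!\left(h_{t_1}^{-1}(h_{t_1}(v))\right) = h_{t'}(v)$.

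Then I would conclude as follows. As $t \neq \nil$, write $t = \bin{t_1}{t_2}$; then $t_1, t_2 < t$, so the auxiliary claim gives $\map{t_i}{\star_S}{v} = h_{t_i}(v) \in S_{t_i}$. Hence $\map{t}{\star_S}{v} = \star_S\!\left(h_{t_1}(v), h_{t_2}(v)\right)$ is again governed by the row-$S_{t_1}$, column-$S_{t_2}$ cell, the equality test succeeds exactly as before, but now $\bin{t_1}{t_2}$ equals $t$, so the \emph{second} alternative applies and returns $h_{t_1}^{-1}(h_{t_1}(v)) = v$. This is precisely $\map{t}{\star_S}{v} = v$.

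The main obstacle is identifying the correct strengthening: the statement as given does not carry through the induction, and one needs the hypothesis to record not just that $\map{t'}{\star_S}{v}$ is some element of $S$ but that it is $h_{t'}(v)$, i.e.\ that it lies in the block $S_{t'}$. This is what pins down which cell of Table~\ref{table1} controls the next application of $\star_S$ and guarantees that neither the ``otherwise'' branch nor the $B_i$-rows and $B_i$-columns ever arise along the computation; checking that the equality test in the $S_{t'} \times S_{t''}$ cell is always met is then the only calculation, and it is routine.
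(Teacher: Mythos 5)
Your proposal is correct and takes essentially the same route as the paper: your strengthened auxiliary claim that $\map{t'}{\star_S}{v} = h_{t'}(v)$ for every proper subterm $t' < t$ and $v \in S$ is precisely the paper's Lemma~\ref{aux4Sfixpoint} (stated there in the equivalent form ``$v \in S_{t'}$ implies $v = \map{t'}{\star_S}{{h_{t'}}^{-1}(v)}$'') and is proved by the same structural induction on the subterm. The concluding step --- writing $t = \bin{t_1}{t_2}$, applying the claim to both children, and reading off the branch of the $S_{t_1} \times S_{t_2}$ cell of Table~\ref{table1} that returns ${h_{t_1}}^{-1}(h_{t_1}(v)) = v$ because $\bin{t_1}{t_2} = t$ --- matches the paper's final computation exactly.
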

\begin{proof}
Let $t = \bin{t'}{t''}$,
\[
\begin{array}{rclr}
v \in S & \text{implies} & h_{t'} (v) \in S_{t'} \text{ and } h_{t''} (v) \in S_{t''} \\
           &                      &\qquad \text{[by Def.~$h_{t}$ for $t \in \BT$.]}\\
           & \text{implies} & h_{t'} (v) = \map{t'}{\star_S}{{h_{t'}}^{-1} \(h_{t'} (v)\)} \text{ and }\\
           &                      & \qquad h_{t''} (v) = \map{t''}{\star_S}{{h_{t''}}^{-1} \(h_{t''} (v)\)} & \text{[by Lemma~\ref{aux4Sfixpoint}.]}\\
           & \text{implies} & h_{t'} (v) = \map{t'}{\star_S}{v} \text{ and }\\
           &                      & \qquad h_{t''} (v) = \map{t''}{\star_S}{v} \\
           &                      & \qquad\qquad \text{[by Def.~$h_{t}$ for $t \in \BT$.]}\\
           & \text{implies} & h_{t'} (v) \star_S h_{t''} (v) =  \\
           &                      &  \qquad\(\map{t'}{\star_S}{v}\) \star_S \(\map{t''}{\star_S}{v}\) & \text{[by Def.~\ref{map}.]}\\
           & \text{implies} & h^{-1}_{t'} (\(h_{t'} (v)\) = \map{\(\bin{t'}{t''}\)}{\star_S}{v} & \text{[by Def.~$\star_S$.]}\\
           & \text{implies} & v = \map{\(\bin{t'}{t''}\)}{\star_S}{v} & \text{[by Def.~$h_t$.]}
\end{array}
\]
Thus, finishing the proof.\qed\end{proof}

\begin{lemma}
\label{aux4Sfixpoint}
Let $U$ and $S \in U$ be a non-empty sets, $t \in \BT$ such that $t \neq \nil$ and $\star_S: U^2 \to U$ be the injective function defined according to Prop.~\ref{existsEXPtconst}, then for all $t'\in \BT$ such that $t' < t$, $v \in S_ {t'}$ implies $v = \map{t'}{\star_S}{{h_{t'}}^{-1} (v)}$.
\end{lemma}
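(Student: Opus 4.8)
The plan is to prove the statement by structural induction on $t'$. Since the claim is vacuously true unless $t' < t$, throughout the argument I may assume $t' < t$, so that all the sets $S_{t'}$ and bijections $h_{t'}$ mentioned are among those fixed in the construction of $\star_S$ in Prop.~\ref{existsEXPtconst} and tabulated in Table~\ref{table1}. For the base case $t' = \nil$: as $t \neq \nil$ we have $\nil < t$, and by convention $S_\nil = S$ with $h_\nil = id_S$; hence for $v \in S_\nil = S$ we have ${h_\nil}^{-1}(v) = v$ and $\map{\nil}{\star_S}{v} = v$ by Def.~\ref{map}, so the equality is immediate.

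For the inductive step $t' = \bin{t_1}{t_2}$ with $t' < t$: first I would note that $t_1$ and $t_2$ are proper subtrees of $t'$, hence proper subtrees of $t$, so $S_{t_1}, S_{t_2}, h_{t_1}, h_{t_2}$ are all defined and the inductive hypothesis is available for $t_1$ and $t_2$. Fix $v \in S_{\bin{t_1}{t_2}}$ and set $w = {h_{t'}}^{-1}(v) \in S$. Applying the inductive hypothesis to $t_1$ at the element $h_{t_1}(w) \in S_{t_1}$ yields $h_{t_1}(w) = \map{t_1}{\star_S}{{h_{t_1}}^{-1}(h_{t_1}(w))} = \map{t_1}{\star_S}{w}$, and likewise $h_{t_2}(w) = \map{t_2}{\star_S}{w}$. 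Then by Def.~\ref{map},
\[
\map{\(\bin{t_1}{t_2}\)}{\star_S}{w} = \(\map{t_1}{\star_S}{w}\) \star_S \(\map{t_2}{\star_S}{w}\) = h_{t_1}(w) \star_S h_{t_2}(w).
\]
Now I would evaluate this last product by reading off the appropriate cell of Table~\ref{table1}: both arguments lie in $S$-type blocks ($h_{t_1}(w) \in S_{t_1}$, $h_{t_2}(w) \in S_{t_2}$) and ${h_{t_1}}^{-1}(h_{t_1}(w)) = w = {h_{t_2}}^{-1}(h_{t_2}(w))$; moreover $\bin{t_1}{t_2} = t' < t$ strictly, so the first sub-case of that entry fires, giving $h_{t_1}(w) \star_S h_{t_2}(w) = h_{\bin{t_1}{t_2}}(w) = h_{t'}({h_{t'}}^{-1}(v)) = v$. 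Combining, $v = \map{t'}{\star_S}{{h_{t'}}^{-1}(v)}$, which closes the induction.

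The main obstacle is the bookkeeping in the inductive step: one must check carefully that the two arguments of $\star_S$ land exactly in the $S_{t_1}$-row / $S_{t_2}$-column cell of Table~\ref{table1} and that the side conditions there (equality of the $h^{-1}$-preimages, and $\bin{t_1}{t_2} < t$) hold, so that the branch returning $h_{\bin{t_1}{t_2}}\({h_{t_1}}^{-1}(u)\)$ is selected — rather than the branch returning ${h_{t_1}}^{-1}(u)$, which only applies when $\bin{t_1}{t_2} = t$ (the situation relevant to Lemma~\ref{Sfixpoint}, not here), or the catch-all branch using $f$. It is precisely the hypothesis $t' < t$ of the lemma that keeps the strict inequality $\bin{t_1}{t_2} < t$ available at every stage of the induction and thus pins down the correct branch.
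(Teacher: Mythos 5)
Your proof is correct and follows essentially the same route as the paper's: structural induction on $t'$, with the base case resting on $h_\nil = id_S$ and the inductive step combining the inductive hypothesis for the two subtrees with the $S_{t_1}$-row/$S_{t_2}$-column entry of Table~\ref{table1}. If anything, your version is more explicit than the paper's about why the branch returning $h_{\bin{t_1}{t_2}}\({h_{t_1}}^{-1}(u)\)$ is the one that fires (equal preimages and the strict inequality $\bin{t_1}{t_2} < t$), a point the paper's proof leaves implicit.
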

\begin{proof}
The proof follows by induction on the structure of $t$. Let $t' = \nil$, if $v \in S$ then, by definition of $h_\nil$, $v = {h_\nil}^{-1} \(v\)$ and, therefore, by Def.~\ref{map}, $v = \map{\nil}{\star_S}{{h_\nil}^{-1} (v)}$.

Let $t' = \bin{i}{d}$; if $v \in S_ {\bin{i}{d}}$, then there exists $u \in S_i$ and $u' \in S_d$ such that ${h_i}^{-1} (u) = h^{-1}_d (u')$ and $v = u \star_S u'$. By inductive hypothesis, we get that $u = \map{i}{\star_S}{{h_i}^{-1} (u)}$ and $u' = \map{d}{\star_S}{{h_d}^{-1} (u')}$. Then, by replacing in the expression of $v$, $v = \(\map{i}{\star_S}{{h_i}^{-1} (u)}\) \star_S \(\map{d}{\star_S}{{h_d}^{-1} (u')}\)$. Finally, by Def.~\ref{map}, we obtain that $v = \map{\(\bin{i}{d}\)}{\star_S}{{h_i}^{-1} (u)}$ and, equivalently, $v = \map{t'}{\star_S}{{h_i}^{-1} (u)}$.
\qed\end{proof}

\begin{lemma}{\ \ \ \ \ \ \ \ \ } % Padding for alignment.
\label{bijection-s}
Let $\mathcal{F} = \< F, \pjoin^\mathcal{F}, \pmeet^\mathcal{F}, \pcomple^\mathcal{F}, \pzero^\mathcal{F}, \punit{E}^\mathcal{F}, \pcompo^\mathcal{F}, {\pconve}^\mathcal{F}, \pid^\mathcal{F}, \pfork^\mathcal{F} \>$ and $\mathcal{G} = \< G, \pjoin^\mathcal{G}, \pmeet^\mathcal{G}, \pcomple^\mathcal{G}, \pzero^\mathcal{G}, \punit{E}^\mathcal{G}, \pcompo^\mathcal{G}, {\pconve}^\mathcal{G}, \pid^\mathcal{G}, \pfork^\mathcal{G} \>$ be $\PFA$ and $s \in \Secu$, if $\phi: \mathcal{F} \to \mathcal{G}$ is an isomorphism, then $\phi \(\underline{s}^\mathcal{F}\) = \underline{s}^\mathcal{G}$.
\end{lemma}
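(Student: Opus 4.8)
The plan is to follow the proof of Lemma~\ref{bijection-t} almost verbatim, replacing the structural induction on $\BT$ with a structural induction on the sequence $s \in \Secu$. By Def.~\ref{sec-underline}, $\underline{s}$ is built from the two atomic terms $\underline{\pi}$ and $\underline{\rho}$ by iterated relative composition, so the argument splits into a base case for the atoms and an inductive step that absorbs one more occurrence of $\pcompo$.

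First I would handle the base case $s = \elem{*}$ with $* \in \set{\pi, \rho}$. Here $\underline{\elem{*}}^\mathcal{F} = \underline{*}^\mathcal{F}$, and by Def.~\ref{underlinepi} (together with its $\rho$-analogue) $\underline{\pi}^\mathcal{F} = \pconv{\pid^\mathcal{F} \pfork^\mathcal{F} \punit{E}^\mathcal{F}}$ and $\underline{\rho}^\mathcal{F} = \pconv{\punit{E}^\mathcal{F} \pfork^\mathcal{F} \pid^\mathcal{F}}$. Since $\phi$ is an isomorphism it commutes with $\pconve$ and $\pfork$ and preserves the constants $\pid$ and $\punit{E}$; chaining these facts yields $\phi\(\underline{\pi}^\mathcal{F}\) = \pconv{\pid^\mathcal{G} \pfork^\mathcal{G} \punit{E}^\mathcal{G}} = \underline{\pi}^\mathcal{G}$, and symmetrically $\phi\(\underline{\rho}^\mathcal{F}\) = \underline{\rho}^\mathcal{G}$. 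This mirrors the $t = \nil$ step of Lemma~\ref{bijection-t}.

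For the inductive step $s = \cons{*}{s'}$ I would invoke Def.~\ref{sec-underline} to write $\underline{\cons{*}{s'}}^\mathcal{F} = \underline{*}^\mathcal{F} \pcompo^\mathcal{F} \underline{s'}^\mathcal{F}$, push $\phi$ through the composition using that it is a homomorphism, and finish with the base case $\phi\(\underline{*}^\mathcal{F}\) = \underline{*}^\mathcal{G}$ and the induction hypothesis $\phi\(\underline{s'}^\mathcal{F}\) = \underline{s'}^\mathcal{G}$, obtaining $\phi\(\underline{\cons{*}{s'}}^\mathcal{F}\) = \underline{*}^\mathcal{G} \pcompo^\mathcal{G} \underline{s'}^\mathcal{G} = \underline{\cons{*}{s'}}^\mathcal{G}$.

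I do not anticipate any genuine obstacle: the whole proof is a routine structural induction whose only moving parts are the homomorphism clauses for $\pconve$, $\pfork$ and $\pcompo$ and the preservation of $\pid$ and $\punit{E}$. The one point that deserves care is bookkeeping rather than mathematics: the paper introduces $\underline{\pi}$ explicitly but leaves $\underline{\rho}$ only ``analogous'', so I would first pin down $\underline{\rho} = \pconv{\punit{E} \pfork \pid}$ so that both atoms can be treated uniformly in the base case.
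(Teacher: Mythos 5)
Your proposal is correct and follows essentially the same route as the paper's proof: a structural induction on $s$ whose base case unfolds $\underline{\pi}$ (resp.\ $\underline{\rho}$) to $\pconv{\pid \pfork \punit{E}}$ (resp.\ $\pconv{\punit{E} \pfork \pid}$) and pushes the isomorphism through $\pconve$, $\pfork$ and the constants, and whose inductive step pushes $\phi$ through the relative composition of Def.~\ref{sec-underline} and applies the induction hypothesis. Your only addition is to make explicit the definition of $\underline{\rho}$, which the paper leaves as ``analogous''; this is harmless and arguably an improvement in bookkeeping.
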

\begin{proof}
The proof of this lemma follows by induction on the structure of $s$. Let us consider $s = \elem{\pi}$ as the base case. The case for $s = \elem{\rho}$ is analogous.
\[
\begin{array}{rclr}
\phi \(\underline{\elem{\pi}}^\mathcal{F}\) & = & \phi \({\pi}^\mathcal{F}\) & \text{[by Def.~\ref{sec-underline}.]}\\
		& = & \phi \(  {\pconv{\pid^\mathcal{F} \pfork^\mathcal{F} \punit{E}^\mathcal{F}}}^\mathcal{F} \) & \text{[by Def.~$\pi$.]}\\
		& = & {\pconv{\pid^\mathcal{G} \pfork^\mathcal{G} \punit{E}^\mathcal{G}}}^\mathcal{G} & \text{[because $\phi$ is isomorphism.]}\\
		& = & \pi^\mathcal{G} & \text{[by Def.~$\pi$.]}\\
		& = & \underline{\elem{\pi}}^\mathcal{G} & \text{[by Def.~\ref{sec-underline}.]}
\end{array}
\]
Let us now consider the case $s = \cons{\pi}{s'}$. As in the previous case, the case for $s = \cons{\rho}{s'}$ is analogous.
\[
\begin{array}{rcll}
\phi \(\underline{\cons{\pi}{s'}}^\mathcal{F}\) & = & \phi \( \underline{\pi}^\mathcal{F} \pcompo^\mathcal{F} \underline{s'}^\mathcal{F} \) & \text{[by Def.~\ref{sec-underline}.]}\\
		& = & \underline{\pi}^\mathcal{G} \pcompo^\mathcal{G} \phi \( \underline{s'}^\mathcal{F} \) & \text{[because $\phi$ is isomorphism.]}\\
		& = & \underline{\pi}^\mathcal{G} \pcompo^\mathcal{G} \underline{s'}^\mathcal{G} & \text{[by inductive hypothesis.]}\\
		& = & \underline{\cons{\pi}{s'}}^\mathcal{G} & \text{[by Def.~\ref{sec-underline}.]}
\end{array}
\]
Thus, finishing the proof.\qed\end{proof}

\begin{lemma}
\label{Sfixpoint-s}
Let $U$ and $S \in U$ be a non-empty sets, $s \in \Secu$ and $\star_S: U^2 \to U$ be the injective function defined according to Prop.~\ref{existsEXPsconst}, then $u \in S$ implies $\langle u, u \rangle = \underline{s}$. 
\end{lemma}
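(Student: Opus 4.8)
We read the conclusion as $\langle u, u \rangle \in \underline{s}$, in line with the way the lemma is used inside the proof of Prop.~\ref{existsEXPsconst}. The plan is to follow the template of Lemma~\ref{Sfixpoint}: argue by induction on the structure of $s$ (equivalently, on $\mathit{long}(s)$), after factoring the bookkeeping about the auxiliary sets into a lemma that plays the role of Lemma~\ref{aux4Sfixpoint}. First I would unfold the claim. When $\pfork$ is induced by $\star$, the computation already carried out for $\underline{\pi} \pmeet \pid$ gives $\langle a, b \rangle \in \underline{\pi}$ iff $a = b \star c$ for some $c \in U$, and dually $\langle a, b \rangle \in \underline{\rho}$ iff $a = c \star b$ for some $c \in U$. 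Hence, by Def.~\ref{sec-underline} and associativity of $\pcompo$, $\langle u, u \rangle \in \underline{s}$ is equivalent to the existence of a chain $u = z_0, z_1, \dots, z_{\mathit{long}(s)} = u$ with, for every $k$, $z_{k-1} = z_k \star w_k$ for some $w_k$ when $s[k] = \pi$, and $z_{k-1} = w_k \star z_k$ for some $w_k$ when $s[k] = \rho$. So it suffices to build such a chain for $u = x \in S$.

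The auxiliary lemma I would prove is the following: with $\star_S$, the sets $S_1, \dots, S_{\mathit{long}(s)-1}$, the bijections $h^s_i : S \to S_i$ (with $h^s_0 = id_S$) and the filler copy $l : S \to P$ all fixed as in the proof of Prop.~\ref{existsEXPsconst}, for every $x \in S$ the element $h^s_{i+1}(x)$ is obtained from $h^s_i(x)$ by a single $\star_S$-step against $l(x)$, on the side dictated by the symbol of $s$ that the pertinent clause of Table~\ref{table3} branches on (and, at the top index, $\star_S$ against $l(x)$ returns $x$ itself). This is proved by induction on $i$ (equivalently, on the suffix $s|i$); the base case and the inductive step are each a case split on that symbol together with the selection of the matching clause — the interior clause whose output lies in $S_{i+1}$, or the boundary clause whose output lies in $S_0 = S$ via $({h^s_{\mathit{long}(s)-1}})^{-1}$ — of the $S_i$-row/$P$-column and $P$-row/$S_i$-column of Table~\ref{table3}. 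This pins down the whole chain, namely $z_0 = x = h^s_0(x)$ and $z_k = h^s_{\mathit{long}(s)-k}(x)$ for $1 \le k < \mathit{long}(s)$.

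For the statement itself, I would then verify, using the auxiliary lemma, that $\langle z_{k-1}, z_k \rangle \in \underline{s[k]}$ for every $1 \le k \le \mathit{long}(s) - 1$, and close the loop at $k = \mathit{long}(s)$ with the boundary clause, which yields $z_{\mathit{long}(s)} = x$; this gives $\langle x, x \rangle \in \underline{s}$. The base case $\mathit{long}(s) = 1$ (that is, $s = \elem{\pi}$ or $s = \elem{\rho}$) is handled by that same boundary clause alone: it gives $x = x \star_S l(x)$, respectively $x = l(x) \star_S x$, hence $\langle x, x \rangle \in \underline{\pi}$, respectively $\langle x, x \rangle \in \underline{\rho}$.

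The main obstacle is the bookkeeping inside Table~\ref{table3}: for each position $k$ one must align $s[k] \in \set{\pi, \rho}$, the running index $i$ of the intermediate set, the table's discriminator $s[\mathit{long}(s) - i]$, and the split between an interior clause and the boundary clause, and then check that the pairwise disjointness of the domains and codomains of the pieces of $\star_S$ (already used to see that $\star_S$ is injective) forbids any other, spurious chain. Once the correspondence between the chain positions $z_k$ and the elements $h^s_i(x)$ is nailed down the verification is as routine as in Lemma~\ref{Sfixpoint}; getting that correspondence exactly right against the nested case-splits of the table is where the care lies.
\qed
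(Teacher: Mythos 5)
Your proposal is correct and takes essentially the same route as the paper: the paper likewise factors the bookkeeping into an auxiliary lemma (Lemma~\ref{aux-Sfixpoint-s}), proved by induction on the index $i$ of the sets $S_i$, stating that $v \in S_i$ implies $\langle v, ({h^s_i})^{-1}(v)\rangle \in \underline{s|i}$ --- the accumulated-prefix form of your single-step lemma --- and then closes the loop by composing one more $\underline{\pi}$ (resp.\ $\underline{\rho}$) link. The one slip is in your alignment of Table~\ref{table3} with the chain: since $a \star_S c = b$ produces the link $\langle b, a\rangle \in \underline{\pi}$, the boundary clause (the one whose output lands back in $S$) justifies the \emph{first} link $\langle z_0, z_1\rangle \in \underline{s[1]}$, while the last link $\langle z_{\mathit{long}(s)-1}, x\rangle$ comes from the interior clause at $i=0$ --- precisely the bookkeeping hazard you yourself flagged, and harmless to the argument.
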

\begin{proof}
Let us assume that $s[1] = \pi$, the case in which $s[1] = \rho$ follows analogously.
\[
\begin{array}{cl}
 & u \in S \\
\text{iff}          & \text{there exists } v \in S_{\mathit{long} (s) - 1} \text{ such that } u = {h^s_{\mathit{long} (s) - 1}}^{-1} (v)\\
 & \qquad\qquad\qquad\qquad\qquad\qquad \text{[because $h^s_{\mathit{long} (s) - 1}$ is a bijection.]}\\ 
\text{iff}          & \text{there exists } v \in S_{\mathit{long} (s) - 1} \text{ such that } u = {h^s_{\mathit{long} (s) - 1}}^{-1} (v) \text{ and }\\
 & \qquad u = v \star_S l \( {h^s_{\mathit{long} (s) - 1}}^{-1} (v) \)\\
 & \qquad\qquad\qquad\qquad\qquad\qquad \text{[by Def.~$\star_S$.]}\\ 
 \text{iff}          & \text{there exists } v \in S_{\mathit{long} (s) - 1} \text{ such that } u = {h^s_{\mathit{long} (s) - 1}}^{-1} (v) \text{ and }\\
 & \qquad \langle u, v \rangle \in \underline{\pi} \\
 & \qquad\qquad\qquad\qquad\qquad\qquad \text{[by Def.~\ref{underlinepi}.]}\\ 
 \text{implies} & \text{there exists } v \in S_{\mathit{long} (s) - 1} \text{ such that } u = {h^s_{\mathit{long} (s) - 1}}^{-1} (v) \text{ and }\\
 & \qquad \langle u, v \rangle \in \underline{\pi} \text{ and } \< v, {h^s_ {\mathit{long} (s) - 1}}^{-1} (v)\> \in s | \mathit{long} (s) - 1\\
 & \qquad\qquad\qquad\qquad\qquad\qquad \text{[by Lemma~\ref{aux-Sfixpoint-s}.]}\\ 
 \text{iff}          & \text{there exists } v \in S_{\mathit{long} (s) - 1} \text{ such that } \langle u, v \rangle \in \underline{\pi} \text{ and } \\
 & \qquad  \langle v, u \rangle \in s | \mathit{long} (s) - 1 \text{ and } \< u, u\> \in \underline{\pi} \pcompo s | \mathit{long} (s) - 1\\
 & \qquad\qquad\qquad\qquad\qquad\qquad \text{[by Def.~\ref{def_proper-relation-algebra} - $\pcompo$.]}\\ 
 \text{iff}          & \< u, u\> \in \underline{\pi} \pcompo s | \mathit{long} (s)\\
 & \qquad\qquad\qquad\qquad\qquad\qquad \text{[because $s[1] = \pi$.]}\\ 
 \text{iff}          & \< u, u\> \in \underline{s}\\
 & \qquad\qquad\qquad\qquad\qquad\qquad \text{[by Def.~\ref{functions} - $s | \mathit{long} (s)$.]}           
\end{array}
\]
Thus, finishing the proof.\qed\end{proof}

\begin{lemma}
\label{aux-Sfixpoint-s}
Let $U$ and $S \in U$ be a non-empty sets, $s \in \Secu$ and $\star_S: U^2 \to U$ be the injective function defined according to Prop.~\ref{existsEXPsconst}, then for all $i \in \nat$ such that $1 \leq i < \mathit{long} (s)$, $v \in S_i$ implies $\<v, {h_i^s}^{-1} (v)\> \in s | i$. 
\end{lemma}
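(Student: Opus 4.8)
The plan is to argue by induction on $i$ in the range $1 \le i < \mathit{long}(s)$, reading the required chain of projections directly off the definition of $\star_S$ in Table~\ref{table3}. Write $n = \mathit{long}(s)$. For $v \in S_i$ put $w = {h^s_i}^{-1}(v) \in S$, so that $v = h^s_i(w)$, and recall that the conclusion is to be read as $\langle v, w \rangle \in \underline{s|i}$, the relation assigned by Def.~\ref{sec-underline} to the sequence $s|i$ of Def.~\ref{functions}. Note that $s|i$ is the length-$i$ suffix of $s$, i.e.\ $s[n-i+1]\,s[n-i+2]\cdots s[n]$, so that $\underline{s|i} = \underline{s[n-i+1]} \pcompo \underline{s|(i-1)}$ whenever $i \ge 2$, while $\underline{s|1} = \underline{s[n]}$.

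The crucial ingredient, which I would isolate as a one-step claim, is: for every $1 \le i < n$ and every $w \in S$, using the conventions $h^s_0 = id_S$ and $S_0 = S$,
\[
h^s_i(w) \;=\; \left\{\begin{array}{ll} h^s_{i-1}(w) \star_S l(w) & \mbox{if } s[n-i+1] = \pi,\\ l(w) \star_S h^s_{i-1}(w) & \mbox{if } s[n-i+1] = \rho. \end{array}\right.
\]
This is exactly what the first two rows of Table~\ref{table3} prescribe for the pairs lying in $S_{i-1}\times P$, respectively $P\times S_{i-1}$, that satisfy the guard forcing the companion argument to equal $l({h^s_{i-1}}^{-1}(u))$. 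Combining it with the characterisation of the projections --- $\langle a,b\rangle \in \underline{\pi}$ iff $a = b \star x$ for some $x \in U$, and $\langle a,b\rangle \in \underline{\rho}$ iff $a = x \star b$ for some $x \in U$, both obtained by unfolding Def.~\ref{underlinepi} together with the definitions of fork and converse --- yields in either case
\[
\langle\, h^s_i(w),\ h^s_{i-1}(w)\,\rangle \in \underline{s[n-i+1]}.
\]

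Granted the one-step claim, the induction is mechanical. In the base case $i=1$ we have $h^s_0(w) = w$, so the displayed membership reads $\langle h^s_1(w), w\rangle \in \underline{s[n]} = \underline{s|1}$. For the inductive step, take $1 < i < n$ and $v = h^s_i(w) \in S_i$; since $h^s_{i-1}(w) \in S_{i-1}$ with $1 \le i-1 < n$, the induction hypothesis gives $\langle h^s_{i-1}(w), w\rangle \in \underline{s|(i-1)}$, and composing this with $\langle h^s_i(w), h^s_{i-1}(w)\rangle \in \underline{s[n-i+1]}$ gives $\langle v, w\rangle \in \underline{s[n-i+1]} \pcompo \underline{s|(i-1)} = \underline{s|i}$, as required. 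The instance $i = \mathit{long}(s)-1$ is the one consumed in the proof of Lemma~\ref{Sfixpoint-s}.

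The only step that genuinely needs care is the verification of the one-step claim against Table~\ref{table3}: one must match its fairly intricate case split --- in particular what it prescribes on $S = S_0$, where $h^s_0$ is the identity, and the way $P$ and the bijection $l$ single out the ``active'' transition $S_{i-1}\to S_i$ --- to the clean equation above, and must check that no other entry of the table applies to the relevant pairs. Everything after that is the short composition argument together with the routine unfolding of $\bullet|\bullet$ and $\underline{\bullet}$ from Defs.~\ref{functions} and~\ref{sec-underline}.
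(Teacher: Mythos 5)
Your proof is correct and follows essentially the same route as the paper's: induction on $i$, with the key step being to read off from Table~\ref{table3} the identity $h^s_i(w) = h^s_{i-1}(w) \star_S l(w)$ (resp.\ its $\rho$-variant), hence $\langle h^s_i(w), h^s_{i-1}(w)\rangle \in \underline{\pi}$ (resp.\ $\underline{\rho}$), which is then composed with the inductive hypothesis. Your explicit isolation of the one-step claim, the conventions $h^s_0 = id_S$, $S_0 = S$, and the indexing of the ``active'' letter as $s[n-i+1]$ (so the base case uses $s[n]$, consistent with Def.~\ref{functions}) are in fact tidier than the paper's own write-up, whose base case invokes $s[\mathit{long}(s)-1]$.
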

\begin{proof}
The proof follows by induction on $i$. Let $i = 1$ such that $1 < \mathit{long} (s)$ and assuming that $s [\mathit{long} (s) - 1] = \pi$. The case where $s [\mathit{long} (s) - 1] = \rho$ is analogous.
\[
\begin{array}{rcl}
v \in S_1 & \text{implies} & \text{there exists } u \in S \text{ such that } v = h_1^s (u)\\
               &                      & \qquad\qquad\qquad\quad \text{[because $h^s_1$ is a bijection.]}\\ 
               & \text{iff}         & \text{there exists } u \in S \text{ such that } v = h_1^s (u) \text{ and } \\
               &                      & \qquad v = h_1^s \({h_0^s}^{-1} (u)\)\\
               &                      & \qquad\qquad\qquad\quad \text{[by Def.~ $h^s_0$.]}\\ 
               & \text{iff}         & \text{there exists } u \in S \text{ such that } v = h_1^s (u) \text{ and } \\
               &                      & \qquad v = u \star_S l \({h_0^s}^{-1} (u)\)\\
               &                      & \qquad\qquad\qquad\quad \text{[by Def.~ $\star_S$.]}\\ 
               & \text{iff}         & \text{there exists } u \in S \text{ such that } {h_1^s}^{-1} (v) = u \text{ and } \\
               &                      & \qquad v = u \star_S l \({h_0^s}^{-1} (u)\)\\
               &                      & \qquad\qquad\qquad\quad \text{[because $h^s_1$ is a bijection.]}\\ 
               & \text{implies} & \text{there exists } u \in S \text{ such that } {h_1^s}^{-1} (v) = u \text{ and } \\
               &                      & \qquad \<v, u\> \in \underline{\pi}\\
               &                      & \qquad\qquad\qquad\quad \text{[by Def.~\ref{underlinepi}.]}\\ 
               & \text{iff}         & \<v, {h_1^s}^{-1} (v)\> \in \underline{\pi}\\
               & \text{iff}         & \<v, {h_1^s}^{-1} (v)\> \in \underline{s |1}\\
               &                      & \qquad\qquad\qquad\quad \text{[by Def.~\ref{functions} - $s | 1$ and $s[\mathit{long} (s) - 1] = \pi$.]} 
\end{array}
\]
Let $i = n+1$ such that $1 \leq n+1 < \mathit{long} (s)$ and assuming that $s [\mathit{long} (s) - (n+1)] = \pi$. The case where $s [\mathit{long} (s) - (n+1)] = \rho$ is analogous.
\[
\begin{array}{rcl}
v \in S_{n + 1} & \text{implies} & \text{there exists } u \in S \text{ such that } v = h_{n + 1}^s (u)\\
                       &                      & \qquad\qquad\qquad\qquad\qquad \text{[because $h^s_{n + 1}$ is a bijection.]}\\ 
                       & \text{iff}         & \text{there exists } u \in S \text{ such that } v = h_{n + 1}^s (u) \text{ and } \\
                       &                      & \qquad \text{there exists } u' \in S \text{ such that } u = {h_{n}^s}^{-1} (u')\\
                       &                      & \qquad\qquad\qquad\qquad\qquad \text{[because $h_{n}^s$ is a bijection.]}\\ 
                       & \text{iff}         & \text{there exists } u' \in S \text{ such that } v = h_{n + 1}^s \({h_{n}^s}^{-1} (u')\)\\
                       &                      & \qquad\qquad\qquad\qquad\qquad \text{[because $h_{n}^s$ is a bijection.]}\\ 
                       & \text{iff}         & \text{there exists } u' \in S \text{ such that } v = h_{n + 1}^s \({h_{n}^s}^{-1} (u')\) \text{ and } \\
                       &                      & \qquad v = u'  \star_S l \({h_{n}^s}^{-1} (u')\)\\
                       &                      & \qquad\qquad\qquad\qquad\qquad \text{[by Def.~$\star_S$.]}\\ 
                       & \text{iff}         & \text{there exists } u' \in S \text{ such that } v = h_{n + 1}^s \({h_{n}^s}^{-1} (u')\) \text{ and } \\
                       &                      & \qquad v = u'  \star_S l \({h_{n}^s}^{-1} (u')\) \text{ and } \<u', {h_i^s}^{-1} (u')\> \in \underline{s | n}\\
                       &                      & \qquad\qquad\qquad\qquad\qquad \text{[by inductive hypothesis.]}\\ 
                       & \text{iff}         & \text{there exists } u' \in S \text{ such that } {h_{n + 1}^s}^{-1} (v) = {h_{n}^s}^{-1} (u') \text{ and } \\
                       &                      & \qquad v = u'  \star_S l \({h_{n}^s}^{-1} (u')\) \text{ and } \<u', {h_i^s}^{-1} (u')\> \in \underline{s | n}\\
                       &                      & \qquad\qquad\qquad\qquad\qquad \text{[because $h_{n + 1}^s$ is a bijection.]}\\ 
                       & \text{iff}         & \text{there exists } u' \in S \text{ such that } v = u'  \star_S l \({h_{n}^s}^{-1} (u')\) \text{ and } \\
                       &                      & \qquad \<u', {h_{n + 1}^s}^{-1} (v)\> \in \underline{s | n}\\
                       & \text{implies} & \text{there exists } u' \in S \text{ such that } \< v, u' \> \in \underline{\pi} \text{ and } \\
                       &                      & \qquad \<u', {h_{n + 1}^s}^{-1} (v)\> \in \underline{s | n}\\
                       &                      & \qquad\qquad\qquad\qquad\qquad \text{[by Def.~\ref{underlinepi}.]}\\ 
                       & \text{iff}         & \< v, {h_{n + 1}^s}^{-1} (v)\> \in \underline{\pi} \pcompo \underline{s | n}\\
                       &                      & \qquad\qquad\qquad\qquad\qquad \text{[by Def.~\ref{def_proper-relation-algebra} - $\pcompo$.]}\\ 
                       & \text{iff}         & \< v, {h_{n + 1}^s}^{-1} (v)\> \in \underline{s | (n + 1)}\\
                       &                      & \qquad\qquad\qquad\qquad\qquad \text{[by Def.~\ref{functions} - $s | (n + 1)$ and }\\
                       &                      & \qquad\qquad\qquad\qquad\qquad\qquad \text{$s[\mathit{long} (s) - (n + 1)] = \pi$.]} 
\end{array}
\]
Thus, finishing the proof.\qed\end{proof}

\begin{lemma}
\label{fix-inclusion}
Let $s, s' \in \Secu$, $\underline{s} \pcompo \underline{s'} = \underline{\concat{s}{s'}}$.
\end{lemma}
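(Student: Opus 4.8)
The plan is to prove this by structural induction on the first sequence $s$, unfolding the recursive definitions of $\underline{\cdot}$ (Def.~\ref{sec-underline}) and of $\concat{\bullet}{\bullet}$, and using nothing more than the associativity of relational composition $\pcompo$ (which holds in every proper fork algebra, since $\pcompo$ is set-theoretic composition of binary relations). It is essential to induct on $s$ rather than on $s'$, because the recursive clauses of $\concat{\bullet}{\bullet}$ recurse on their first argument, and this is exactly what makes the constructors line up with those of $\underline{\cdot}$.

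For the base case $s = \elem{*}$ with $* \in \set{\pi, \rho}$, one unfolds $\underline{\elem{*}} \pcompo \underline{s'} = \underline{*} \pcompo \underline{s'}$ on the left; on the right, $\concat{(\elem{*})}{s'} = \cons{*}{s'}$, so $\underline{\concat{(\elem{*})}{s'}} = \underline{\cons{*}{s'}} = \underline{*} \pcompo \underline{s'}$, and the two sides agree. For the inductive step $s = \cons{*}{s''}$, one computes
\[
\underline{\cons{*}{s''}} \pcompo \underline{s'} = \(\underline{*} \pcompo \underline{s''}\) \pcompo \underline{s'} = \underline{*} \pcompo \(\underline{s''} \pcompo \underline{s'}\),
\]
where the last equality is associativity of $\pcompo$; applying the inductive hypothesis to rewrite $\underline{s''} \pcompo \underline{s'}$ as $\underline{\concat{s''}{s'}}$ and then folding back gives $\underline{*} \pcompo \underline{\concat{s''}{s'}} = \underline{\cons{*}{(\concat{s''}{s'})}} = \underline{\concat{(\cons{*}{s''})}{s'}}$, as required.

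Since the entire argument consists of applying the defining equations and associativity of $\pcompo$, there is no real obstacle here; the only point that needs a moment's attention is the choice of the induction variable, as noted above.
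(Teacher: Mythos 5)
Your proof is correct and follows exactly the route the paper intends: structural induction on the first sequence $s$, unfolding the defining equations of $\underline{\;\cdot\;}$ and of concatenation, and using associativity of $\pcompo$ in the inductive step (the paper itself only states ``the proof follows easily by induction on the structure of $s$''). Your remark that one must induct on $s$ rather than $s'$, so that the recursion of $\concat{\bullet}{\bullet}$ on its first argument lines up with the constructors, is the one non-trivial observation and it is accurate.
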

\begin{proof}
The proof follows easily by induction on the structure of $s$.
\qed\end{proof}

\begin{lemma}
\label{fix-t-fix-s}
Let $U$ be an infinite set $|U| = \kappa$ with $\aleph_0 \leq \kappa$, $S \subseteq U$ such that $|S| < |U|$, $\star_S: U^2 \to U$ injective and $t \in \BT$ then, for all $s \in \Secu$ such that $s \ll t$, for all $u, v \in U$ if $u = \map{t}{\star_S}{v}$ then $\<u, v\> \in \underline{s}$.
\end{lemma}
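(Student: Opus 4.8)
The plan is to prove the statement by induction on the derivation of $s \ll t$ according to Def.~\ref{function-ll}, since such a derivation simultaneously pins down the shape of $s$ and of $t$. Before starting I would record the concrete meaning of the two basic relations involved: unfolding Defs.~\ref{underlinepi}, \ref{def_proper-relation-algebra} (for $\pconve$) and~\ref{def_proper-fork-algebras} (for $\pfork$), together with the analogous definition of $\underline{\rho}$, one gets $\<c, a\> \in \underline{\pi}$ iff $c = a \star_S y$ for some $y \in U$, and $\<c, b\> \in \underline{\rho}$ iff $c = x \star_S b$ for some $x \in U$ (the same unfolding already appears in the proof of the earlier proposition on $\underline{\pi} \pmeet \pid$); since $\star_S$ is injective the witnessing $y$, resp.\ $x$, is unique.

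For the base cases, suppose $s = \elem{\pi}$; then $t = \bin{\nil}{t'}$ for some $t' \in \BT$, so Def.~\ref{map} gives $\map{t}{\star_S}{v} = \(\map{\nil}{\star_S}{v}\) \star_S \(\map{t'}{\star_S}{v}\) = v \star_S \(\map{t'}{\star_S}{v}\)$. Hence if $u = \map{t}{\star_S}{v}$ then $u = v \star_S y$ with $y = \map{t'}{\star_S}{v}$, so $\<u, v\> \in \underline{\pi} = \underline{\elem{\pi}} = \underline{s}$ by the reading above and Def.~\ref{sec-underline}. The case $s = \elem{\rho}$, $t = \bin{t'}{\nil}$, is symmetric, via the second coordinate of $\star_S$.

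For the inductive step, suppose $s = \cons{\pi}{s'}$, so that $t = \bin{t'}{t''}$ with $s' \ll t'$, and assume $u = \map{t}{\star_S}{v}$. Put $u' = \map{t'}{\star_S}{v}$ and $w = \map{t''}{\star_S}{v}$; then Def.~\ref{map} gives $u = u' \star_S w$, hence $\<u, u'\> \in \underline{\pi}$. The induction hypothesis, applied to $s' \ll t'$ and to the equality $u' = \map{t'}{\star_S}{v}$, yields $\<u', v\> \in \underline{s'}$; composing, $\<u, v\> \in \underline{\pi} \pcompo \underline{s'} = \underline{\cons{\pi}{s'}} = \underline{s}$ by Def.~\ref{def_proper-relation-algebra} (for $\pcompo$) and Def.~\ref{sec-underline}. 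The case $s = \cons{\rho}{s'}$, with $s' \ll t''$, is symmetric.

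I do not expect a genuine obstacle: the recursion in Def.~\ref{function-ll} is arranged precisely so that a leading $\pi$ (resp.\ $\rho$) of $s$ matches a $\bin{\cdot}{\cdot}$ node of $t$ and steers the induction into its left (resp.\ right) subtree, and the $\nil$ leaf reached at the end of the path is exactly where $\map{t}{\star_S}{v}$ stores $v$. The one point deserving care is that the concrete reading of $\underline{\pi}$ and $\underline{\rho}$ — in particular discharging the $\punit{E}$ side-condition inherited from $\pid \pfork \punit{E}$ — relies on $\mathcal{F}$ being full (equivalently $\punit{E} = U^2$), which is the setting of the constructions in Sec.~\ref{sec:fixpoint-star}; this hypothesis should be made explicit here, exactly as it is implicit in the earlier propositions on $\underline{\pi}$. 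With the lemma in hand the companion theorem follows by instantiating $v = u$: if $u \in \mathit{fix}_t (\star_S)$ then $u = \map{t}{\star_S}{u}$ (Def.~\ref{fixpoints-t}), so $\<u, u\> \in \underline{s}$, i.e.\ $u \in \mathit{fix}_s (\star_S)$ (Def.~\ref{fixpoints-s}).
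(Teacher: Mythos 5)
Your proof is correct and follows essentially the same route as the paper's: structural induction on $s$ (equivalently, on the derivation of $s \ll t$), using the concrete reading of $\underline{\pi}$ and $\underline{\rho}$ in the base cases and composing $\underline{\pi} \pcompo \underline{s'}$ with the inductive hypothesis in the step cases. Your side remark that the unfolding of $\underline{\pi}$ tacitly uses $\punit{E} = U^2$ is a fair observation about a hypothesis the paper leaves implicit, but it does not change the argument.
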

\begin{proof}
The proof follows by induction on the structure of $s$. Let us consider $s = \elem{\pi}$ as the base case. The case for $s = \elem{\rho}$ is analogous. Assume that $s \ll t$ and, consequently, $t = \bin{\nil}{t'}$ with $t' \in \BT$.
\[
\begin{array}{rcl}
u = \map{\(\bin{\nil}{t'}\)}{\star_S}{v} & \text{iff}          & u = v \star_S \(\map{t'}{\star_S}{v}\) \\
                                                         &                      & \qquad\qquad \text{[by Def.~\ref{map}.]}\\
                                                         & \text{implies} & \text{there exists } w \in U \text{ such that } \\
                                                         &                      & u = v \star_S w \text{ and } \< u, v \> \underline{\pi}\\
                                                         &                      & \qquad\qquad \text{[by Def.~\ref{underlinepi}.]}
\end{array}
\]
Let us now consider the case $s = \cons{\pi}{s'}$ and assuming $s \ll t$, $t = \bin{t'}{t''}$ such that $t' \neq \nil$. As in the previous case, the case for $s = \cons{\rho}{s'}$ is analogous.
\[
\begin{array}{rcl}
u = \map{\(\bin{t'}{t''}\)}{\star_S}{v}   & \text{iff}          & u = \(\map{t'}{\star_S}{v}\) \star_S \(\map{t''}{\star_S}{v}\) \\
                                                         &                      & \qquad\qquad \text{[by Def.~\ref{map}.]}\\
                                                         & \text{implies} & \text{there exists } w, w' \in U \text{ such that } \\
                                                         &                       & \qquad u = w' \star_S w \text{ and } w' = \map{t'}{\star_S}{v}\\
                                                         & \text{iff}          & \text{there exists } w, w' \in U \text{ such that } \\
                                                         &                       & \qquad u = w' \star_S w \text{ and } \< w', v \> \in \underline{s'}\\
                                                         &                       & \qquad\qquad \text{[by inductive hypothesis on}\\
                                                         &                       & \qquad\qquad\qquad \text{ $s'$ and $s \ll t$.]}\\
                                                         & \text{iff}          & \text{there exists } w' \in U \text{ such that } \\
                                                         &                       & \qquad \< u, w' \> \in \underline{\pi} \text{ and } \< w', v \> \in \underline{s'}\\
                                                         &                       & \qquad\qquad \text{[by Def.~\ref{underlinepi}.]}\\
                                                         & \text{iff}          & \< u, v \> \in \underline{\pi} \pcompo \underline{s'}\\
                                                         &                       & \qquad\qquad \text{[by Def.~\ref{def_proper-relation-algebra} - $\pcompo$.]}\\
                                                         & \text{iff}          & \< u, v \> \in \underline{\cons{\pi}{s'}}\\
                                                         &                       & \qquad\qquad \text{[by Def.~\ref{sec-underline}.]}\\
                                                         & \text{iff}          & \< u, v \> \in \underline{s}
\end{array}
\]
Thus, finishing the proof.\qed\end{proof}

\end{document}